\documentclass{article}

\usepackage{comment}
\usepackage{amsmath,amssymb,xcolor}
\usepackage{enumitem}
\usepackage{graphicx}
\usepackage{lpic}
\usepackage{float} 

\usepackage[backend=biber,style=alphabetic,maxnames=10]{biblatex}
\usepackage{amsthm,thmtools}
\declaretheoremstyle[qed=$\lrcorner$,bodyfont=\it]{it}
\declaretheoremstyle[qed=$\lrcorner$,bodyfont=\rm]{rm}

\declaretheorem[style=it,numberwithin=subsection]{lemma}

\makeatletter
\let\c@equation=\c@lemma
\let\theequation=\thelemma
\makeatother

\declaretheorem[style=it,numberlike=lemma]{corollary}
\declaretheorem[style=it,numberlike=lemma]{theorem}
\declaretheorem[style=it,numberlike=lemma]{proposition}

\declaretheorem[style=it,numbered=no]{remark}

\declaretheorem[style=it,name=Theorem]{theoremA}

\usepackage[hyperfootnotes=false,
    colorlinks,
    linkcolor={blue},
    citecolor={blue},
    urlcolor={blue}
    ]{hyperref}

\newcommand{\XYAB}{XY\mkern -4.5mu AB}

\makeatletter

\usepackage{xcolor}
\def\exclam#1{\ifmmode\relax\else\marginpar{\textcolor{#1}{\fbox{!!!}}}\fi}

\let\tmp=\phi \let\phi=\varphi \let\varphi=\tmp
\let\tmp=\epsilon \let\epsilon=\varepsilon \let\varepsilon=\tmp

\usepackage{bm}

\usepackage{accents}

\def\mkmathletter#1#2{%
    \expandafter\gdef\csname#1#2\endcsname%
           {\ensuremath{\csname math#2\endcsname{#1}}}%
}

\edef\letters{a,b,c,d,e,f,g,h,i,j,k,l,m,n,o,p,q,r,s,t,u,v,w,x,y,z}
\edef\Letters{A,B,C,D,E,F,G,H,I,J,K,L,M,N,O,P,Q,R,S,T,U,V,W,X,Y,Z}
\edef\Lletters{\letters,\Letters}

\def\mkmathletters#1#2{%
        \@for\ltr:={#2}\do{\expandafter\mkmathletter\ltr{#1}}%
}

\mkmathletters{bb}{\Letters}    
\mkmathletters{cal}{\Lletters}   
\mkmathletters{bf}{\Lletters}   
\mkmathletters{sf}{\Lletters}   

\renewcommand{\to}[1][]{\stackrel{#1}{\rightarrow}}

\def\indep{\,\makebox[0cm][l]{$\bot$}\,\bot\,}

\def\<{\left\langle}
\def\>{\right\rangle}

\def\vect#1{\left(\!\!\!\begin{array}{c}#1\end{array}\!\!\!\right)}

\def\st{\;{\bm :}\;}
\def\set#1{\left\{#1\right\}}

\let\dmo=\DeclareMathOperator

\dmo\argmin{\arg\min}
\dmo\argmax{\arg\max}
\dmo\aut{\mathrm{Aut}}
\dmo\dist{\mathrm{dist}}
\dmo\emp{\qbf}
\dmo\Hom{\mathrm{Hom}}
\dmo\ing{\mathsf{ing}}
\dmo\ks{\mathsf{ks}}
\let\L=\underfined
\dmo\L{\mathrm{L}}
\dmo\mmrv{\mathcal{E}}
\dmo\prob{\mathrm{Prob}}
\dmo\supp{\mathrm{supp}}
\dmo\tr{\mathrm{tr}}
\dmo\ext{\mathrm{Ext}}
\dmo\extgr{\mathrm{Ext}_{\mathrm{gr}}}
\dmo\extgru{\mathrm{Ext}_{\mathrm{gr}}^{\wedge}}
\dmo\extgrl{\mathrm{Ext}_{\mathrm{gr}}^{\vee}}
\dmo\closure{\mathrm{Closure}}
\dmo\convexhull{\mathrm{ConvexHull}}
\dmo\nbd{\mathrm{Nbd}}
\dmo\gr{\mathrm{Gr}}
\def\1{\bm{1}}
\makeatother

\let\Lcal=\Scal

\AtBeginBibliography{\emergencystretch=2em}

\title{Beyond Mutual Information: Extension Profiles and Shape
  Functions of Random Variable Pairs}

\author{Rostislav Matveev and
  Andrei Romashchenko}

\begin{document}
\maketitle

\begin{abstract}
We study the extension profile of a pair of jointly distributed
finite-valued random variables $(X,Y)$, defined as the set of all
triples of numbers 
\[
\bigl(H(X|W), H(Y|W), I(X:Y|W)\bigr)
\]
obtained by extending the pair with an auxiliary random variable $W$.
This object captures structural properties of joint distributions that
are not determined solely by the entropies of $X$ and $Y$ and their
mutual information.

To describe the boundary of the extension profile, we introduce the
associated shape function, defined as the Legendre--Fenchel transform
of the nontrivial part of the profile boundary.  We establish general
upper and lower bounds on the shape function in terms of classical
information-theoretic quantities.

For pairs that are uniform on their support, we interpret the support
as a biregular bipartite graph and relate the extension profile to
combinatorial and spectral properties of this graph.  In this setting,
we derive bounds on the shape function in terms of the second-largest
eigenvalue of the graph. Thus, pairs whose support graphs have a small
second eigenvalue admit only a restricted class of extensions. 

Our
results provide a new perspective on the information-theoretic
structure of joint distributions and highlight connections among
non-Shannon-type information inequalities, the G\'acs--Körner common
information, and spectral graph theory.

We discuss several applications of the developed framework to problems
concerning the structure and representation of mutual information.
\end{abstract}

\section{Introduction}\label{s:intro}

Shannon mutual information is a classical and well-studied measure of
correlation between two random variables.  In many applications,
however, we may need a deeper understanding of the structure of
dependence and, in particular, of the extent to which mutual
information can be ``materialized'' by an explicit shared component.
The very notion of materializing mutual information is subtle and
admits several inequivalent formalizations.  The first systematic
investigations of this question date back to the 1970s and led to the
concepts of common information, as introduced by G\'acs and Körner
\cite{gacs1973common} and by Wyner \cite{wyner1975common}.  More
refined measures of the materializability of mutual information have
since been proposed; see, for example, Zhang \cite{zhang3new} and the
subsequent development of the notion of the tension region in
\cite{prabhakaran2014assisted,li2017extended,csirmaz2023short}.
These questions are closely connected to the structure of the entropy
region and to non-Shannon-type information inequalities.
In this work, we propose a new geometric perspective on common
information and its generalizations.  Using methods from spectral
graph theory, we construct a family of examples exhibiting extremal,
nearly minimal, materializability of mutual information.

\subsection{Extension profile and shape function of a pair of random
  variables}\label{s:ext-shape}
For a pair of jointly distributed random variables $(X,Y)$ with finite
alphabets one can evaluate the entropies $H(X)$, $H(Y)$ and the mutual
information of the pair $I(X:Y)$.  These three numbers describe the
information-theoretic behavior of the pair $(X,Y)$.  However, there
exist finer invariants of the pair that cannot be expressed solely as
functions of these three quantities.  For example, one may ask what
joint distributions of the triple $(X,Y,W)$ are possible when an
additional random variable $W$ is introduced.  What joint entropy
profiles can arise for such triples?  The auxiliary variable $W$ may
be viewed as a probe that tests the internal structure of the
distribution of $(X,Y)$.

The \emph{extension profile} of the pair, defined as the set
\[
  \ext(X,Y)=\set{\vect{H(X|W)\\H(Y|W)\\I(X:Y|W)}}\subset\Rbb^{3}
\]
for all $W$ jointly distributed with $(X,Y)$, formalizes this idea.
This set is closely related to the tension region defined and studied
in~\cite{prabhakaran2014assisted,li2017extended,csirmaz2023short}.  In
the setting of Kolmogorov complexity, a similar notion of extension
profile was studied in \cite{chernov2002upper}; the corresponding
object was referred to there as the set of \emph{admissible triples}.

It can be shown, see~\cite{csiszar2011information} and
Proposition~\ref{p:ext-compact-convex} below, that for every pair
$(X,Y)$ the extension profile is a compact and convex subset of
$\Rbb^{3}$.  One can establish a lower and an upper bounds for the
extension profile
\[
  \ext_{\min}(X,Y)\subset\ext(X,Y)\subset\ext_{\max}(X,Y)
\]
where sets $\ext_{\min}(X,Y),\ext_{\max}(X,Y)\subset\Rbb^{3}$ 
defined in terms of the entropies of $(X,Y)$. 
The upper bound for $\ext(X,Y)$ follows from the standard Shannon
inequalities (since entropies of every triple $(X,Y,W)$ must respect
these inequalities).  The mentioned above lower bound for $\ext(X,Y)$
is obtained via direct constructions that allow for any pair $(X,Y)$
to build a $W$ with a certain entropy profile $e(X,Y|W)$ (e.g., one
can let $W=X$, or $W=XY$, and so on).

The correspondence $(X,Y)\mapsto\ext(X,Y)$ has the property of
subadditivity in the sense that
\[
  \ext(X'',Y'')\subset\ext(X,Y)\oplus\ext(X',Y')
\]
where $X''$ and $Y''$ are joints of independent copies of $X$, $X'$
and $Y$, $Y'$, respectively, and $A\oplus B$ stands for the Minkowski
addition of subsets in $\Rbb^{3}$.  

In general, the inclusion above is proper. However, $\ext$ is additive ``in
most important directions''. To make the later statement precise,
we introduce the notion of the \emph{shape function}
or simply \emph{shape} of the pair of random variables $(X,Y)$.  It
associates with every pair $(X,Y)$ a function $\Lcal(X,Y)$ defined on
$\Rbb^{2}$ that captures the shape of the lower part of the body $\ext(X,Y)$.
Technically,
\[
  \Lcal(X,Y)(\alpha,\beta)
  =
  \sup\set{\alpha\cdot H(X|W)+\beta\cdot H(Y|W)-I(X:Y|W)}
\]
where supremum is over all $W$'s jointly distributed with $(X,Y)$.
For the shape function we prove additivity
\[
  \Lcal(X'',Y'')=\Lcal(X,Y)+\Lcal(X',Y')
\]
for independent joints, and the bounds
\[
  \Lcal_{\min}(X,Y)\leq\Lcal(X,Y)\leq\Lcal_{\max}(X,Y)
\]
where the ``extremal'' shape functions $\Lcal_{\min}(X,Y)$ and
$\Lcal_{\max}(X,Y)$ depend only on $H(X)$, $H(Y)$, and $I(X:Y)$.  We
also show that the shape function satisfies a version of the chain
rule.

Thus, the introduced shape function is an entropy-like invariant of
the pair. Importantly, it captures structural properties of $(X,Y)$
beyond that provided by entropy.  We believe that the shape function
may prove useful in the study of information inequalities, entropy
regions, and the structural analysis of joint distributions.

\subsection{Shape function for uniform pairs of random variables
  supported on graphs.}\label{s:shape-unif}
We say that a pair $(X,Y)$ of random variables is \emph{uniform on its support} if
$X$, $Y$ and their joint $XY$ are each uniform on their respective
supports. In that case, the supports
\begin{equation}
  \Xsf:=\supp X,\qquad\Ysf:=\supp Y,\qquad \Esf:=\supp XY
\end{equation}
form a biregular bipartite graph $\Gsf=(\Xsf\sqcup\Ysf,\Esf)$.  Let
$\set{\lambda_{i}\st i=1,2,\dots}$ be the spectrum of $\Gsf$, with
eigenvalues indexed in the non-increasing order and counted with
multiplicity.

For a pair uniform on its support, the first eigenvalue $\lambda_1$
can be found as
\[
\log\lambda_{1}=\frac12\bigl(H(X|Y)+H(Y|X)\bigr).
\] 
On the other hand, the value of $\lambda_{2}$ can not be expressed as
a function of the entropies of $(X,Y)$ and captures more subtle
properties of the pair.

We investigate the relation between the spectrum of $\Gsf$ and the shape
function of the pair supported on $\Gsf$. We show that smaller values
of $\lambda_{2}$ force the shape function to be smaller.

\subsection{Maximality and minimality of the extension profile and the
  shape function}
We use this new tool to investigate for what pairs the extension profile is
minimal or maximal possible.  To this end we consider pairs $(X,Y)$ of
random variables uniform on their support and study the effect
the spectrum of $\Gsf$ has on the size of the extension profile and the shape
function. Smaller values of the second largest eigenvalue of $\Gsf$
force the extension profile and the shape function to be smaller. This
dependence is especially demonstrative in the two extreme cases.

On the one hand we consider pairs supported on balanced expanders,
that is families of graphs such that
\[
  \lambda_{2}\leq C\cdot\sqrt{\lambda_1}
\]
where $\lambda_1$ and $\lambda_{2}$ are the largest and the second
largest eigenvalues of $\Gsf$. In that case we show, that the
extension profile and the shape function
attain their lower bound up to an error logarithmic in $H(XY)$.

On the other hand, we show that for any, not necessarily uniform, pair
$(X,Y)$ both the extension profile $\ext(X,Y)$ and the shape function
$\Lcal(X,Y)$ attain their upper bounds if and only the pair $(X,Y)$
satisfies the G\'acs--Körner condition on extractability of the mutual
information\footnote{%
In terms of \cite{gacs1973common}, this property means that that the mutual information of $(X,Y)$ is equal to 
the \emph{common information} of this pair.}, i.e., there exists a $W$ such that 
\[
H(W|X) = 0,\ H(W|Y) = 0,\ \text{and}\ H(W)=I(X:Y).
\]
Thus we prove the following theorem, see
  Theorem~\ref{p:maxprofile} and
  Corollary~\ref{cor:L<Lmin}.
\begin{theoremA}\label{th:shape-minmax}
  \begin{enumerate}[label=(\roman*)]
  \item A pair of random variables $(X,Y)$ the mutual information is
    equal to the G\'acs--Körner
    information 
    if and only if
    \[
      \Lcal(X,Y)=\Lcal_{\max}(X,Y)
      \quad\text{or, equivalently,}\quad
      \ext(X,Y)=\ext_{\max}(X,Y)
    \]
    In that case, if $(X,Y)$ is uniformly supported on an admissible
    graph $\Gsf$, then
    \[
      \lambda_{1}(\Gsf)=\lambda_{2}(\Gsf)
    \]
  \item If $(X,Y)$ is supported on a balanced expander, then
    \[
      \Lcal(X,Y)\leq\Lcal_{\min}(X,Y)+O\bigl(\log H(XY)\bigr)
    \]
    or, equivalently
    \[
      \ext(X,Y)\subset\nbd_{\delta}\ext_{\min}(X,Y)^{\uparrow}
    \]
    where $\delta=O\bigl(\log H(XY)\bigr)$.
  \end{enumerate}
  
\end{theoremA}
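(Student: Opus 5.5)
The plan is to prove the two parts separately. Part (i) rests on the Shannon description of $\ext_{\max}$ and on the double Markov lemma. Part (ii) rests on the expander mixing lemma, applied fibrewise over $W$. In both parts the equivalence between the statement about $\Lcal$ and the statement about $\ext$ is Legendre--Fenchel duality. By Proposition~\ref{p:ext-compact-convex}, $\ext(X,Y)$ is compact and convex. Its upward closure in the third coordinate is therefore determined by its support function in directions $(\alpha,\beta,-1)$, which is exactly $\Lcal(X,Y)$. So $\Lcal=\Lcal_{\max}$ if and only if $\ext=\ext_{\max}$, and a bound $\Lcal\le\Lcal_{\min}+\delta$ is equivalent to $\ext\subset\nbd_\delta\ext_{\min}^{\uparrow}$, up to rescaling the constant in $\delta$.

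\textbf{Part (i).} First I write $\ext_{\max}$ explicitly from the Shannon inequalities. Its points $(a,b,c)$ satisfy
\[
a\le H(X),\quad b\le H(Y),\quad c\ge\max\bigl(0,\;a-H(X|Y),\;b-H(Y|X),\;a+b-H(XY)\bigr).
\]
The bounds on $c$ come from $H(X|YW)\le H(X|Y)$, $H(Y|XW)\le H(Y|X)$ and $H(XY|W)\le H(XY)$. This region is a polytope, and the key vertex is $v_0=(H(X|Y),H(Y|X),0)$.

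For the direction ``GK $\Rightarrow$ maximal'', let $W_0$ be the common part, so $H(W_0|X)=H(W_0|Y)=0$ and $H(W_0)=I(X:Y)$. Then $I(X:Y|W_0)=I(X:Y)-H(W_0)=0$, so $W_0$ realizes $v_0$. The remaining vertices are realized by the choices $W=\text{const}$, $W=X$, $W=Y$, $W=XY$, $W=(X,W_0)$ and $W=(Y,W_0)$. Since $\ext(X,Y)$ is convex and all vertices of $\ext_{\max}$ lie in it, we get $\ext=\ext_{\max}$.

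For the converse, suppose $\Lcal=\Lcal_{\max}$. I take a direction $(\alpha,\beta)$ whose maximizer on $\ext_{\max}$ is exactly $v_0$. By compactness, some $W$ attains $v_0$, so $I(X:Y|W)=0$, $H(X|W)=H(X|Y)$ and $H(Y|W)=H(Y|X)$. From these:
\begin{itemize}
\item $H(X|YW)=H(X|W)=H(X|Y)$, hence $I(X:W|Y)=0$;
\item symmetrically, $I(Y:W|X)=0$;
\item $I(XY:W)=H(XY)-H(X|Y)-H(Y|X)=I(X:Y)$.
\end{itemize}
The double Markov lemma then provides a common function $Z$ of $X$ and of $Y$ such that $W\to Z\to XY$ is a Markov chain. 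Hence $I(X:Y)=I(XY:W)\le H(Z)\le\text{GK}(X,Y)\le I(X:Y)$, so all these quantities are equal.

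For the spectral claim, assume the uniform pair lives on $\Gsf$. A nontrivial common part (this needs $I(X:Y)>0$, which I expect is part of ``admissible'') splits $\Gsf$ into at least two unions of connected components. By uniformity, each such union is biregular with the same degrees as $\Gsf$. Each component therefore contributes the eigenvalue $\lambda_1$, so $\lambda_1=\lambda_2$.

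\textbf{Part (ii).} Here I need a lower bound on $I(X:Y|W)$, uniformly over all $W$, by the function whose Legendre transform is $\Lcal_{\min}$. I fix $W$ and slice each conditional distribution $P_{XY|W=w}$ dyadically by the probability levels of $x$, of $y$ and of $(x,y)$. This leaves $O(\log H(XY))$-sized overhead in entropy, as in the standard ``typization'' argument. It reduces the problem to the case where, given $W=w$, the pair is nearly uniform on a set of edges $E(A,B)$ with $A\subset\Xsf$ and $B\subset\Ysf$.

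For such $A,B$ the expander mixing lemma gives
\[
|E(A,B)|\le\frac{|\Esf|\,|A|\,|B|}{|\Xsf|\,|\Ysf|}+\lambda_2\sqrt{|A||B|}.
\]
Taking logarithms,
\[
H(XY|W=w)\lesssim\max\Bigl(a+b-I(X:Y),\;\tfrac12(a+b)+\log\lambda_2\Bigr)+O(1).
\]
With $\lambda_2\le C\sqrt{\lambda_1}$ and $\log\lambda_1=\frac12(H(X|Y)+H(Y|X))$, I then bound $I(X:Y|W=w)=a+b-H(XY|W=w)$ from below by an expression in $a$, $b$ and the entropies of $(X,Y)$ alone. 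I would average over $w$ and use convexity to pass back to the unconditional quantities. Then I check that this lower bound coincides with the lower boundary of $\ext_{\min}$, whose vertices come from the explicit constructions $W=X$, $W=Y$, and so on.

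The main obstacle is this step. The slicing has to produce nearly uniform pieces whose edge counts are controlled by the mixing lemma, with only $O(\log H(XY))$ loss. The averaged bound must then match $\Lcal_{\min}$ exactly rather than only up to a multiplicative factor, and this is where balancedness and the precise form of $\Lcal_{\min}$ matter.
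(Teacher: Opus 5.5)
Part (i) is essentially the paper's argument. At an interior point such as $(\tfrac13,\tfrac13)$, the unique maximizer of $l_{\alpha\beta}$ on $\ext_{\max}$ is $v_0$. Attaining $\Lcal_{\max}$ therefore forces $I(X:Y|W)=I(X:W|Y)=I(Y:W|X)=0$, and double Markovity then yields the G\'acs--K\"orner part. Conversely, the common part realizes $v_0$. Your connected-components argument for $\lambda_1=\lambda_2$ supplies a step the paper leaves implicit; there $[\Xsf:\Ysf]>\log2$ guarantees a nontrivial common part.

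Drop the vertex-by-vertex proof of $\ext=\ext_{\max}$, though:
\begin{enumerate}
\item[(a)] Your description of $\ext_{\max}$ omits the constraint $z\le\min\{x,y\}$.
\item[(b)] The variable $(X,W_0)$ is just $X$.
\item[(c)] Some vertices of $\ext_{\max}$ are not realizable at all. Take independent $X\sim\mathrm{Bern}(0.1)$ and a uniform bit $Y$. The vertex $(H(X),H(Y),H(X))$ would need $W$ independent of $X$ and of $Y$, with $X$ a function of $(Y,W)$. That forces $\Pbb[X=1\mid W=w]\in\{0,\tfrac12,1\}$, a contradiction.
\end{enumerate}
Since $\Lcal$ only determines $\ext^{\uparrow}$ (Lemma~\ref{l:legendre-injective}), the correct equivalent form is $\ext^{\uparrow}=\ext_{\max}^{\uparrow}$. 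That is also all the paper's argument through $\Lcal$ actually yields.

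Part (ii) has a genuine gap, at exactly the places you flag.

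\textbf{Slicing.} Slicing $P_{XY|W=w}$ by dyadic levels of $p(x)$, $p(y)$, $p(x,y)$ makes $XY$ nearly uniform on a slice, but not $X$ and $Y$. The slice keeps a different fraction of the mass of each $x$, so $H(X\mid\text{slice})$ can be far below $\log|A|$. A mixing-lemma bound stated in terms of $\log|A|$ and $\log|B|$ then says nothing about the conditional mutual information. The paper instead uses the Almost Uniform Decomposition Lemma (Theorem~\ref{th:aud}) followed by Lemma~\ref{l:unif-subsupported}. That lemma gives pieces that are simultaneously $\delta_2$-uniform in $X$, $Y$ and $XY$, at cost $H(V)=O(\log H(XY))$ plus a light exceptional atom.

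\textbf{Averaging.} This is the more fundamental problem. The mixing lemma gives each piece only an alternative (Theorem~\ref{th:mixing-log}): either $z\ge I(X:Y)-\log2$ or $\tfrac12x+\tfrac12y-z\le\log\lambda_2+\log2$. The resulting lower bound on $z$ contains $\min\{I(X:Y),\tfrac12(x+y)-\log\lambda_2\}$ and is not convex in $(x,y)$. Averaging over pieces by Jensen therefore goes the wrong way, and gives no bound on $I(X:Y|W)$ in terms of $H(X|W)$ and $H(Y|W)$.

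The missing idea is to pass to the convex hull of the union of the two alternative regions. This is most easily done dually, one direction $(\alpha,\beta)$ at a time. Since $l_{\alpha\beta}$ is linear, it suffices to bound it on each piece, and the support function of the hull is the maximum over the two parts (Proposition~\ref{p:legendre}\ref{i:union}).
\begin{enumerate}
\item[(a)] On the dense part, write $\alpha x+\beta y-z=\alpha(x-z)+\beta(y-z)-(1-\alpha-\beta)z$.
\item[(b)] On the spectral part, for $\alpha\le\beta$, write $\alpha x+\beta y-z=(\beta-\alpha)(y-z)+2\alpha(\tfrac12x+\tfrac12y-z)-(1-\alpha-\beta)z$.
\end{enumerate}
Inserting $x-z\le H(X|Y)$, $y-z\le H(Y|X)$, $z\ge0$ and the relevant alternative gives Propositions~\ref{p:graph-support-upper} and~\ref{p:graph-support-lower}, hence Theorem~\ref{th:shape-spectral}.

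Balancedness enters as $2\log\lambda_2\le\min\{\log d_1,\log d_2\}+O(1)$. This follows from $\lambda_2\le C\sqrt{\lambda_1}$ only when combined with the lower bound of Proposition~\ref{p:lower_bound_lambda_2}. It turns $(\beta-\alpha)H(Y|X)+2\alpha\log\lambda_2$ into $\beta H(Y|X)+O(1)$, and the symmetric term into $\alpha H(X|Y)+O(1)$. This gives exactly $\Lcal_{\min}+O(\log H(XY))$.
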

\subsection{Ingleton inequality and friends}
Since the
discovery of the first non-Shannon-type inequality by Yeung and
Zhang,~\cite{zhang1998characterization}, many other non-Shannon-type
inequalities were discovered, including both sporadic inequalities as
well as infinite families (see cite
\cite{makarychev2002new,matus2007infinitely,dougherty2011non} for
infinite families of non-Shannon-type inequalities).  Every
non-Shannon-type inequality for four random variables, up to
permutation of the variables, can be interpreted as a perturbation of
the Ingleton inequality
\begin{equation*}
  I(X:Y|A)+I(X:Y|B)+I(A:B)-I(X:Y)\geq0
\end{equation*}
Such a perturbation has the form
\begin{equation}
  I(X:Y|A)+I(X:Y|B)+I(A:B)-I(X:Y)\geq\Psi(X,Y,A,B)
\end{equation}
where $\Psi$ is some quantity dependent on the quadruple $(X,Y,A,B)$.

\bigskip

\noindent
We consider two particular perturbations of the Ingleton inequality.
\paragraph{\bf The MMRV inequality:}
In~\cite{makarychev2002new} the authors prove the following inequality
valid for any five jointly distributed random variables:
\begin{equation*}
  \begin{aligned}
    I(X:Y&|A)+I(X:Y|B)+I(A:B)-I(X:Y)\geq\\
         &\geq
           -(I(X:Y|W)+I(X:W|Y)+I(W:Y|X))
  \end{aligned}
\end{equation*}
The random variable $W$ appears only on the right-hand-side. Therefore
we can define
\begin{equation}
\label{eq:epsilon(x,y)}
  \mmrv(X,Y)=\inf_{W}\set{I(X:Y|W)+I(X:W|Y)+I(W:Y|X)}
\end{equation}
where infimum is taken over all $W$'s jointly distributed with
$(X,Y)$, and rewrite the MMRV inequality as
\begin{equation}
  \label{eq:mmrv}
  I(X:Y|A)+I(X:Y|B)+I(A:B)-I(X:Y)\geq - \mmrv(X,Y)
\end{equation}
We refer to the quantity $\mmrv(X,Y)$ as \emph{entanglement}.%
\footnote{We thank Sasha Shen for this very fitting term.}  %
Note that the entanglement $\mmrv(X,Y)$ can not be expressed as a
function of entropies of $X$, $Y$ and their joint, while it represents
more intricate properties of the pair $(X,Y)$.  It can be interpreted
as a measure of the extend to which the mutual information between $X$
and $Y$ can be extracted --- 
the smaller $\mmrv(X,Y)$ is, the more extractable mutual information there is. 
The value $0$ corresponds to the case
where mutual information is completely extractable. By Double
Markovity Lemma,~\cite[Exercise 16.25,
pp.~392--393]{csiszar2011information}, see
also~\cite{makarychev2002new,kaced2013conditional}, if $\mmrv(X,Y)=0$,
then there exists extension $(X,Y,W)$ such that
\begin{equation*}
  I(X:Y|W)=H(W|X)=H(W|Y)=0
\end{equation*}
We refer to the later condition as \emph{G\'acs-Körner extractable
  mutual information condition}, see~\cite{gacs1973common}.  In this
case the pair $(X,Y)$ has a very simple structure --- it decomposes as
\begin{equation*}
  (X,Y)=(X'W,Y'W)
\end{equation*}
where $X'\indep Y'$ and $X'Y'\indep W$. Somewhat loosely one might say
that in this case $X$ and $Y$ are independent except for a shared
common part $W$. This justifies the name \emph{entanglement} for the
quantity $\mmrv(X,Y)$.

\paragraph{The spectral inequality:} In~\cite{matveev2026spectral}
authors prove the following inequality, which is valid for any uniform
on the support pair $(X,Y)$ of random variables such that
$I(X:Y)\geq\epsilon_{0}>0$ holds for all pairs in the family for some
fixed constant $\epsilon_{0}$ and any, not necessarily uniform,
extension $(X,Y,A,B)$
\begin{equation}
  \label{eq:mr}
  I(X:Y|A)+I(X:Y|B)+I(A:B)-I(X:Y)\geq\log\frac{\lambda_{1}}{\lambda_{2}^{2}}
  - O(\log H(\XYAB))
\end{equation}
where $\lambda_{i}$ are eigenvalues of graph $\Gsf$ supporting the
uniform pair $(X,Y)$ and constants implicit in $O(\log H(\XYAB)$
depend on $\epsilon_{0}$.

Note that the smaller the value of $\lambda_{2}(\Gsf)$ is, the
stronger the inequality~\eqref{eq:mr} becomes.  It is especially
strong for families of pairs $(X,Y)$ uniformly distributed on
\emph{balanced expanders}, as described in Section~\ref{s:expanders}.
In that case the inequality~\eqref{eq:mr} rewrites to
\begin{equation*}
  I(X:Y|A)+I(X:Y|B)+I(A:B)-I(X:Y)\geq-O(\log H(\XYAB))
\end{equation*}

\bigskip

As an application of the shape function we use it to derive a spectral
bound on the entanglement of the pair $(X,Y)$ and show that smaller
values of $\lambda_{2}(\Gsf)$ force inequality~\eqref{eq:mmrv} to
become weaker. More specifically we prove the following
statement, see Theorem~\ref{th:entanglement-spectral} below.
This statement was formulated as Conjecture~B
in~\cite{matveev2026spectral}.
\begin{theoremA}\label{th:main-intro}
  For any uniform on the support pair $(X,Y)$ supported on an
  admissible%
  \footnote{Admissible graphs satisfy mild technical conditions: they
    contain at least twenty edges and the density of edges does not
    exceed $1/2$. Details can be found in
    Section~\ref{s:graphs}.} %
  graph $\Gsf$ holds
  \begin{equation}
    \mmrv(X,Y)
    \geq
    \min\set{I(X:Y),2\log\frac{\lambda_{1}(\Gsf)}{\lambda_{2}(\Gsf)}}
    -O\bigr(\log H(XY)\bigl)
  \end{equation}
  where constants in $O\bigr(\log H(XY)\bigl)$-term are universal.
\end{theoremA}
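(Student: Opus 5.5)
The plan is to express the entanglement functional algebraically through the three coordinates of the extension profile, and then bound those coordinates with a spectral (expander-mixing) estimate. Fix $W$ and write $a=H(X|W)$, $b=H(Y|W)$, $c=I(X:Y|W)$. Since $H(X|WY)=a-c$, we have
\begin{equation*}
  I(X:W|Y)=H(X|Y)-a+c,\qquad I(Y:W|X)=H(Y|X)-b+c .
\end{equation*}
Using $\log\lambda_1=\frac12\bigl(H(X|Y)+H(Y|X)\bigr)$, this gives
\begin{equation*}
  I(X:Y|W)+I(X:W|Y)+I(W:Y|X)=c+\bigl(2\log\lambda_1-(a+b)+2c\bigr).
\end{equation*}
So it suffices to show that every point $(a,b,c)\in\ext(X,Y)$ satisfies one of two bounds, up to $O(\log H(XY))$:
\begin{equation*}
  \text{either}\quad c\ge I(X:Y),\qquad\text{or}\quad c\ge \tfrac12(a+b)-\log\lambda_2 .
\end{equation*}
In the first case the entanglement is at least $c\ge I(X:Y)$, because the other two summands are nonnegative. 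In the second case the bracket is at least $2\log\lambda_1-2\log\lambda_2$, and $c\ge0$. Taking the infimum over $W$ then gives the theorem. Equivalently, this is a bound on the shape function along the direction $\alpha=\beta=\tfrac12$, in the form $\Lcal(X,Y)(\tfrac12,\tfrac12)\le\max\{\dots\}$, and I would phrase it that way in order to reuse the spectral shape bounds of Section~\ref{s:shape-unif}.

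The heart of the argument is the combinatorial estimate behind this dichotomy, via the expander mixing lemma. Suppose first that the conditional distribution of $(X,Y)$ given $W=w$ is uniform on $E_w=\Esf\cap(S\times T)$ with uniform marginals on $S\subset\Xsf$ and $T\subset\Ysf$. Then
\begin{equation*}
  H(XY|W=w)\le\log|E_w|,\qquad |E_w|\le\frac{|\Esf|\,|S|\,|T|}{|\Xsf|\,|\Ysf|}+\lambda_2\sqrt{|S|\,|T|}.
\end{equation*}
The second inequality is the mixing lemma for biregular bipartite graphs. Since $I(X:Y)=\log\bigl(|\Xsf||\Ysf|/|\Esf|\bigr)$ for uniform pairs, taking logarithms gives
\begin{equation*}
  c_w\ge\min\bigl\{I(X:Y),\ \tfrac12\log(|S||T|)-\log\lambda_2\bigr\}-1.
\end{equation*}
This is exactly the dichotomy for a single fiber.

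The main obstacle is the general case, where the fibers $W=w$ are not uniform and the dichotomy must be averaged over $w$. Because the minimum is not linear, the averaging has to be done with care. I would handle it in two steps.

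First, I would refine $W$ by dyadic level sets. Replace $W$ by $W'=(W,\lceil\log p(X|W)\rceil,\lceil\log p(Y|W)\rceil,\lceil\log p(XY|W)\rceil)$, suitably truncated. This changes $a$, $b$, $c$ and the entanglement functional only by $O(\log H(XY))$, which is the standard ``almost uniform fibers'' lemma presumably used for the spectral bounds earlier in the paper. After the refinement each fiber is nearly uniform on its support, up to a constant factor in probabilities, so the single-fiber estimate applies with an $O(1)$ loss.

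Second, I would avoid averaging the minimum by splitting the values $w$ into two classes, according to which term of the minimum is active. Let $p$ be the probability of the class where $c_w\ge I(X:Y)-O(1)$. Then both $c$ and the bracket decompose as convex combinations over the two classes. On the first class I would bound the contribution by $I(X:Y)$; on the second, by $2\log(\lambda_1/\lambda_2)$, using the fiberwise identity for the bracket, which is linear in $(a_w,b_w,c_w)$. A convex combination of the two bounds is at least their minimum, which gives the stated inequality.

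Admissibility of $\Gsf$ (at least twenty edges and edge density at most $1/2$) should be needed only to keep the additive constants and the $\log H(XY)$ error uniform.
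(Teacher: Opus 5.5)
Your architecture is the paper's proof unrolled fiber by fiber. The identity you start from is Proposition~\ref{p:chinese-bound}. Your two classes of fibers are the upper and lower parts of the graph extension profile, separated by EML-$\log$ (Theorem~\ref{th:mixing-log}). Bounding a linear functional fiber by fiber is what the support-function estimate over $\convexhull(\nbd_\delta\extgr\Gsf)$ does in Theorems~\ref{th:entropic-in-graph} and~\ref{th:shape-spectral}.

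Your reformulation picks the wrong direction, though. $\Lcal(\Pbf)(\tfrac12,\tfrac12)=L(X,Y)$ is forced by Shannon inequalities and carries no information. Your identity actually says $\mmrv(X,Y)=2L(X,Y)-3\Lcal(\Pbf)(\tfrac13,\tfrac13)$, and the paper evaluates Theorem~\ref{th:shape-spectral} at $(\tfrac13,\tfrac13)$. The $(\tfrac12,\tfrac12)$-functional enters only on the lower branch, as in Lemma~\ref{l:graph-support-pts}\ref{i:flo-half}.

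The step that fails as written is the uniformization. Refining by $W'=(W,\lceil\log p(X|W)\rceil,\lceil\log p(Y|W)\rceil,\lceil\log p(XY|W)\rceil)$ makes $p(XY|w')$ nearly constant on its support. That is not what your single-fiber estimate uses: $H(XY|W=w)\le\log|E_w|$ holds for free. What you need is $H(X|W'=w')\ge\log|S|-O(1)$ and $H(Y|W'=w')\ge\log|T|-O(1)$.

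After restricting to level sets, the new $X$-marginal is proportional to $p(x|w)$ times the conditional probability, given $x$, of landing in the chosen level set. None of the three levels controls this factor: two values of $X$ with equal $p(x|w)$ can have $m$ cells and one cell, respectively, in the same level set. Refining again by the new marginals destroys what was gained. That a refinement of entropy $O(\log H(XY))$ with $\delta_2$-uniform non-exceptional fibers exists at all is the nontrivial Almost Uniform Decomposition Lemma, Theorem~\ref{th:aud} (based on Alon et al.). The paper applies it to each fiber $(X,Y|\wsf)$. With it, Lemma~\ref{l:unif-subsuported} and Theorem~\ref{th:mixing-log} give your per-fiber dichotomy with $O(1)$ loss.

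Finally, on your first class you need the \emph{fiberwise} bracket $\bigl(H(X|Y)-H(X|Y,W=w)\bigr)+\bigl(H(Y|X)-H(Y|X,W=w)\bigr)$ to be nonnegative. This is not the global inequality $I(X:W|Y),I(Y:W|X)\ge0$. It holds because every fiber is subsupported on the biregular graph $\Gsf$, so $H(X|Y,W=w)\le\log d_2=H(X|Y)$, and symmetrically for $Y$. This is the entropic form of Lemma~\ref{l:extgr-shannon-bounds}\ref{i:gr-L10},\ref{i:gr-L01}, which the paper uses in Proposition~\ref{p:graph-support-upper}. With these repairs your argument is correct and coincides with the paper's.
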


Thus, for distributions on expanders inequality~\eqref{eq:mmrv} is weaker than
\begin{equation}
  \begin{split}
    I(X:Y|A)&+I(X:Y|B)+I(A:B)-I(X:Y)\geq\\
    &\geq-\min\set{I(X:Y),\frac12(H(X|Y)+H(Y|X))}
  -O(\log H(XY))
  \end{split}
\end{equation}
The later inequality is a Shannon-type inequality if the error term
$O(\log H(XY))$ is ignored.

Thus we have two inequalities, the MMRV inequality and the spectral
inequality that are strongest at the ``different ends of the
spectrum''. For small $\lambda_{2}$ the spectral inequality gives the
strong bounds on the Ingleton expression, while MMRV inequality
becomes almost Shannon-type, while for large values of $\lambda_{2}$
they switch their roles.

\subsection{Zhen Zhang’s question on approximate representation of the
  mutual information}
As another application we study an approximate version of a question
posed by Zhen Zhang, \cite{zhang3new}, on the characterization of
pairs with maximally non-extractable mutual information. More
specifically, we say that a pair $(X,Y)$ has \emph{maximally
  non-extractable mutual information} if the entanglement is equal to
its maximal possible value
\[
  \min\set{H(X|Y),H(Y|X),I(X:Y)}
\]
(In \cite{zhang3new} these pairs are referred to as pairs whose mutual
information is \emph{not approximately representable}).  We introduce
a relaxed version of this notion and prove the following theorem,
Theorem~\ref{th:zhang-region}.
\begin{theoremA}\label{th:intro-zhang}
  For any $\delta>0$ the set of projectivised entropy profiles
  \[
    \Big[H(X):H(Y):I(X:Y)\Big]
  \]
  of pairs $(X,Y)$ satisfying
  \[
    \mmrv(X,Y)\geq\min\set{H(X|Y),H(Y|X),I(X:Y)}-\delta\cdot H(XY)
  \]
  is dense in the region determined by the Shannon inequalities
  \[
    \Big\{[x:y:z]\st 0\leq z\leq\min\set{x,y}\Big\}\subset\Rbb P^{2}
  \]
\end{theoremA}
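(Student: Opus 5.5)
Fix a target point $[x:y:z]$ with $0<z<\min\{x,y\}$. Points with $z=0$ or $z=\min\{x,y\}$ are then limits of such points, so density in the open region suffices. The plan is to realize $[x:y:z]$ approximately by a tensor combination of two ingredients: a pair uniform on a balanced expander, whose entanglement is nearly maximal by Theorem~\ref{th:main-intro}, and independent ``padding'' components that adjust $H(X)$ and $H(Y)$ without changing mutual information.

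The construction runs as follows. Take a family of biregular bipartite expanders $\Gsf_n$ with $|\Xsf|=|\Ysf|=N_n$, degree $d_n$, and $\lambda_2\le C\sqrt{d_n}$. Ramanujan-type bipartite graphs, or random regular bipartite graphs, supply these. Let $(X_0,Y_0)$ be uniform on $\Esf(\Gsf_n)$. Then
\[
H(X_0)=H(Y_0)=\log N_n,\qquad H(X_0|Y_0)=H(Y_0|X_0)=\log d_n,\qquad 2\log\frac{\lambda_1}{\lambda_2}\ge \log d_n-O(1).
\]
Choose $d_n\approx N_n^{\theta}$ so that $I(X_0:Y_0)=(1-\theta)\log N_n$. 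Theorem~\ref{th:main-intro} then gives
\[
\mmrv(X_0,Y_0)\ge \min\{I,\,H(X_0|Y_0)\}-O(\log\log N_n),
\]
which is exactly the maximal value $\min\{H(X|Y),H(Y|X),I\}$ up to a lower-order error.

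Next I pad. Set $X=(X_0,U)$ and $Y=(Y_0,V)$, with $U,V$ uniform, independent of each other and of $(X_0,Y_0)$. This keeps $I(X:Y)=I(X_0:Y_0)$ and increases the conditional entropies $H(X|Y)$ and $H(Y|X)$ by $H(U)$ and $H(V)$ respectively.

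I need a monotonicity claim: $\mmrv(X,Y)\ge\mmrv(X_0,Y_0)$. There are two routes to it.
\begin{itemize}
\item The cleanest route is additivity of the shape function under independent joints, combined with the fact that $\mmrv$ is read off from $\Lcal$; the trivial pair $(U,V)$ has $\mmrv=0$.
\item Alternatively, argue directly: given $W$ for $(X,Y)$, use the same $W$ for $(X_0,Y_0)$ and check that each of the three terms does not increase. For instance, $I(X_0:Y_0|W)\le I(X:Y|W)$ holds only after controlling for $U,V$, so this step needs care.
\end{itemize}
After padding, $\min\{H(X|Y),H(Y|X),I\}\ge\min\{H(X_0|Y_0),I\}$. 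The bound therefore still reads $\mmrv\ge\min\{H(X|Y),H(Y|X),I\}-o(H(XY))$ provided $\theta$ is chosen so that $\log d_n\ge I$, i.e.\ $\theta\ge 1/2$, or else the binding term is $I$.

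To match the ratios I use the parameters $\theta$, $H(U)=a\log N_n$, and $H(V)=b\log N_n$. Projectively,
\[
[H(X):H(Y):I]=[1+a:1+b:1-\theta].
\]
Given $[x:y:z]$ normalized so that $z=1-\theta$ with $\theta\in[1/2,1)$ (rescale so $z\le 1/2$ after fixing $\min\{x,y\}\ge 2z$), solve $a=x-1\ge0$ and $b=y-1\ge0$. This works when $x,y\ge1$, i.e.\ when $z\le\frac12\min\{x,y\}$. For $z>\frac12\min\{x,y\}$, take $\theta<1/2$. Then the binding minimum is $H(X_0|Y_0)=\theta\log N$ in the expander bound $\min\{I,2\log(\lambda_1/\lambda_2)\}$, and since padding only increases the conditional entropies, the target minimum $\min\{H(X|Y),H(Y|X),I\}$ equals $\min\{\theta+a,\theta+b,1-\theta\}\log N$. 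This exceeds the guaranteed $\theta\log N$ unless $a=0$ or $b=0$. So in this regime, padding on only one side is allowed. That handles targets with $x=\min$, but a general point with both sides padded needs a genuinely unbalanced expander, with $|\Xsf|\ne|\Ysf|$ and $\lambda_2\lesssim\sqrt{\lambda_1}$, applying Theorem~\ref{th:main-intro} directly, since it only needs $\lambda_1/\lambda_2$.

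Finally, the errors are $O(\log\log N_n)=o(H(XY))$, so for large $n$ they fall below $\delta H(XY)$. Rational approximation of the parameters, via integer $N_n, d_n$ and uniform padding alphabets, gives density.

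The main obstacle is the regime $z>\frac12\min\{x,y\}$. There I must ensure the existence of biregular expanders with arbitrary prescribed ratios $\log|\Xsf|:\log|\Ysf|:\log(\text{degree})$ satisfying $\lambda_2=O(\sqrt{\lambda_1})$. I would first try random biregular bipartite graphs, whose second eigenvalue is $O(\sqrt{d_X}+\sqrt{d_Y})\approx O(\sqrt{\lambda_1})$ when the degrees are comparable. Where they are not, I would take tensor products of an expander with complete bipartite pieces to tune degrees, checking that $\lambda_1/\lambda_2$ is still large enough to match the needed minimum.
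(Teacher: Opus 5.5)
Your construction is in fact a complete proof. The gap you flag at the end does not exist. Every target has $x\le y$ or $y\le x$; by symmetry take $x\le y$ and normalize $x=1$. Choose a balanced expander with $\theta=1-z$ and pad \emph{only} the $Y$-side, by $b=y-1\ge 0$. This reproduces $[x:y:z]$. Moreover $H(X|Y)=\theta\log N$, $H(Y|X)=(\theta+b)\log N$ and $I=(1-\theta)\log N$, so $\min\{H(X|Y),H(Y|X),I\}=\min\{\theta,1-\theta\}\log N$. That is exactly what Theorem~\ref{th:entanglement-spectral} gives for $(X_0,Y_0)$.

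Your additivity argument then transfers the bound to $(X,Y)$: additivity of $\Lcal$ together with Proposition~\ref{p:chinese-bound} makes $\mmrv$ additive under independent sums, and $\mmrv$ vanishes on independent pairs. One-sided padding works for every $z\in(0,\min\{x,y\})$, so neither the split at $z=\frac12\min\{x,y\}$ nor two-sided padding is needed.

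The remedies you propose for the non-existent case would not have worked as stated. By Proposition~\ref{p:lower_bound_lambda_2}, a graph with very unequal degrees cannot satisfy $\lambda_2\lesssim\sqrt{\lambda_1}$. Tensoring with a complete bipartite graph $K_{s,t}$ is just padding again: it multiplies $\lambda_1$ and $\lambda_2$ by the same factor $\sqrt{st}$. What unbalanced graphs actually need is only $\lambda_2^2\lesssim\max\{d_1,d_2\}$, which already yields $2\log(\lambda_1/\lambda_2)\ge\log\min\{d_1,d_2\}-O(1)$.

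That last observation is the paper's route. It takes the biregular Ramanujan graphs of Theorem~\ref{th:expanders} (Marcus--Spielman--Srivastava lifts of $K_{l,k}$) for all $k,l\ge 3$. Their projectivized log-sizes $[\log l+n\log 2:\log k+n\log 2:n\log 2]$ are already dense in the region, so Theorem~\ref{th:entanglement-spectral} applies directly, with no padding and no additivity. Your route needs only the balanced case $k=l$ of that construction. There $N=d\cdot 2^n$, so $\theta=\log d/(\log d+n\log 2)$ is dense in $(0,1)$, which settles the existence of your growing-degree balanced family. In exchange you use the structural fact that independent padding leaves $\mmrv$ unchanged. This shows the whole region is generated by a one-parameter family of balanced expanders plus independent noise.

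You could even avoid additivity. The padded pair is uniform on the biregular graph whose biadjacency matrix is the Kronecker product of that of $\Gsf$ with the all-ones $|\Usf|\times|\Vsf|$ matrix. That graph is admissible, with the same $I$ and the same ratio $\lambda_1/\lambda_2$, so Theorem~\ref{th:entanglement-spectral} applies to it directly.
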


\subsection{Structure of the article}

Section~\ref{s:prelim} is mostly devoted to introduction of some
notion from convex analysis: the support function, lower envelope of
subsets in a Euclidean space and Legendre-Fenchel transform (convex
conjugate) of convex functions.
 
In Section~\ref{s:shape} we introduce the extension profile and shape
function of random pairs and study their elementary properties,
including the upper and lower bounds. We also prove the additivity and
the chain rule for the shape function in this section.

Graphs and their relation to random pairs is the subject of
Section~\ref{s:graphs}. We use Expander Mixing Lemma to derive
restrictions on the sizes of subgraphs.

In the next Section~\ref{s:relation} we relate the sizes of subgraphs
of a graph and the extension profile and the shape function of a
random pair supported on the graph.

As a culmination of the theory developed so far we prove our main
result --- the spectral bound on the shape function. This is done in
Section~\ref{s:spectral}.

Finally, Section~\ref{s:applications} is devoted to the applications
of the spectral bound, where
Theorems~\ref{th:shape-minmax},~\ref{th:main-intro} and~\ref{th:intro-zhang}
are proven.

\section{Preliminaries}\label{s:prelim}
\subsection{Notation and conventions}\label{s:notations}
For $n\in\Nbb$ we denote $[n]:=\set{0,\dots,n-1}$ and we use notation
$2^{I}$ for the power set of the set $I$.

\subsubsection{Random variables}
All our random variables are supported on finite sets.
It is important for us to make a distinction between tuples of random
variables and their joints. In oder to facilitate this distinction we
denote tuples of random variables by comma-separated lists,
e.g. $(X,Y,Z)$ or $(X_{i}\st i\in[n])$, and the joint random variable
by either concatenating the letters, or using subset of indices as a
subscript, e.g. $XY$ or $X_{I}$, $I\subset[n]$.
In addition to the usual notation for (conditional) entropy and mutual
information ($H(X)$, $H(X|Y)$, $I(X:Y)$, $I(X:Y|Z)$) we use the
following definitions
\begin{align*}
  &I(X:Y:Z):=I(X:Y)-I(X:Y|Z)\\
  &L(X,Y):=\tfrac12\big(H(X|Y)+H(Y|X)\big)\\
  &L(X,Y|Z):=\tfrac12\big(H(X|YZ)+H(Y|XZ)\big)\\
  &\ing(X,Y,A,B):=I(X:Y|A)+I(X:Y|B)+I(A:B)-I(X:Y)\\
  &\ing(X,Y,A,B|U):=I(X:Y|AU)+I(X:Y|BU)+I(A:B|U)-I(X:Y|U) 
\end{align*}
We denote a trivial random variable taking values in a one-point set
by $\bullet$ and we casually use the equivalence $\bullet Z\equiv
Z$ for any $Z$.

We use the convention that the alphabet of a random variable is
denoted by the same letter in san-serif, that is alphabet of $X$ is
$\Xsf$, etc. We use short-hand notation $(X|\zsf)$ for $X$ conditional on a value
$\zsf\in \Zsf$ of the variable $Z$, in lieu of the conventional $(X|Z=\zsf)$.

Any tuple of random variables $(X_{i}\st i\in[n])$ satisfies series of
so called \emph{Shannon inequalities}
\[
  I(X_{I}:X_{J}|X_{K})\geq0,
  \qquad I,J,K\subset[n]
\]
where cases of empty or coinciding $I,J,K$ are also included.

We say that a (not necessarily linear) inequality for entropies of
joints $X_{I}$ is a \emph{Shannon-type} inequality if it is
syntactically implied by Shannon inequalities. In other words,
inequality is Shannon-type if it is satisfied by the rank function of
every polymatroid.

\subsubsection{Spaces and norms}
In what follows, we consider the three-dimensional Euclidean space
$\Rbb^{3}$ with coordinates $x,y,z$. We write vectors either as row or
column vectors, whichever is more convenient to save space; the two
notations are to be treated as equivalent.

The space $\Rbb^{3}$ is equipped with the $\ell^{1}$-norm,
that is,
\[
  \|(x,y,z)\|_{1}=|x|+|y|+|z|.
\]
The dual spaces $(\Rbb^{3})^{*}$ carries the dual,
$\ell^{\infty}$-norm.
\[
  \|(\alpha,\beta,\gamma)\|_{\infty}=\max\set{|\alpha|,|\beta|,|\gamma|}
\]

We will also use the following shorthand
notation: for a predicate $P(x,y,z)$ we will write
\[
  \set{P(x,y,z)}:=\set{(x,y,z)\in\Rbb^{3}\st P(x,y,z)}
\]
e.g. $\set{z\geq0}$ stands for the upper half-space in $\Rbb^{3}$, etc.

\subsection{Support functions of compact convex subsets of $\Rbb^{3}$}
\label{s:support-fcn}
In this section we define \emph{support functions}, \emph{lower
  envelopes}, \emph{upper closures} of convex
sets and \emph{Legendre-Fenchel transform (convex conjugate)} of convex
functions on convex domains. Reader is referred
to~\cite[Sections 12–16]{rockafellar1997convex} for a systematic
treatment of these notions.

We associate with every vector $(\alpha,\beta)\in\Rbb^{2}$ the
three-dimensional dual vector
\[
  l_{\alpha\beta}:=(\alpha,\beta,-1)\in(\Rbb^{3})^{*}
\]
For a compact convex set $S\subset\Rbb^{3}$ define the (partial)
\emph{support function} of $S$ as the function 
\begin{equation}
  \Lambda_S:\Rbb^{2}\to\Rbb
\end{equation}
equal to the
supremum on $S$ of the linear functional $l_{\alpha,\beta}$:
\[
  \Lambda_S(\alpha,\beta)
  :=\sup_{\xbf\in S} l_{\alpha,\beta}(\xbf)
  =\sup\set{\alpha\cdot x+\beta\cdot y -z\st
    (x,y,z)\in S}
\]
When convenient, especially when the set $S$ is given by a long
expression, we also write $\Lambda(S)$ for the partial support function
$\Lambda_S$.

For two compact convex sets $S,S'\subset\Rbb^{3}$ denote by
$S\oplus S'$ their Minkowski sum, by $\lambda\cdot S$ the homothetic
scaling of $S$ by the factor $\lambda\geq0$, and by $\dist(S,S')$ the
Hausdorff distance (induced by the $\ell^{1}$-norm on
$\Rbb^{3}$). Further, define \emph{upper closure} $S^{\uparrow}$ of
$S$ as the union of all vertical rays pointing upward ($z\to+\infty$)
and starting from points of $S$
\begin{equation}
  S^{\uparrow}:=S\oplus\set{x=y=0,\, z\geq0}, 
\end{equation}
and its \emph{lower envelope} as the function defined on the image
$\underline S$ of
the projection of $S$ to the horizontal plane:
\begin{equation}\label{eq:lower-envelope}
  \begin{aligned}
    \tau_{S}:\underline{S} &\to \Rbb\\
    (x,y)&\mapsto\inf\set{z\st (x,y,z)\in S}.
  \end{aligned}
\end{equation}

\begin{remark}\ 
  \begin{enumerate}[label=(\roman*)]
  \item Since we assume compactness of the set $S$, the supremum in
    the definition above can be replaced by the maximum. We will call
    any point in $\arg\max_{s\in S} l_{\alpha\beta}(s)$ \emph{the
      point of support} by a plane with slope $(\alpha,\beta)$.
  \item Similarly, the infimum in the definition of the lower envelope
    can be replaced by a minimum. 
  \item The graph of the lower envelope is the lower part of the
    boundary of $S$.
    
    The lower envelope can equivalently be defined by
    \begin{equation*}
      \tau_{S}(x,y):=\inf\set{z\st (x,y,z)\in S^{\uparrow}}.
    \end{equation*}
    In fact, proper convex functions on convex domains are in
    one-to-one correspondence with upper-closed convex bodies in
    $\Rbb^{3}$.
  \item The partial support function of $S$ is  the
    Legendre--Fenchel transform (convex conjugate) of the the lower
    envelope $\tau_{S}$ in the sense that
    \[
      \Lambda_S(\alpha,\beta)
      =
      \tau^{*}_{S}(\alpha,\beta)
      :=
      \sup_{(x,y)\in\underline S}
      \bigl(\alpha\cdot x+\beta\cdot y-\tau_S(x,y)\bigr).
    \]
    Since  Legendre--Fenchel transform is an involution on the class of
    convex functions defined on convex subsets of $\Rbb^{2}$, we also
    have
    \[
      \tau_{S}(x,y)=\Lambda_S^{*}(x,y)
    \]
  \end{enumerate}
\end{remark}
For a compact set $K\subset\Rbb^{2}$ and a function
$f:\Rbb^{2}\to\Rbb$, we write
\[
  \|f\|_{K}:=\sup_{(\alpha,\beta)\in K}|f(\alpha,\beta)|=\|f|_{K}\|_{\infty}
\]
and define $\ell^{\infty}$-\emph{radius} of $K$ by
\[
  \rho_{K}:=\sup\set{\|(\alpha,\beta,-1)\|_{\infty}\st (\alpha,\beta)\in K}
\]

\begin{proposition}\label{p:legendre}
  The correspondence $S\mapsto \Lambda_{S}$ is a monotone, homogeneous,
  locally Lipschitz, max-additive with respect to unions, homomorphism
  from the class of compact convex subsets of $\Rbb^{3}$ with
  Minkowski addition and Hausdorff distance to the space of convex
  functions on $\Rbb^{2}$ with the family of local uniform norms.

  \medskip\par\noindent
  More precisely, for any pair of compact convex subsets
  $S,S'\subset\Rbb^{3}$ and any $\alpha,\alpha',\beta,\beta'\in\Rbb$
  the following properties hold
  \begin{enumerate}[label=(\roman*)]
  \item\label{i:convex}
    The function $\Lambda_{S}:\Rbb^{2}\to\Rbb$ is convex: for any
    $\lambda\in[0,1]$ holds
    \begin{equation*}
      \Lambda_{S}(\alpha'',\beta'')
      \leq
      \lambda\cdot \Lambda_{S}(\alpha,\beta)+(1-\lambda)\cdot \Lambda_{S}(\alpha',\beta')
    \end{equation*}
    where
    $\vect{\alpha''\\\beta''}=\lambda\vect{\alpha\\\beta}+
    (1-\lambda)\vect{\alpha'\\\beta'}$.
  \item\label{i:monotone}
    $S\mapsto\Lambda_{S}$ is monotone: for $S'\subset S$ holds
    \begin{equation*}
      \Lambda_{S'}(\alpha,\beta)\leq \Lambda_{S}(\alpha,\beta)
    \end{equation*}
  \item\label{i:homo}
    $S\mapsto\Lambda_{S}$ is a homomorphism:
    \begin{equation*}
      \Lambda_{S\oplus S'}=\Lambda_{S}+\Lambda_{S'}
    \end{equation*}
  \item\label{i:homogen}
    $S\mapsto\Lambda_{S}$ is homogeneous: for any $\lambda\geq0$ holds
    \begin{equation*}
      \Lambda_{\lambda\cdot S}=\lambda\cdot \Lambda_{S}
    \end{equation*}
  \item\label{i:lip} $S\mapsto\L_{S}$ is locally Lipschitz: for every
    compact subset $K\subset\Rbb^{2}$
    \begin{equation*}
      \|\Lambda_{S}-\Lambda_{S'}\|_{K}
      \leq
      \dist(S,S')\cdot \rho_{K}
    \end{equation*}
  \item\label{i:union}
    $S\mapsto\Lambda_{S}$ is max-additive with respect to unions: 
    \begin{equation*}
      \Lambda(\convexhull (S\cup S'))(\alpha,\beta)
      =
      \max\set{\Lambda_{S}(\alpha,\beta),\Lambda_{S'}(\alpha,\beta)}
    \end{equation*}
  \end{enumerate}
\end{proposition}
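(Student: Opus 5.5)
All six properties follow directly from the fact that $\Lambda_S$ is the pointwise supremum of the affine functions $(\alpha,\beta)\mapsto l_{\alpha\beta}(\xbf)=\alpha x+\beta y-z$ over $\xbf=(x,y,z)\in S$. By compactness every supremum is attained, so each $\Lambda_S$ is finite-valued. We verify the items one at a time, each from the definition.

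\emph{Convexity.} For fixed $\xbf$ the map $(\alpha,\beta)\mapsto l_{\alpha\beta}(\xbf)$ is affine. A pointwise supremum of affine functions is convex, which gives \ref{i:convex}.

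\emph{Monotonicity and homogeneity.} Item \ref{i:monotone} holds because a supremum over a subset $S'\subset S$ is no larger than the supremum over $S$. For \ref{i:homogen}, $l_{\alpha\beta}$ is linear, so $\sup_{\lambda S}l_{\alpha\beta}=\lambda\sup_S l_{\alpha\beta}$ for $\lambda\ge0$. The case $\lambda=0$ is consistent, since $0\cdot S=\{0\}$ gives $0$.

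\emph{Homomorphism.} Every point of $S\oplus S'$ has the form $\xbf+\xbf'$ with $\xbf\in S$, $\xbf'\in S'$, and linearity gives $l(\xbf+\xbf')=l(\xbf)+l(\xbf')$. The pair $(\xbf,\xbf')$ ranges independently over $S\times S'$, so the supremum of the sum is the sum of the suprema. This proves \ref{i:homo}.

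\emph{Max-additivity over unions.} A linear functional attains the same supremum on a set and on its convex hull. Indeed, $l(\sum t_i\xbf_i)=\sum t_i l(\xbf_i)\le\max_i l(\xbf_i)$, and the reverse inequality is trivial. Hence
\[
\Lambda(\convexhull(S\cup S'))=\sup_{S\cup S'}l=\max\set{\Lambda_S,\Lambda_{S'}}.
\]
The convex hull of a union of two compact sets is compact, so the left side is well defined. This proves \ref{i:union}.

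\emph{Local Lipschitz property.} This is the only step needing a small estimate. Let $d=\dist(S,S')$ be the Hausdorff distance in the $\ell^1$-norm. For every $\xbf\in S$ there is $\xbf'\in S'$ with $\|\xbf-\xbf'\|_1\le d$; by compactness the infimum is attained, and otherwise one passes to an $\epsilon$-approximation. Since the $\ell^\infty$-norm is dual to the $\ell^1$-norm,
\[
l_{\alpha\beta}(\xbf)\le l_{\alpha\beta}(\xbf')+\|l_{\alpha\beta}\|_\infty d\le\Lambda_{S'}(\alpha,\beta)+\rho_K d
\]
for $(\alpha,\beta)\in K$. Taking the supremum over $\xbf\in S$ and then swapping the roles of $S$ and $S'$ gives $|\Lambda_S-\Lambda_{S'}|\le\rho_K d$ on $K$, which is \ref{i:lip}.

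The only place needing care is matching the norms in \ref{i:lip}: the Hausdorff distance uses $\ell^1$, $\rho_K$ uses $\ell^\infty$, and this pairing is exactly what makes the constant $\rho_K$ the right one.
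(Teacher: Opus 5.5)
Your proposal is correct and follows essentially the same route as the paper. Each item is read off directly from $\Lambda_S$ being a supremum of the affine functionals $l_{\alpha\beta}$ over $S$, and the Lipschitz bound comes from the $\ell^1$/$\ell^\infty$ duality. The only cosmetic difference is that in item (v) you bound $l_{\alpha\beta}$ at every point of $S$ and then take the supremum, whereas the paper fixes a point of support first.
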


\begin{proof}\ 
  \begin{enumerate}[label=(\roman*)]
  \item Convexity of $\Lambda_{S}$ follows because it is a supremum of
    affine functions.
  \item Monotonisity is immediate from the definition.
  \item For Minkowski sums,
    \[
      \Lambda_{S\oplus S'}(\alpha,\beta)
      =
      \sup_{s\in S,\,s'\in S'}
      l_{\alpha,\beta}(s+s')
      =
      \Lambda_S(\alpha,\beta)+\Lambda_{S'}(\alpha,\beta).
    \]
  \item Homogeneity is proven similarly.
  \item Let $K\subset\Rbb^{2}$ be compact, $(\alpha,\beta)\in K$ and suppose
    $\dist(S,S')=\delta$. Then for any $s\in S$ there is $s'\in S'$
    such that $\|s-s'\|_{1}\leq\delta$. Let $s_{0}\in S$ be a point of
    support and choose $s'_{0}\in S'$ with
    $\|s_{0}-s_{0}'\|_{1}\leq\delta$.  Then we have
    \begin{align*}
      \Lambda_{S}(\alpha,\beta)
      &=
        l_{\alpha,\beta}(s_{0})
        =
        l_{\alpha,\beta}(s'_{0})+l_{\alpha,\beta}(s_{0}-s'_{0})\\
      &\leq
        l_{\alpha,\beta}(s'_{0})+
        \|s_{0}-s'_{0}\|_{1}\cdot\|l_{\alpha,\beta}\|_{\infty}\\
      &\leq
        \Lambda_{S'}(\alpha,\beta)+\delta\cdot\rho_{K}
    \end{align*}
    By symmetry the required property is implied.
  \item Finally, a linear functional has the same supremum over a set and
    over its convex hull. Therefore
    \[
      \Lambda(\convexhull (S\cup S'))(\alpha,\beta)
      =
      \sup_{\xbf\in S\cup S'}l_{\alpha,\beta}(\xbf)
      =
      \max\set{\Lambda_S(\alpha,\beta),\Lambda_{S'}(\alpha,\beta)} 
    \]
  \end{enumerate}
\end{proof}

The next lemma asserts that upper closure and lower envelope of a
compact convex set can be recovered from its partial support function.

\begin{lemma}\label{l:legendre-injective}
  Let $S,T\subset\Rbb^{3}$ be convex and compact. Then
  $\Lambda_{S}=\Lambda_{T}$ if and only if $S^{\uparrow}=T^{\uparrow}$.
\end{lemma}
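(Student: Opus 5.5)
The plan is to prove the two directions separately, treating $S^{\uparrow}$ as a closed convex set and applying a separation argument with the functionals $l_{\alpha\beta}=(\alpha,\beta,-1)$.

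\emph{The direction $S^{\uparrow}=T^{\uparrow}\Rightarrow\Lambda_S=\Lambda_T$.} I will first show that $\Lambda_S=\Lambda_{S^{\uparrow}}$, where $\Lambda$ of an unbounded set is defined by the same supremum formula. The inclusion $S\subset S^{\uparrow}$ gives $\Lambda_S\le\Lambda_{S^\uparrow}$. For the reverse inequality, every point of $S^{\uparrow}$ has the form $s+(0,0,t)$ with $s\in S$ and $t\ge0$, and
\[
l_{\alpha\beta}\bigl(s+(0,0,t)\bigr)=l_{\alpha\beta}(s)-t\le l_{\alpha\beta}(s).
\]
Hence $\Lambda_S=\Lambda_{S^{\uparrow}}=\Lambda_{T^{\uparrow}}=\Lambda_T$.

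\emph{The direction $\Lambda_S=\Lambda_T\Rightarrow S^{\uparrow}=T^{\uparrow}$.} By symmetry it suffices to prove $S^{\uparrow}\subset T^{\uparrow}$. First I will note that $T^{\uparrow}$ is closed and convex, being the Minkowski sum of a compact convex set and a closed convex cone. Suppose $p=(x_0,y_0,z_0)\in S^{\uparrow}\setminus T^{\uparrow}$. By strict separation there is a functional $\ell=(a,b,c)$ with
\[
\ell(p)>\sup_{T^{\uparrow}}\ell .
\]
Since $T^{\uparrow}$ contains vertical upward rays, finiteness of this supremum forces $c\le0$.

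\emph{Case $c<0$.} Rescale so that $c=-1$; then $\ell=l_{ab}$. This gives
\[
\Lambda_S(a,b)=\Lambda_{S^{\uparrow}}(a,b)\ge l_{ab}(p)>\Lambda_{T}(a,b),
\]
which is a contradiction.

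\emph{Case $c=0$ (the main obstacle).} This is the case of a vertical separating hyperplane, and it happens exactly when $(x_0,y_0)$ lies outside the projection $\underline T$. I will handle it by perturbation. Set
\[
\ell_\epsilon=(a,b,-\epsilon)=\epsilon\, l_{a/\epsilon,\,b/\epsilon}.
\]
By compactness of $S$ and $T$, the $z$-coordinates in $S$ and $T$ are bounded, and $p$ may be taken to be a point of $S$ itself (if $p=s+(0,0,t)$ is in $S^{\uparrow}\setminus T^{\uparrow}$, then $s\notin T^{\uparrow}$ as well). For small $\epsilon>0$ the strict gap
\[
\ell(p)-\sup_T\ell>0
\]
survives, because the change in value is at most $\epsilon\cdot\max|z|$ over $S\cup T$. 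So $\ell_\epsilon(p)>\sup_T\ell_\epsilon$, and dividing by $\epsilon$ gives
\[
\Lambda_S(a/\epsilon,b/\epsilon)>\Lambda_T(a/\epsilon,b/\epsilon),
\]
which is again a contradiction.

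Thus $S^{\uparrow}\subset T^{\uparrow}$, and the reverse inclusion follows by symmetry. Equivalently, the argument can be read through the Remark as $\tau_S=\Lambda_S^{*}=\Lambda_T^{*}=\tau_T$, using that $\Lambda^{*}$ equals $+\infty$ off the projection. The separation argument avoids having to justify the Fenchel involution on non-full domains.
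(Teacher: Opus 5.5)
Your proof is correct, but it takes a different route from the paper. The paper argues through the lower envelope. It observes that $S^{\uparrow}$ is the epigraph of $\tau_S$ over $\underline S$, so $S^{\uparrow}=T^{\uparrow}$ iff $\tau_S=\tau_T$. Since $\Lambda_S=\tau_S^{*}$, it then invokes the involution property of the Legendre--Fenchel transform to get $\tau_S=\Lambda_S^{*}$, so $\Lambda_S$ determines $S^{\uparrow}$.

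You instead prove the nontrivial direction by strictly separating a point of $S^{\uparrow}\setminus T^{\uparrow}$ from the closed convex set $T^{\uparrow}$. Your case analysis is essentially the standard proof of Fenchel--Moreau specialized to this setting. Non-vertical separating planes ($c<0$) recover the values of $\tau_T$ on $\underline T$. The vertical case ($c=0$), which can only arise when $(x_0,y_0)\notin\underline T$, recovers the domain $\underline T$: tilting to slopes $(a/\epsilon,b/\epsilon)$ shows $\Lambda_T^{*}=+\infty$ there.

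The paper's argument is shorter because it cites a standard theorem. However, it silently requires extending $\tau_S$ by $+\infty$ off $\underline S$ and checking that this extension is a closed convex function (true here, since its epigraph $S^{\uparrow}$ is closed), which is what the involution needs. Your argument is self-contained. It also makes explicit that unbounded slopes are genuinely used, i.e. that $\Lambda_S$ is needed on all of $\mathbb{R}^2$ and not merely on a bounded region.

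One presentational remark. In the case $c=0$ the reduction to $p\in S$ is unnecessary: the perturbation works for any fixed $p\in S^{\uparrow}$, because $\Lambda_S=\Lambda_{S^{\uparrow}}$ and $z_0$ is a fixed finite number. If you do keep the reduction, say explicitly that $\ell(s)=\ell(p)$ when $c=0$, so the same $\ell$ still separates $s$ from $T^{\uparrow}$.
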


\begin{proof}
  From the definitions it follows that
  \[
    S^{\uparrow}=T^{\uparrow}\quad\text{iff}\quad \tau_{S}=\tau_{T}
  \]
  On the other hand $\Lambda_{S}$ and $\Lambda_{T}$ are convex conjugates
  of $\tau_{S}$ and $\tau_{T}$, respectively. Since Legendre-Fenchel
  transform $f\mapsto f^{*}$ is an involution on convex function on
  convex domains, it follows that 
  \[
    \tau_{S}=\Lambda_{S}^{*}\quad\text{and}\quad\tau_{T}=\Lambda_{T}^{*} 
  \]
  Thus $\tau_{S}$ and, consequently, upper closure $S^{\uparrow}$ can
  be recovered from $\Lambda_{S}$.
\end{proof}

\section{Extension profile and shape function of pairs of random
  variables}\label{s:shape}
\subsection{Entropy profile}
For a pair of random variables $\Pbf=(X,Y)$ define their
\emph{entropy profile} $e(\Pbf)=e(X,Y)$ as the point in $\Rbb^{3}$
given by
\begin{equation}
  e(X,Y)=\vect{H(X)\\H(Y)\\I(X:Y)}\in\Rbb^{3}
\end{equation}
Likewise the \emph{conditional entropy profile} of the pair $(X,Y)$
for an extension $(X,Y,W)$ is
\begin{equation}
  e(X,Y|W)=\vect{H(X|W)\\H(Y|W)\\I(X:Y|W)}\in\Rbb^{3}
\end{equation}
We recall here that in our settings the space $\Rbb^{3}$ is equipped with the
$\ell^{1}$-norm and coordinates $(x,y,z)$.

\subsection{Extension profile}
An \emph{extension profile} of the pair $\Pbf=(X,Y)$ is the set of all
conditional entropy profiles of $\Pbf$ over all extensions $(X,Y,W)$.
\begin{equation}
  \ext(\Pbf)
  :=
  \set{e(X,Y|W)\st \text{$(X,Y,W)$
      is an extension of $(X,Y)$}}\subset\Rbb^{3}
\end{equation}
This construction is closely related to the tension region; see, for
example,~\cite{prabhakaran2014assisted,li2017extended,csirmaz2023short}.

For a triple of random variables $(X,Y,Z)$ we define conditional
extension profile $\ext(X,Y|Z)$ in two equivalent ways.
\begin{enumerate}[label=(\roman*)]
\item Denote by $\Zsf$ the support of the extending variable $Z$, by
  $p_{Z}$ the distribution of $Z$, and define
  \begin{equation}
    \ext(X,Y|Z):=\bigoplus_{\zsf\in\Zsf}p_{Z}(\zsf)\ext(X,Y|\zsf)
  \end{equation}
\item
  Alternatively define
  \begin{equation*}
    \ext(X,Y|Z):=\set{e(X,Y|ZW)\st(X,Y,Z,W)\text{ is an extension of }(X,Y,Z)}
  \end{equation*}
\end{enumerate}
The equivalence of these definitions is straight forward.

\subsubsection{Extension profile is compact and convex}
These facts are fairly standard, see \cite{csiszar2011information}. 
We give a proof for completeness.
\begin{proposition}\label{p:ext-compact-convex}
  For any pair $\Pbf:=(X,Y)$ the extension profile $\ext\Pbf$ is a
  compact convex subset of $\Rbb^{3}$.
\end{proposition}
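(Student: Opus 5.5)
We prove convexity first, then boundedness, and then closedness, which is the main point.

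\emph{Convexity.} Take two extensions $(X,Y,W_{0})$ and $(X,Y,W_{1})$ and a weight $\lambda\in[0,1]$. Let $B$ be a Bernoulli variable, independent of $XY$, with $\Pr[B=1]=\lambda$. Define a new variable $W:=(B,W_{B})$. Concretely, we first sample $B$, and then sample $W_{B}$ from its conditional distribution given $XY$, using the coupling of the respective extension. The marginal of $XY$ is unchanged, so $(X,Y,W)$ is again an extension of $(X,Y)$. Conditioning on $W$ means conditioning on $B$ together with $W_{B}$, so every conditional entropy and conditional mutual information averages:
\[
e(X,Y|W)=(1-\lambda)\,e(X,Y|W_{0})+\lambda\,e(X,Y|W_{1}).
\]
This uses only the definition of conditional quantities as averages over the values of the conditioning variable.

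\emph{Boundedness.} By the Shannon inequalities, $0\le H(X|W)\le H(X)$, $0\le H(Y|W)\le H(Y)$ and $0\le I(X:Y|W)\le \min\{H(X),H(Y)\}$. Hence $\ext(\Pbf)$ lies in a box.

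\emph{Closedness.} This is the main obstacle, because $W$ may range over arbitrarily large alphabets. I would reduce to bounded alphabet size using a Carathéodory/support-lemma argument.

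Fix an extension $(X,Y,W)$. The profile $e(X,Y|W)$ equals $\sum_{w}p(w)\,F(p_{XY|w})$, where $F$ maps a distribution $q$ on $\Xsf\times\Ysf$ to the continuous vector $(H_{q}(X),H_{q}(Y),I_{q}(X:Y))$. The constraint that $W$ is an extension of $(X,Y)$ says exactly that $\sum_{w}p(w)\,p_{XY|w}=p_{XY}$.

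Consider the point
\[
\bigl(p_{XY},\,e(X,Y|W)\bigr)\in\Rbb^{|\Xsf||\Ysf|-1+3},
\]
where the distribution is recorded by $|\Xsf||\Ysf|-1$ free coordinates. This point lies in the convex hull of the image of the simplex $\Delta(\Xsf\times\Ysf)$ under the continuous map $q\mapsto(q,F(q))$. By Carathéodory's theorem it is a convex combination of at most $N:=|\Xsf||\Ysf|+3$ points of that image. Replacing $W$ by a variable with at most $N$ values, we obtain the same marginal $p_{XY}$ and the same profile.

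It follows that $\ext(\Pbf)$ is the image of the set
\[
\Bigl\{(\mu_{1},\dots,\mu_{N},q_{1},\dots,q_{N})\;:\;\mu\in\Delta_{N},\ q_{i}\in\Delta(\Xsf\times\Ysf),\ \sum_{i}\mu_{i}q_{i}=p_{XY}\Bigr\}
\]
under the map $(\mu,q)\mapsto\sum_{i}\mu_{i}F(q_{i})$. This set is compact: it is a closed subset of a product of simplices, cut out by continuous equations. The map is continuous, since entropy is continuous on a finite simplex. Therefore $\ext(\Pbf)$ is compact.
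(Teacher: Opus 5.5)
Your proof is correct and follows essentially the same route as the paper. For convexity, your independent selector $W=(B,W_B)$ is the paper's $W_\lambda$ on the disjoint alphabet $\Wsf\sqcup\Wsf'$. For compactness, a Carath\'eodory bound on the alphabet of $W$ makes $\ext(\Pbf)$ a continuous image of a compact set; you spell out that step more explicitly (lifting to the points $(q,F(q))$ under the constraint $\sum_i\mu_i q_i=p_{XY}$), and your separate boundedness step is redundant but harmless.
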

\begin{proof}
  To prove compactness observe that
  it suffices, by Carath\'eodory's theorem, to consider extensions
  $(X,Y,W)$ for which the alphabet size of $W$ is bounded in terms of
  the alphabet size of $(X,Y)$. Thus, the extension profile is a continuous
  image of a closed subset of a finite-dimensional simplex and
  therefore it is compact.

  Now we prove convexity.
  Let $\xbf,\xbf'\in\ext\Pbf$. Let $(X,Y,W)$ and $(X,Y,W')$ be two
  extensions such that
  \begin{align*}
    \xbf
    &=
      e(X,Y|W)
    =
      \vect{H(X|W)\\H(Y|W)\\I(X:Y|W)}
    \quad\text{and}\\
    \xbf'
    &=
      e(X,Y|W')
      =
      \vect{H(X|W')\\H(Y|W')\\I(X:Y|W')}
  \end{align*}
  We assume that the alphabets $\Wsf$ and $\Wsf'$ of $W$ and $W'$,
  respectively, are disjoint. Denote by $p$ and $p'$ the distributions
  of $(X,Y,W)$ and $(X,Y,W')$ respectively.

  For $\lambda\in[0,1]$ define a new
  extension $(X,Y,W_{\lambda})$, where the alphabet of $W_{\lambda}$
  is $\Wsf\sqcup\Wsf'$. The distribution of $(X,Y,W_{\lambda})$ is
  defined by
  \begin{equation*}
    p_{\lambda}(\xsf,\ysf,\wsf):=
    \begin{cases}
      \lambda\cdot p(\xsf,\ysf,\wsf)&\text{if $\wsf\in\Wsf$}\\
      (1-\lambda)\cdot p'(\xsf,\ysf,\wsf)&\text{if $\wsf\in\Wsf'$}\\
    \end{cases}
  \end{equation*}
  Using the chain rule we obtain  for every $Z$ from the list $\{\bullet,X,Y,XY\}$
  the  equality
  \begin{equation*}
    H(ZW_{\lambda})
    =
    \lambda\cdot H(ZW)+(1-\lambda)H(ZW')+h(\lambda),\\
  \end{equation*}
where $h(\lambda)$ is the entropy of the
  binary variable with the distribution $(\lambda,1-\lambda)$.
  Therefore, 
  \begin{align}
    H(Z|W_{\lambda})=  \lambda\cdot H(Z|W)+(1-\lambda)\cdot H(Z|W')
  \end{align}
  In particular,
  \begin{equation}
    e(X,Y|W_{\lambda})
    =
    \lambda\cdot e(X,Y|W) + (1-\lambda)\cdot e(X,Y|W')
    =
    \lambda\cdot\xbf + (1-\lambda)\cdot\xbf'
  \end{equation}
\end{proof}

\subsubsection{Bounds on the extension profile}
\begin{lemma}\label{l:ext-shannon-bounds}
  For any pair of random variables $\Pbf:=(X,Y)$ and any triple 
  $(x,y,z)\in\ext(\Pbf)$ we have
  \begin{enumerate}[label=(\roman*)]
  \item\label{i:0xX}
    $0\leq x\leq H(X)$
  \item\label{i:0yY}
     $0\leq y\leq H(Y)$
  \item\label{i:L00}
     $0\leq z\leq \min\set{x,y}\leq\min\set{H(X),H(Y)}$
  \item\label{i:L10}
     $0\leq x-z\leq H(X|Y)$
  \item\label{i:L01}
     $0\leq y-z\leq H(Y|X)$
  \end{enumerate}
\end{lemma}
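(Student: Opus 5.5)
Every point of $\ext(\Pbf)$ has the form $(x,y,z)=e(X,Y|W)=\bigl(H(X|W),H(Y|W),I(X:Y|W)\bigr)$ for some extension $(X,Y,W)$. So it suffices to check each inequality directly for such a triple, using only Shannon inequalities for $(X,Y,W)$. No convexity or compactness from Proposition~\ref{p:ext-compact-convex} is needed.

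For \ref{i:0xX} and \ref{i:0yY}, nonnegativity of conditional entropy gives $x=H(X|W)\geq0$. The fact that conditioning does not increase entropy, i.e.\ $I(X:W)\geq0$, gives $H(X|W)\leq H(X)$. The bounds for $y$ follow symmetrically.

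For \ref{i:L00}, $z=I(X:Y|W)\geq0$ is a Shannon inequality. We also have $z=H(X|W)-H(X|YW)\leq H(X|W)=x$ since $H(X|YW)\geq0$, and symmetrically $z\leq y$. Combining with \ref{i:0xX}--\ref{i:0yY} yields $\min\{x,y\}\leq\min\{H(X),H(Y)\}$.

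For \ref{i:L10}, note that $x-z=H(X|W)-I(X:Y|W)=H(X|YW)$. This is nonnegative, and it is at most $H(X|Y)$ because $I(X:W|Y)\geq0$. Item \ref{i:L01} is the same argument with $X$ and $Y$ swapped. I don't expect any real obstacle here; the only point requiring care is writing $x-z$ as $H(X|YW)$, which is what turns \ref{i:L10} and \ref{i:L01} into instances of "conditioning reduces entropy".
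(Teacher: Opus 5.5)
Your proposal is correct and follows the paper's proof. The paper likewise substitutes $x=H(X|W)$, $y=H(Y|W)$, $z=I(X:Y|W)$ and observes that $x-z=H(X|YW)$ and $y-z=H(Y|XW)$, so that every bound becomes a Shannon inequality. You simply spell out which Shannon inequality gives each bound, which the paper leaves implicit.
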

Define $\ext_{\max}(\Pbf)$ as a subset of
$\Rbb^{3}$ of all points satisfying inequalities in the conclusion of
the lemma. Then the lemma asserts that for any $\Pbf$
\[
  \ext(\Pbf)\subset\ext_{\max}(\Pbf)
\]

\begin{proof}
  After the substitution 
  \[
    x=H(X|W),
    \quad
    y=H(Y|W),
    \quad
    z=I(X:Y|W)
  \]
  we observe
  \[
    x-z=H(X|YW),\quad y-z=H(Y|XW)
  \]
  Thus the required inequalities become Shannon inequalities. 
\end{proof}

\subsection{Shape function}
For a pair $\Pbf=(X,Y)$ define its \emph{shape function} as
\begin{equation*}
  \Lcal(\Pbf):=\Lambda(\ext(\Pbf))
\end{equation*}
and \emph{conditional shape function} for an extension $(X,Y,Z)$ by
\begin{equation*}
  \Lcal(\Pbf|Z):=\Lambda(\ext(\Pbf|Z))
\end{equation*}
Alternatively we may write
\begin{align*}
  \Lcal(\Pbf)
  :\!\!&=
    \sup\set{\alpha\cdot H(X|W)+\beta\cdot H(Y|W)-I(X:Y|W)}\\
  &=
    \sup\set{H(XY|W)-(1-\alpha)H(X|W)-(1-\beta)H(Y|W)}\\
  \Lcal(\Pbf|Z)
  :\!\!&=
    \sup\set{\alpha\cdot H(X|ZW)+\beta\cdot H(Y|ZW)-I(X:Y|ZW)}\\
  &=
    \sup\set{H(XY|ZW)-(1-\alpha)H(X|ZW)-(1-\beta)H(Y|ZW)}
\end{align*}
where suprema are over all extensions $W$ extending the pair $(X,Y)$
in the first case and extending the triple $(X,Y,Z)$ in the second.

By Proposition~\ref{p:legendre}\ref{i:homo} and~\ref{i:homogen} we
conclude
\[
  \Lcal(\Pbf|Z)=\sum_{\zsf\in\Zsf}p_{Z}(\zsf)\Lcal(\Pbf|\zsf)
\]
where $\Zsf$ is the alphabet and $p_{Z}$ is the distribution of $Z$.

By Proposition~\ref{p:legendre}\ref{i:convex} the  shape
function $\Lcal(X,Y)$ (and the conditional shape function $\Lcal(X,Y|Z)$) is a
convex function on $\Rbb^{2}$ for any pair $(X,Y)$ (and, respectively, for any triple $(X,Y,Z)$).
\medskip

In the next sections we derive lower and upper bounds for the shape
function of the pair. In Section~\ref{s:applications} we investigate for
which pairs of random variables the lower and upper bounds on the
shape function are attained.

\subsection{Upper bound for the shape function}
In the next proposition we show that the shape function is piecewise
affine outside of the triangle $[0,1]^{2}\cup\set{\alpha+\beta\leq1}$
in the $(\alpha,\beta)$-plane and derive the upper bound within the triangle.
\begin{proposition}[Upper bound for the shape function]\label{p:L-upper}
  For any pair of random variables $\Pbf=(X,Y)$ we have
  \begin{enumerate}[label=(\roman*)]
  \item\label{i:ll-triangle}
    For any $\alpha,\beta\geq0$ such that $\alpha+\beta\leq1$
    \begin{equation*}
      \Lcal(\Pbf)(\alpha,\beta)\leq\alpha\cdot H(X|Y) + \beta\cdot H(Y|X)
    \end{equation*}
  \item\label{i:ur-triangle}
    For any $\alpha,\beta\geq0$ such that $\alpha+\beta\geq1$
    \begin{equation*}
      \Lcal(\Pbf)(\alpha,\beta)
      =\alpha\cdot H(X) + \beta\cdot H(Y)-I(X:Y)
    \end{equation*}
  \end{enumerate}
\end{proposition}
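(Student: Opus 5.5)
Both parts reduce to rewriting the linear functional $l_{\alpha\beta}$ evaluated at a point $e(X,Y|W)\in\ext(\Pbf)$ in terms of Shannon quantities. Set $x=H(X|W)$, $y=H(Y|W)$, $z=I(X:Y|W)$. As in the proof of Lemma~\ref{l:ext-shannon-bounds}, $x-z=H(X|YW)$ and $y-z=H(Y|XW)$.

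For part~\ref{i:ll-triangle} we decompose
\[
  \alpha x+\beta y-z=\alpha(x-z)+\beta(y-z)-(1-\alpha-\beta)z .
\]
When $\alpha,\beta\geq0$ and $\alpha+\beta\leq1$, all three coefficients $\alpha$, $\beta$, $1-\alpha-\beta$ are nonnegative. Lemma~\ref{l:ext-shannon-bounds}\ref{i:L00},\ref{i:L10},\ref{i:L01} give $z\geq0$, $x-z\leq H(X|Y)$ and $y-z\leq H(Y|X)$. Hence every point of $\ext(\Pbf)$ satisfies $l_{\alpha\beta}\leq\alpha H(X|Y)+\beta H(Y|X)$, and taking the supremum gives the claim. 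The bound is attained, for example by $W=\bullet$ when $\alpha+\beta=1$, but only the inequality is needed here.

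For part~\ref{i:ur-triangle} we prove the two inequalities separately.

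\textbf{Lower bound.} Take $W=\bullet$, so $e(X,Y|W)=e(X,Y)$. This gives $\Lcal(\Pbf)(\alpha,\beta)\geq\alpha H(X)+\beta H(Y)-I(X:Y)$.

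\textbf{Upper bound.} We need
\[
  \alpha H(X|W)+\beta H(Y|W)-I(X:Y|W)\leq\alpha H(X)+\beta H(Y)-I(X:Y)
\]
for every extension. The difference of the right side minus the left side is
\[
  \alpha I(X:W)+\beta I(Y:W)-\bigl(I(X:Y)-I(X:Y|W)\bigr)=\alpha I(X:W)+\beta I(Y:W)-I(X:Y:W).
\]
So we must show $I(X:Y:W)\leq\alpha I(X:W)+\beta I(Y:W)$ whenever $\alpha+\beta\geq1$ and $\alpha,\beta\geq0$. Since the right side is nondecreasing in $\alpha$ and $\beta$, it suffices to treat $\alpha+\beta=1$. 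Because the right side is then a convex combination of $I(X:W)$ and $I(Y:W)$, it is enough to show
\[
  I(X:Y:W)\leq\min\set{I(X:W),I(Y:W)}.
\]
This follows from the identity $I(X:Y:W)=I(X:W)-I(X:W|Y)\leq I(X:W)$, and symmetrically $I(X:Y:W)\leq I(Y:W)$.

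No step is a real obstacle. The only point needing care is the reduction in part~\ref{i:ur-triangle}: the monotonicity in $\alpha,\beta$ uses that mutual informations are nonnegative, and the interaction information $I(X:Y:W)$ may be negative, which is harmless here since only an upper bound on it is needed.
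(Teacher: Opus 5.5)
Your proof is correct and follows essentially the paper's route. Part~(ii) matches the paper: bound $I(X:Y:W)$ by $\min\{I(X:W),I(Y:W)\}$, rearrange, and use $W=\bullet$ for the reverse inequality; your reduction to $\alpha+\beta=1$ just makes explicit the nonnegativity of mutual information that the paper's ``consequently'' uses silently. For part~(i), the paper evaluates $\Lcal(\Pbf)$ at $(0,0)$, $(1,0)$, $(0,1)$ and invokes convexity, which is your pointwise decomposition $\alpha(x-z)+\beta(y-z)-(1-\alpha-\beta)z$ written at the level of support functions, since $l_{\alpha\beta}=\alpha l_{10}+\beta l_{01}+(1-\alpha-\beta)l_{00}$.
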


\begin{proof}\ 
  \begin{enumerate}[label=(\roman*),leftmargin=0em,itemindent=1.5em]
  \item Note that the lower bound in
    Lemma~\ref{l:ext-shannon-bounds}\ref{i:L00} and the upper bounds
    in Lemma~\ref{l:ext-shannon-bounds}\ref{i:L10} and \ref{i:L01}
    immediately imply
    \begin{equation*}
      \Lcal(\Pbf)(0,0)\leq 0,
      \qquad
      \Lcal(\Pbf)(1,0)\leq H(X|Y),
      \qquad
      \Lcal(\Pbf)(0,1)\leq H(Y|X)
    \end{equation*}
    On the other hand, choosing extension $W$ to be one from the list $\set{XY,X,Y}$
    we obtain points
    \begin{equation*}
      \vect{0\\0\\0},\vect{H(X|Y)\\0\\0},\vect{0\\H(Y|X)\\0}\in\ext(X,Y)
    \end{equation*}
    which implies
    \begin{equation*}
      \Lcal(\Pbf)(0,0)= 0,
      \quad
      \Lcal(\Pbf)(1,0)= H(X|Y),
      \quad
      \Lcal(\Pbf)(0,1)= H(Y|X)
    \end{equation*}
    We use convexity of the shape function to derive the first conclusion
    of the proposition.
    For $\alpha+\beta\leq1$ we have the convex decomposition
    \begin{equation*}
      \vect{\alpha\\\beta}
      =
      \alpha\vect{1\\0}+\beta\vect{0\\1}+(1-\alpha-\beta)\vect{0\\0}
    \end{equation*}
    Therefore
    \begin{align*}
      \Lcal(\Pbf)(\alpha,\beta)
      &\leq
        \alpha\cdot\Lcal(\Pbf)(1,0)+\beta\cdot\Lcal(\Pbf)(0,1)+
        (1-\alpha-\beta)\Lcal(\Pbf)(0,0)\\
      &=
        \alpha\cdot H(X|Y) + \beta\cdot H(Y|X)
    \end{align*}
  \item
    To prove the second conclusion we first observe that for
    $\alpha,\beta>0$, $\alpha+\beta\geq1$ and every $W$ holds
    \[
      I(X:Y:W)\leq I(X:W)
      \quad\text{and}\quad
      I(X:Y:W)\leq I(Y:W)
    \]
    and consequently
    \[
      I(X:Y:W)\leq \alpha\cdot I(X:W)+\beta\cdot I(X:W) 
    \]
    Now we use substitutions
    \begin{align*}
      I(X:Y:W)&=I(X:Y)-I(X:Y|W),\\
      I(X:W)&=H(X)-H(X|W),\\
      I(Y:W)&=H(Y)-H(Y|W) 
    \end{align*}
    to get
    \[
      I(X:Y)-I(X:Y|W)\leq  \alpha\bigl(H(X)-H(X|W)\bigr)+\beta\bigl(H(Y)-H(Y|W)\bigr)
    \]
    After rearranging the terms we obtain
    \[
      \alpha\cdot H(X|W) + \beta\cdot H(Y|W)-I(X:Y|W)\leq\alpha\cdot
      H(X) + \beta\cdot H(Y)-I(X:Y)
    \]
    Taking supremum over $W$'s yields
    \begin{equation*}
      \Lcal(\Pbf)(\alpha,\beta)\leq\alpha\cdot H(X) + \beta\cdot H(Y)-I(X:Y)
    \end{equation*}
    On the other hand, the choice $W=\bullet$ gives
    $\bigl(H(X),H(Y),I(X:Y)\bigr)\in\ext(\Pbf)$, which implies the
    opposite inequality
    \[
      \Lcal(\Pbf)(\alpha,\beta)\geq\alpha\cdot H(X) + \beta\cdot H(Y)-I(X:Y)
    \]
    This finishes the proof of the second assertion.
  \end{enumerate}
\end{proof}
In a similar manner one can establish that the shape function is also
piecewise affine in the II,III and IV quadrants in the $(\alpha,\beta)$-plane.
We summarize the bounds on the shape function in the Figure~\ref{fig:shape-upper}.
\begin{figure}[htb]
  \begin{center}
    \begin{lpic}[l(0em),draft,clean]{shape(0.7)}
      \lbl[b]{135,22;$\alpha$}
      \lbl[l]{22,108;$\beta$}
      
      \lbl{8,10;$\Lcal(\Pbf)=0$}
      \lbl[W]{60,60,-45;\footnotesize{\color{blue}%
          $\Lcal(\Pbf)=\alpha\cdot H(X)+\beta\cdot H(Y)-I(X:Y)$}}
      \lbl[W]{47,46,-45;\scriptsize{\color{magenta}%
          $\Lcal(\Pbf)\leq\alpha\cdot H(X|Y) + \beta\cdot H(Y|X)$}}
      \lbl[W]{55,10;\footnotesize{\color{cyan}%
          $\Lcal(\Pbf)=\alpha\cdot H(X|Y)$}}
      \lbl[Wr]{110,8,-90;\footnotesize{\color{orange}%
          $\Lcal(\Pbf)=\alpha\!\cdot\!H(X)\!+\!\beta\!\cdot\!I(X\!:\!Y)\!-\!I(X\!:\!Y)$}}
      \lbl[W]{12,55,-90;\footnotesize{\color{cyan}%
          $\Lcal(\Pbf)=\beta\cdot H(Y|X)$}}
      \lbl[lW]{15,100;\footnotesize{\color{orange}%
          $\Lcal(\Pbf)=\alpha\!\cdot\!I(X\!:\!Y)\!+\!\beta\!\cdot\!H(Y)\!-\!I(X\!:\!Y)$}}
    \end{lpic}
  \caption{Shape function is convex on the plane and piecewise affine outside of the triangle
    $\set{\alpha,\beta\geq0,\;\alpha+\beta\leq1}$. On the triangle is
    satisfies the upper bound from Proposition~\ref{p:L-upper}.}\label{fig:shape-upper}
  \end{center}
\end{figure}

Outside the square $[0,1]^2$ in the $(\alpha,\beta)$-plane, the shape
function $\Lcal(\Pbf)$ is forced by the Shannon bounds and is
piecewise affine, with coefficients determined by the entropy profile
$e(\Pbf)$.  Thus, no additional information is lost if, from now on,
we restrict $\Lcal(\Pbf)$ to $[0,1]^2$.  With this definition,
  by Proposition~\ref{p:legendre}\ref{i:lip}, we have 1-Lipschitz
  dependence of the shape function on the extension profile of the
  pair.
  \begin{corollary}\label{cor:lipschitz}
    For any two pairs $\Pbf$ and $\Pbf'$ we have
    \[
      \|\Lcal(\Pbf)-\Lcal(\Pbf')\|_{\infty}
      \leq
      \dist\bigl(\ext(\Pbf),\ext(\Pbf')\bigr)
    \]
  \end{corollary}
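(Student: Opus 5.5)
The plan is to apply Proposition~\ref{p:legendre}\ref{i:lip} with the compact set $K=[0,1]^{2}$, since the shape function is now regarded as restricted to this square. First, both $\ext(\Pbf)$ and $\ext(\Pbf')$ are compact and convex by Proposition~\ref{p:ext-compact-convex}, so the hypotheses of Proposition~\ref{p:legendre} hold. This gives
\[
  \|\Lcal(\Pbf)-\Lcal(\Pbf')\|_{[0,1]^{2}}
  =\|\Lambda(\ext(\Pbf))-\Lambda(\ext(\Pbf'))\|_{[0,1]^{2}}
  \leq \dist\bigl(\ext(\Pbf),\ext(\Pbf')\bigr)\cdot\rho_{[0,1]^{2}} .
\]

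It remains to compute the $\ell^{\infty}$-radius of the square. For $(\alpha,\beta)\in[0,1]^{2}$ we have
\[
  \|(\alpha,\beta,-1)\|_{\infty}=\max\set{|\alpha|,|\beta|,1}=1 ,
\]
so $\rho_{[0,1]^{2}}=1$. Under the convention that $\|\cdot\|_{\infty}$ for the shape function means the sup norm over $[0,1]^{2}$, this yields exactly the claimed inequality.

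The only point needing care is this convention. On all of $\Rbb^{2}$ the difference of the two shape functions is generally unbounded, since outside the square the functions are affine with coefficients determined by $e(\Pbf)$ and $e(\Pbf')$. The statement must therefore be read with the restriction to $[0,1]^{2}$ made just before the corollary. Apart from that, the argument is a direct specialization of Proposition~\ref{p:legendre}\ref{i:lip}.
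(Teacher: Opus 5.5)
Your proposal is correct and matches the paper's argument: the corollary is obtained by applying Proposition~\ref{p:legendre}\ref{i:lip} with $K=[0,1]^{2}$, where $\rho_{K}=1$. As you note, the sup norm is taken over the square, following the restriction stipulated just before the statement.
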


  By Proposition~\ref{p:L-upper}\ref{i:ur-triangle} function
  $\Lcal(\Pbf)$ is also affine on the upper-right triangle
  $\set{\alpha+\beta\geq1}\cap[0,1]^{2}$.  The interesting behavior,
  not determined by the entropy profile of the pair, occurs on the
  lower-left triangle $\set{\alpha+\beta\leq1}\cap[0,1]^{2}$, where
  the function is upper bounded by
  $\alpha\cdot H(X|Y)+\beta\cdot H(Y|X)$, by
  Proposition~\ref{p:L-upper}\ref{i:ll-triangle}.  It is also worth
  noting that if this upper bound is attained at some interior point
  of the lower-left triangle, then convexity forces the function to
  coincide with the upper bound on the whole triangle.

For every pair $\Pbf=(X,Y)$, we define the function $\Lcal_{\max}(\Pbf)$ on
the square $[0,1]^{2}$ as
\[
  \Lcal_{\max}(\Pbf)(\alpha,\beta)
  \!:=\!
  \max\set{\alpha\!\cdot\! H(X|Y) +
    \beta\!\cdot\!H(Y|X),\alpha\!\cdot\!H(X)
    +\!\beta\!\cdot\! H(Y)-I(X:Y)}
\]
Proposition~\ref{p:L-upper} asserts that for any pair $\Pbf=(X,Y)$ and
any $\alpha,\beta\in[0,1]$
\[
  \Lcal(\Pbf)(\alpha,\beta)\leq\Lcal_{\max}(\Pbf)(\alpha,\beta)
\]

\subsection{Lower bound for the shape function}
\subsubsection{Lower bound for the extension profile}
For any $\Pbf=(X,Y)$, the points
\[
  \vect{0\\0\\0},
  \quad
  \vect{H(X|Y)\\0\\0},
  \quad
  \vect{0\\H(Y|X)\\0},
  \quad
  \vect{H(X)\\H(Y)\\I(X:Y)}
\]
belong to $\ext\Pbf$. They correspond to the choices $W=XY,Y,X,\bullet$,
respectively.
Let $\ext_{\min}(\Pbf)$ denote the convex hull of these four points.
Since $\ext(\Pbf)$ is convex, it must contain $\ext_{\min}(\Pbf)$.  
Together with Lemma~\ref{l:ext-shannon-bounds} and the
  definition of $\ext_{\max}$ thereafter we have the following bounds
  for $\ext$ 
\begin{proposition}\label{p:ext-minmax}
  For every pair $\Pbf$ we have the inclusions
  \[
    \ext_{\min}(\Pbf)\subset\ext(\Pbf)\subset\ext_{\max}(\Pbf)
  \]
  The sets $\ext_{\min}(\Pbf)$ and $\ext_{\max}(\Pbf)$ are completely
  determined by the entropy profile $e(\Pbf)$.
\end{proposition}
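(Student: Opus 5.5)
The plan is to check the two inclusions separately and then observe that both bounding sets are defined purely in terms of the coordinates of $e(\Pbf)$.

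For the left inclusion $\ext_{\min}(\Pbf)\subset\ext(\Pbf)$, I will exhibit the four generating points of $\ext_{\min}(\Pbf)$ as conditional entropy profiles. I take the extensions $W=XY$, $W=Y$, $W=X$ and $W=\bullet$.
\begin{itemize}
\item For $W=XY$ all conditional quantities vanish, giving $(0,0,0)$.
\item For $W=Y$ we get $H(X|Y)$, $H(Y|Y)=0$ and $I(X:Y|Y)=0$, giving $(H(X|Y),0,0)$.
\item $W=X$ gives $(0,H(Y|X),0)$ symmetrically.
\item $W=\bullet$ gives $e(\Pbf)$ itself.
\end{itemize}
So all four points lie in $\ext(\Pbf)$. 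By Proposition~\ref{p:ext-compact-convex}, $\ext(\Pbf)$ is convex, so it contains their convex hull, which by definition is $\ext_{\min}(\Pbf)$.

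The right inclusion $\ext(\Pbf)\subset\ext_{\max}(\Pbf)$ is exactly the content of Lemma~\ref{l:ext-shannon-bounds}, since $\ext_{\max}(\Pbf)$ was defined as the set of all points satisfying the inequalities (i)--(v) there. I only need to recall that each point of $\ext(\Pbf)$ is $e(X,Y|W)$ for some $W$. The lemma then says each such point satisfies those inequalities.

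For the final assertion, I will note that $H(X|Y)=H(X)-I(X:Y)$ and $H(Y|X)=H(Y)-I(X:Y)$. Hence the four vertices of $\ext_{\min}(\Pbf)$, and all the constants $H(X)$, $H(Y)$, $H(X|Y)$, $H(Y|X)$ appearing in the inequalities that define $\ext_{\max}(\Pbf)$, are functions of $e(\Pbf)=(H(X),H(Y),I(X:Y))$ alone.

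There is no real obstacle here. The only substantive input is the convexity of $\ext(\Pbf)$, which is already established.
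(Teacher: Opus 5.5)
Your proposal is correct and follows the paper's own argument. The paper likewise realizes the four vertices via $W=XY,Y,X,\bullet$ and invokes the convexity of $\ext(\Pbf)$ from Proposition~\ref{p:ext-compact-convex} for the left inclusion, and uses Lemma~\ref{l:ext-shannon-bounds} together with the definition of $\ext_{\max}(\Pbf)$ for the right one; your identities $H(X|Y)=H(X)-I(X:Y)$ and $H(Y|X)=H(Y)-I(X:Y)$ make explicit the dependence on $e(\Pbf)$ that the paper leaves implicit.
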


\subsubsection{Rigid pairs}
We say that $\Pbf$ is \emph{extension-rigid}, or simply \emph{rigid}
if $\ext_{\min}(\Pbf)$ is a nondegenerate tetrahedron and its upper
closure coincides with the upper closure of $\ext(\Pbf)$, that is,
\begin{equation*}
  (\ext\Pbf)^{\uparrow}=\ext_{\min}(\Pbf)^{\uparrow}
\end{equation*}
or equivalently
\begin{equation*}
  \tau_{\ext(\Pbf)}=\tau_{\ext_{\min}(\Pbf)}
\end{equation*}
(see \eqref{eq:lower-envelope}).
By Lemma~\ref{l:legendre-injective}, either of the conditions
above is equivalent to
\begin{equation*}
  \Lcal(\Pbf)=\Lambda\bigl(\ext_{\min}(\Pbf)\bigr)=:\Lcal_{\min}(\Pbf)
\end{equation*}

\subsubsection{Lower bound for the shape}
\label{sec:lower-bound-for-shape}
Since the correspondence $S\mapsto\Lambda_{S}$ is monotone, see
Proposition~\ref{p:legendre}\ref{i:monotone}, and since
$\ext_{\min}(\Pbf)\subset\ext(\Pbf)$, it follows that $\Lcal_{\min}(\Pbf)$ is 
a lower bound for $\Lcal(\Pbf)$.

Note that $\ext_{\min}(\Pbf)$ is a polyhedral subset of $\Rbb^{3}$. Therefore,
$\Lcal_{\min}(\Pbf)$ is piecewise affine, more specifically
$\Lcal_{\min}(\Pbf)(\alpha,\beta)$ is equal to the maximum of
$l_{\alpha\beta}$ evaluated on the vertices of $\ext_{\min}(\Pbf)$. Thus, for
$(\alpha,\beta)\in[0,1]^{2}$ we have
\[
  \Lcal_{\min}(\Pbf)(\alpha,\beta)
  =
  \max\set{
    \alpha  H(X|Y),\,
    \beta  H(Y|X),\,
    \alpha H(X)+\beta H(Y)-I(X:Y)
  }
\]
In the non-degenerate case, the break point $(\alpha_{0},\beta_{0})$,
where three affine pieces meet, is
\[
  \alpha_{0}=\frac{H(Y|X)\cdot I(X:Y)}{H(X)\cdot H(Y)-I(X:Y)^{2}},
  \quad
  \beta_{0}=\frac{H(X|Y)\cdot I(X:Y)}{H(X)\cdot H(Y)-I(X:Y)^{2}}
\]
and the value at this point is
\[
  \Lcal_{\min}(\Pbf)(\alpha_{0},\beta_{0})=
  \frac{H(X|Y)\cdot H(Y|X)\cdot I(X:Y)}{H(X)\cdot H(Y)-I(X:Y)^{2}}
\]
The function $\Lcal_{\min}(\Pbf)$ is shown on Figure~\ref{fig:lower}.
\begin{figure}[htb]
  \begin{center}
    \begin{lpic}[l(0em),draft,clean]{shape-lower(1)}
      \lbl[b]{75,7;$\alpha$}
      \lbl[l]{7,76;$\beta$}
      
      \lbl[t]{25,4.5;$\alpha_{0}$}
      \lbl[r]{5,20;$\beta_{0}$}

      \lbl[W]{40,40,-45;
        {\color{blue}%
          $\alpha\cdot H(X)+\beta\cdot H(Y)-I(X:Y)$}}
      \lbl[W]{37,9;{\color{green}{$\alpha\cdot H(X|Y)$}}}
      \lbl[W]{10,35,-90;{\color{cyan}$\beta\cdot H(Y|X)$}}
    \end{lpic}
    \caption{Function $\Lcal_{\min}(\Pbf)$ --- lower bound for
      $\Lcal(\Pbf)$.}\label{fig:lower}
  \end{center}
\end{figure}
  
Thus, we have the following bounds for the shape function.
\begin{proposition}\label{p:shape-minmax}
  For every pair of random variables $\Pbf$ holds
  \begin{equation*}
    \Lcal_{\min}(\Pbf)\leq\Lcal(\Pbf)\leq\Lcal_{\max}(\Pbf)
  \end{equation*}
  Both $\Lcal_{\min}(\Pbf)$ and $\Lcal_{\max}(\Pbf)$ depend only on the
  entropy profile of $\Pbf$.
\end{proposition}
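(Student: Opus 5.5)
The plan is to obtain both inequalities by applying the support-function map $S\mapsto\Lambda_{S}$ to the chain of inclusions in Proposition~\ref{p:ext-minmax}, and then to check that the two outer terms are the explicit functions $\Lcal_{\min}(\Pbf)$ and $\Lcal_{\max}(\Pbf)$ introduced above. Throughout, everything is restricted to the square $[0,1]^{2}$, which is where $\Lcal_{\max}$ was defined.

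\emph{Lower bound.} By Proposition~\ref{p:ext-minmax} we have $\ext_{\min}(\Pbf)\subset\ext(\Pbf)$, and both sets are compact and convex: the first is a convex hull of four points, the second by Proposition~\ref{p:ext-compact-convex}. Monotonicity, Proposition~\ref{p:legendre}\ref{i:monotone}, then gives
\[
  \Lcal_{\min}(\Pbf)=\Lambda\bigl(\ext_{\min}(\Pbf)\bigr)\leq\Lambda\bigl(\ext(\Pbf)\bigr)=\Lcal(\Pbf).
\]
Since $l_{\alpha\beta}$ is linear, its maximum over $\ext_{\min}(\Pbf)$ is attained at one of the four vertices. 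The vertex $(0,0,0)$ contributes $0$. On $[0,1]^{2}$ the other three values are $\alpha H(X|Y)$, $\beta H(Y|X)$ and $\alpha H(X)+\beta H(Y)-I(X:Y)$, and the first two are already nonnegative there. This recovers the displayed formula for $\Lcal_{\min}(\Pbf)$, which involves only $H(X)$, $H(Y)$ and $I(X:Y)$, since $H(X|Y)=H(X)-I(X:Y)$ and $H(Y|X)=H(Y)-I(X:Y)$.

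\emph{Upper bound.} I would use Proposition~\ref{p:L-upper} directly instead of computing $\Lambda(\ext_{\max}(\Pbf))$. Take $(\alpha,\beta)\in[0,1]^{2}$ and split into two cases.
\begin{itemize}
\item If $\alpha+\beta\leq1$, part~\ref{i:ll-triangle} gives $\Lcal(\Pbf)(\alpha,\beta)\leq\alpha H(X|Y)+\beta H(Y|X)$.
\item If $\alpha+\beta\geq1$, part~\ref{i:ur-triangle} gives $\Lcal(\Pbf)(\alpha,\beta)=\alpha H(X)+\beta H(Y)-I(X:Y)$.
\end{itemize}
In both cases $\Lcal(\Pbf)(\alpha,\beta)$ is bounded by one of the two affine expressions, so it is at most their maximum, $\Lcal_{\max}(\Pbf)(\alpha,\beta)$. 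This function again depends only on the entropy profile $e(\Pbf)$.

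Once Proposition~\ref{p:L-upper} is available there is no real obstacle, and the statement is essentially a summary of results already proved. One alternative would be to derive the upper bound as $\Lambda(\ext_{\max}(\Pbf))$ via monotonicity. That route would require checking that the support function of the polytope $\ext_{\max}(\Pbf)$ equals $\Lcal_{\max}(\Pbf)$ on $[0,1]^{2}$, which means a vertex enumeration of the polytope. I avoid it by going through Proposition~\ref{p:L-upper}, which yields exactly the stated bound.
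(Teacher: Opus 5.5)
Your proposal is correct and follows the paper's argument: the lower bound comes from $\ext_{\min}(\Pbf)\subset\ext(\Pbf)$ together with monotonicity of $S\mapsto\Lambda_{S}$, and the upper bound is read off from Proposition~\ref{p:L-upper} on the two triangles of $[0,1]^{2}$ without ever computing $\Lambda(\ext_{\max}(\Pbf))$. Your explicit case split ($\alpha+\beta\leq1$ versus $\alpha+\beta\geq1$) and your vertex evaluation of $\Lcal_{\min}(\Pbf)$ simply spell out steps the paper leaves implicit.
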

At present we do not know whether rigid pairs exist and lower bound is
attained. However, we will show below that if $(X,Y)$ is uniformly
supported on a balanced expander, then $(X,Y)$ is a
rigid pair up to an error $O(\log H(XY))$.  Numerical experiments
indicate that the pair $(X,Y)$ corresponding to a uniformly random
incident point and line in a projective plane over a finite field is
rigid to high precision.

\subsection{Additivity and the chain rule for the shape}
\subsubsection{Additivity}
The collection of all pairs of random variables can be equipped with
the operation of addition via independent copies.
For two, not necessarily jointly distributed, random variables $X$ and
$X'$ define their \emph{independent sum} $X\oplus X'$ as the joint of $X$ and
a copy of $X'$ distributed jointly and independently with $X$. Equivalently, $X\oplus X'$ has distribution $p_X\otimes p_{X'}$.
Further, for two pairs $\Pbf=(X,Y)$ and $\Pbf'=(X',Y')$ define their
\emph{independent sum} as the new pair
\[
  \Pbf\oplus\Pbf':=(X\oplus X',Y\oplus Y')
\]
where the copies of $(X,Y)$ and $(X',Y')$ are taken independently.

The correspondence $\Pbf\mapsto \ext(\Pbf)$ is \emph{not} a
homomorphism with respect to independent sum and Minkowski addition.
For example, the Slepian--Wolf theorem allows to find for many
independent copies of $\Pbf$ extensions that are not scalings of any
extensions for a single $\Pbf$. However, the correspondence between
$\Pbf$ and $\ext(\Pbf)$ is additive on the most essential (lower) part
of the profile, recorded by the shape.

\begin{proposition}
  Let $\Pbf=(X,Y)$ and $\Pbf'=(X',Y')$ be two pairs.
  Then the extension support function is additive with respect to
  independent sums:
  \begin{equation*}
    \Lcal(\Pbf\oplus\Pbf')=\Lcal(\Pbf)+\Lcal(\Pbf')
  \end{equation*}
\end{proposition}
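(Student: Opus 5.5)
The plan is to prove the two inequalities $\Lcal(\Pbf\oplus\Pbf')\geq\Lcal(\Pbf)+\Lcal(\Pbf')$ and $\Lcal(\Pbf\oplus\Pbf')\leq\Lcal(\Pbf)+\Lcal(\Pbf')$ separately, pointwise in $(\alpha,\beta)$. We write the objective as
\[
  F_{\alpha\beta}(X,Y|W)=H(XY|W)-(1-\alpha)H(X|W)-(1-\beta)H(Y|W),
\]
using the second expression for the shape function.

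\emph{Lower bound.} Take extensions $W$ of $(X,Y)$ and $W'$ of $(X',Y')$, and form the triple $(XX',YY',WW')$ with $(X,Y,W)$ independent of $(X',Y',W')$. All conditional entropies then split additively, so
\[
  e(XX',YY'|WW')=e(X,Y|W)+e(X',Y'|W').
\]
Hence $\ext(\Pbf)\oplus\ext(\Pbf')\subset\ext(\Pbf\oplus\Pbf')$. Monotonicity and the homomorphism property, Proposition~\ref{p:legendre}\ref{i:monotone} and~\ref{i:homo}, then give $\Lcal(\Pbf)+\Lcal(\Pbf')\leq\Lcal(\Pbf\oplus\Pbf')$.

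\emph{Upper bound.} This is the main step. Given an arbitrary $W$ extending $(XX',YY')$, I would split via the chain rule, conditioning the first pair on $W$ and the second on $WXY$:
\[
  F_{\alpha\beta}(XX',YY'|W)
  =
  F_{\alpha\beta}(X,Y|W)
  + \bigl[H(X'Y'|WXY)-(1-\alpha)H(X'|WX)-(1-\beta)H(Y'|WY)\bigr].
\]
The first term is at most $\Lcal(\Pbf)(\alpha,\beta)$. For the bracket, when $\alpha,\beta\leq 1$ (the square to which we restricted $\Lcal$), the coefficients $1-\alpha,1-\beta$ are nonnegative. Since conditioning reduces entropy, $H(X'|WX)\geq H(X'|WXY)$ and $H(Y'|WY)\geq H(Y'|WXY)$. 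So the bracket is at most $F_{\alpha\beta}(X',Y'|WXY)$.

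Now $WXY$ is an extension of $(X',Y')$: it is a jointly distributed variable, and nothing beyond that is needed for the definition of $\ext(\Pbf')$. Hence the bracket is at most $\Lcal(\Pbf')(\alpha,\beta)$. Independence of $\Pbf$ and $\Pbf'$ is not even needed in this direction; it is only used for the lower bound. Taking the supremum over $W$ finishes the proof on $[0,1]^2$.

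The delicate point is precisely the sign of the coefficients $1-\alpha$ and $1-\beta$. This is why I would work on $[0,1]^2$, as the paper has agreed to do. If additivity is wanted on all of $\Rbb^2$, I would instead use the explicit affine formulas outside the square (Proposition~\ref{p:L-upper} and the remarks after it). Those formulas are linear in $H(X)$, $H(Y)$, $H(X|Y)$, $H(Y|X)$ and $I(X:Y)$, and each of these is additive under independent sums, so additivity there is immediate.
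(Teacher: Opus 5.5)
Your proof on $[0,1]^2$ is correct and is essentially the paper's proof. For $\geq$ you use the same product-extension inclusion $\ext(\Pbf)\oplus\ext(\Pbf')\subset\ext(\Pbf\oplus\Pbf')$. For $\leq$ you use the same chain-rule split, with one pair conditioned on $W$ and the other on $W$ joined with the first pair, and you use $1-\alpha,1-\beta\ge 0$. The paper takes $W'=W''$ and $W=W''X'Y'$, which is your argument with $\Pbf$ and $\Pbf'$ interchanged.

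Your closing remark about extending to all of $\Rbb^2$, however, is wrong and should be dropped.

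\begin{itemize}
\item Your upper-bound argument already works verbatim whenever $\alpha\le 1$ and $\beta\le 1$.
\item On $\{\alpha,\beta\ge0,\ \alpha+\beta\ge1\}$, Proposition~\ref{p:L-upper}\ref{i:ur-triangle} gives a formula that is additive.
\item The trouble is the region $\{\beta<0,\ \alpha+\beta>1\}$ and its mirror image. There the expression $\alpha H(X)+\beta I(X:Y)-I(X:Y)$ indicated in Figure~\ref{fig:shape-upper} is only an upper bound.
\end{itemize}

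\textbf{Why the bound is not attained.} Tracing the equality cases in the Shannon estimate, equality would require a $W$ with $W\indep X$, $H(Y|XW)=0$ and $I(X:W|Y)=0$. No such $W$ exists, for example, for a doubly symmetric binary source. Each of the four maps $\{0,1\}\to\{0,1\}$ that $Y=f(X,w)$ could use makes the conditional law of $X$ given $(Y,W)$ deterministic or uniform, never equal to that of $X$ given $Y$.

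\textbf{Why additivity fails.} By the strong functional representation lemma of Li and El Gamal, for $n$ i.i.d.\ copies there is $W\indep X^{n}$ with $H(Y^{n}|X^{n}W)=0$ and $H(Y^{n}|W)\le nI(X:Y)+O(\log n)$. Hence
\[
\tfrac1n\Lcal(\Pbf^{\oplus n})(\alpha,\beta)\to\alpha H(X)+(\beta-1)I(X:Y).
\]
This limit strictly exceeds $\Lcal(\Pbf)(\alpha,\beta)$, because the supremum defining $\Lcal(\Pbf)$ is attained by compactness and equality is impossible.

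So the sign issue you identified is not a technicality. Restricting $\Lcal$ to $[0,1]^2$ is exactly what makes the proposition true.
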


\begin{proof}
  It will be convenient to use the following equivalent form of the
  definition of the extension support function:
  \[
    \Lcal(X,Y)(\alpha,\beta)
    =
    \sup_{W}\set{H(XY|W)-(1-\alpha)H(X|W)-(1-\beta)H(Y|W)}
  \]
  For any extensions $(X,Y,W)$ and $(X',Y',W')$ we can take 
  the extension $(X\oplus X',Y\oplus Y',W\oplus W')$ of
  $\Pbf\oplus\Pbf'$. Since  we have
  \[
    e(\Pbf|W) + e(\Pbf'|W')
    =
    e(\Pbf\oplus\Pbf'|W\oplus W')
  \]
  it follows that
  \[
    \ext(\Pbf)\oplus\ext(\Pbf')\subset\ext(\Pbf\oplus\Pbf')
  \]
  Therefore, by additivity and monotonicity of the support function,
  Proposition~\ref{p:legendre}\ref{i:monotone} and~\ref{i:homo}, we have
  \[
    \Lcal(\Pbf)+\Lcal(\Pbf')\leq \Lcal(\Pbf\oplus\Pbf')
  \]
  It remains to prove the opposite inequality.

  Let
  \[
    X'':=X\oplus X',
    \quad
    Y'':=Y\oplus Y',
    \quad
    \Pbf'':=\Pbf\oplus\Pbf'=(X'',Y'')
  \]

  Fix $\alpha,\beta\in[0,1]$ and let an extension $(X'',Y'',W'')$ be
  such that
  \begin{align*}
    \Lcal(\Pbf'')&(\alpha,\beta)
    =l_{\alpha,\beta}(\xbf'')\\
    &=
      H(X''Y''|W'')-(1-\alpha) H(X''|W'')-(1-\beta) H(Y''|W'')
  \end{align*}
  Define $W:=W''X'Y'$ and $W':=W''$. Then
  \begin{align*}
    H(X''Y''|W'')
    &=
      H(XY\oplus X'Y'|W'')\\
    &=
      H(X'Y'|W'')+H(X,Y|W''X'Y')\\
    &=H(X'Y'|W')+H(X,Y|W)
  \end{align*}
  Also
  \begin{align*}
    H(X''|W'')
    &=
      H(X\oplus X'|W'')\\
    &=
      H(X'|W'')+H(X|W''X')\\
    &\geq
      H(X'|W'')+H(X|W''X'Y')\\
    &=
      H(X'|W')+H(X|W)
  \end{align*}
  Similarly
  \begin{equation*}
    H(Y''|W'')\geq H(Y'|W')+H(Y|W)
  \end{equation*}
  Therefore for any $0\leq\alpha,\beta\leq1$
  \begin{align*}
    \Lcal(\Pbf'')(\alpha,\beta)
    &=
      H(X''Y''|W'')-(1-\alpha) H(X''|W'')-(1-\beta) H(Y''|W'')\\
    &\leq
    H(XY|W)-(1-\alpha) H(X|W)-(1-\beta) H(Y|W)\\
    &\quad
      +H(X'Y'|W')-(1-\alpha) H(X'|W')-(1-\beta) H(Y'|W')\\
    &\leq
    \Lcal(\Pbf)(\alpha,\beta)+\Lcal(\Pbf')(\alpha,\beta)
  \end{align*}
\end{proof}

\subsubsection{Chain rule}
We regard the identities in the next proposition as the \emph{chain rule for
the extension profile and the shape}.

\begin{proposition}[Chain rule for the extension and the shape]
  For any pair $\Pbf=(X,Y)$  and any extension $(X,Y,Z)$ it holds
  \begin{align*}
    \ext(XZ,YZ)&=\ext(X,Y|Z) \oplus \ext(Z,Z)\\
    \Lcal(XZ,YZ)&=\Lcal(X,Y|Z) + \Lcal(Z,Z)
  \end{align*}
\end{proposition}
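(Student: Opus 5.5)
Both identities follow from a direct computation of $e(XZ,YZ|W)$, plus a coupling construction for the reverse inclusion. The identity for shapes is then immediate from Proposition~\ref{p:legendre}\ref{i:homo} applied to the identity for extension profiles. So everything reduces to the first identity.

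First I describe $\ext(Z,Z)$. For any extension we have $e(Z,Z|W)=H(Z|W)\cdot(1,1,1)$. Taking $W=\bullet$ gives $t=H(Z)$ and $W=Z$ gives $t=0$, and convexity (Proposition~\ref{p:ext-compact-convex}) fills in the rest. Hence $\ext(Z,Z)=\set{t(1,1,1)\st 0\leq t\leq H(Z)}$. Next, for any extension $(X,Y,Z,W)$ the chain rule gives $H(XZ|W)=H(Z|W)+H(X|ZW)$, $H(YZ|W)=H(Z|W)+H(Y|ZW)$ and $I(XZ:YZ|W)=H(Z|W)+I(X:Y|ZW)$. Therefore
\[
  e(XZ,YZ|W)=e(X,Y|ZW)+H(Z|W)\cdot(1,1,1).
\]
The first summand lies in $\ext(X,Y|Z)$ by the second definition of the conditional profile, and the second lies in $\ext(Z,Z)$. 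This proves $\ext(XZ,YZ)\subset\ext(X,Y|Z)\oplus\ext(Z,Z)$.

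For the reverse inclusion, fix a point $\ebf\in\ext(X,Y|Z)$ and $t\in[0,H(Z)]$. I must find $W'$ with $e(X,Y|ZW')=\ebf$ and $H(Z|W')=t$. This is the main obstacle, since the given witness $W$ may carry a lot of information about $Z$. I plan to use the first definition $\ext(X,Y|Z)=\bigoplus_{\zsf}p_Z(\zsf)\ext(X,Y|\zsf)$, writing $\ebf=\sum_{\zsf}p_Z(\zsf)e(X,Y|\zsf,W_{\zsf})$, where each $W_{\zsf}$ extends the conditional pair $(X,Y|\zsf)$. Let $\Wsf_{\zsf}$ denote the alphabet of $W_{\zsf}$ and $q_{\zsf}$ its marginal distribution. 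Define $W^{\circ}=(V_{\zsf}\st\zsf\in\Zsf)$ as follows. Conditional on $Z=\zsf$, the coordinate $V_{\zsf}$ is drawn jointly with $(X,Y)$ as $W_{\zsf}$ is, and every other coordinate $V_{\zsf'}$ is drawn from $q_{\zsf'}$ independently of everything else. Then, given $Z=\zsf$, the tuple $W^{\circ}$ has distribution $\bigotimes_{\zsf'}q_{\zsf'}$, which does not depend on $\zsf$. Hence $W^{\circ}\indep Z$ and $H(Z|W^{\circ})=H(Z)$. Moreover, conditional on $\zsf$, the triple $(X,Y,W^{\circ})$ is $(X,Y,W_{\zsf})$ together with independent noise. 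So $e(X,Y|\zsf,W^{\circ})=e(X,Y|\zsf,W_{\zsf})$, and averaging over $\zsf$ gives $e(X,Y|ZW^{\circ})=\ebf$.

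Finally, I lower $H(Z|\cdot)$ to the prescribed $t$. Let $B$ be a Bernoulli$(\lambda)$ variable independent of $(X,Y,Z,W^{\circ})$. Set $W'=(B,W^{\circ},Z)$ when $B=1$ and $W'=(B,W^{\circ})$ when $B=0$. Conditioning on $Z$ absorbs the extra copy of $Z$, and $B$ is independent of $(X,Y,Z,W^{\circ})$, so $e(X,Y|ZW')=e(X,Y|ZW^{\circ})=\ebf$. Also $H(Z|W')=(1-\lambda)H(Z)$, which sweeps out all of $[0,H(Z)]$ as $\lambda$ ranges over $[0,1]$. Thus $\ebf+t(1,1,1)=e(XZ,YZ|W')\in\ext(XZ,YZ)$, completing the set identity. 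The shape identity follows via $\Lambda_{S\oplus S'}=\Lambda_S+\Lambda_{S'}$.
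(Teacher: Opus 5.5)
Your proof is correct. The forward inclusion matches the paper's: both rest on $e(XZ,YZ|W)=e(X,Y|ZW)+H(Z|W)(1,1,1)$. Your reverse inclusion, however, takes a different and more careful route.

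The paper takes arbitrary witnesses $(X,Y,Z,U)$ and $(Z,Z,V)$ and glues them by adhesivity so that $XYU\indep V|Z$. It then asserts $H(AZ|UV)=H(A|ZU)+H(Z|V)$. Conditional independence gives only $H(A|ZUV)=H(A|ZU)$. The remaining equality $H(Z|UV)=H(Z|V)$ requires $I(Z:U|V)=0$, which fails in general. For example, take $X=Y=\bullet$, $U=Z$, $V=\bullet$: the glued variable gives $e(XZ,YZ|UV)=0$ instead of $H(Z)(1,1,1)$. So, as written, the paper's step is incomplete.

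This is exactly the obstacle you flagged, that the witness may carry information about $Z$, and your product witness $W^{\circ}$ removes it. It preserves every conditional profile $e(X,Y|\zsf W^{\circ})$ while making $W^{\circ}\indep Z$. Once that holds, any $V$ glued conditionally independently over $Z$ satisfies $W^{\circ}\indep ZV$. Hence $I(Z:W^{\circ}|V)=0$, and the paper's identity becomes valid.

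Your Bernoulli erasure is an explicit gluing of this kind for the one-parameter family that sweeps the segment $\ext(Z,Z)$, so you never need to invoke adhesivity. The paper's route is shorter and gives a template for gluing general witnesses, but it needs your independence step inserted to be correct. Yours is complete and entirely elementary.
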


\begin{proof}
  We will prove the statement on the level of extension profiles; the
  identity for shapes then follows from
  additivity of the partial support function with respect to Minkowski
  sums, Proposition~\ref{p:legendre}\ref{i:homo}.  Observe that
  \begin{equation*}
    H(AZ|W)=H(A|ZW)+H(Z|W),
    \quad\text{for each A in the list $\set{X,Y,XY}$}
  \end{equation*}
  Therefore, for any extension $(XZ,YZ,W)$
  \begin{equation}
    e(XZ,YZ|W)
    =
    e(X,Y|ZW) + e(Z,Z|W)
  \end{equation}
  The summands in the right-hand side belong to $\ext(X,Y|Z)$ and $\ext(Z,Z)$,
  respectively. Therefore
  \begin{equation}
    \ext(XZ,YZ)\subset\ext(X,Y|Z)\oplus\ext(Z,Z)
  \end{equation}
  To prove the opposite inclusion, consider arbitrary extensions
  $(X,Y,Z,U)$ and $(Z,Z,V)$.
  Using the adhesivity property of entropic polymatroids over the common
  variable $Z$, see~\cite{matus2007infinitely}, we may realize an extension
  $(X,Y,Z,U,V)$ such that $XYU\indep V|Z$. 
  Then, for each $A$ in the list $\set{X,Y,XY}$, we have
  \[
    H(AZ|UV)=H(A|ZU)+H(Z|V).
  \]
  Consequently,
  \[
    e(X,Y|ZU)+e(Z,Z|V)
    =
    e(XZ,YZ|UV)
    \in
    \ext(XZ,YZ)
  \]
  and
  \begin{equation*}
    \ext(X,Y|Z)\oplus\ext(Z,Z)\subset\ext(XZ,YZ)
  \end{equation*}
  This finishes the proof for the chain rule for extension
  profiles. Applying support function operator and using its
  additivity, Proposition~\ref{p:legendre}\ref{i:homo}, we obtain the
  chain rule for the shape function.
\end{proof}

\subsection{Examples: a few simple cases}
Below we list shape functions for some simple pairs. We would like to remind
here that we restrict shape function on the square $[0,1]^{2}$ in
$(\alpha,\beta)$-plane. 

\begin{enumerate}
\item 
  For $\Pbf=(X,X)$ we have
  \[
    \Lcal(\Pbf)(\alpha,\beta)=\max\set{0, (\alpha+\beta-1) H(X)}
  \]
\item
  For $\Pbf=(X,\bullet)$ we have
  \[
    \Lcal(\Pbf)(\alpha,\beta)=\alpha\cdot H(X)
  \]
\item
  If $(X,Y)$ is such that $X\indep Y$, then
  \[
    (X,Y) = (X,\bullet)\oplus(\bullet,Y)
  \]
  Therefore
  \[
    \Lcal(\Pbf)(\alpha,\beta)=\alpha\cdot H(X)+\beta\cdot H(X)
  \]
\item
  For pair of independent variables, $\Pbf=(X,Y)$, $X\indep Y$, we
  have
  \[
    \Lcal(\Pbf)(\alpha,\beta)=\alpha\cdot H(X)+\beta\cdot H(Y)
  \]
\item
  If $\Pbf=(X,Y)$ is such that $H(X|Y)=0$, then
  \[
    \Lcal(\Pbf)(\alpha,\beta)
    =
    \max\set{
      \beta H(Y|X),\,
      (\alpha-1)H(X)+\beta H(Y)
    }
  \]
\end{enumerate}
In each of these cases, Shannon inequalities determine the shape function
completely from the entropy profile of the pair. The equalities above
can be easily deduced directly from the definitions.

\section{Extension profiles of bipartite graphs}\label{s:graphs}
\subsection{Graphs and notation}
\label{s:graphs-notations}
For a biregular bipartite graph $\Gsf=(\Xsf\sqcup\Ysf;\Esf)$ and its
subgraph $\Hsf$, denote by $\Xsf_{\Hsf}$, $\Ysf_{\Hsf}$ and
$\Esf_{\Hsf}$ the left part, the right parts, and the edge-set of $\Hsf$,
respectively. By a subgraph we will always mean a non-empty subgraph
without isolated vertices. In particular, both the left and right parts will be
non-empty. We also use the following notation
\begin{align*}
  [\Xsf]
  &:=\log|\Xsf|
  &[\Xsf_{\Hsf}]
  &:=\log|\Xsf_{\Hsf}|\\
  [\Ysf]
  &:=\log|\Ysf|
  &[\Ysf_{\Hsf}]
  &:=\log|\Ysf_{\Hsf}|\\
  [\Esf]
  &:=\log|\Esf|
  &[\Esf_{\Hsf}]
  &:=\log|\Esf_{\Hsf}|\\
  [\Xsf:\Ysf]
  &:=[\Xsf]+[\Ysf]-[\Esf]
  &[\Xsf_{\Hsf}:\Ysf_{\Hsf}]
  &:=[\Xsf_{\Hsf}]+[\Ysf_{\Hsf}]-[\Esf_{\Hsf}]\\
\end{align*}
By $d_{1}(\Gsf)$ and $d_{2}(\Gsf)$ we denote the left and right
degrees of $\Gsf$. By $\lambda_{i}(\Gsf)$ we denote the eigenvalues of
$\Gsf$ indexed from 1 in the decreasing order and counted with
multiplicities. That is, $\lambda_{1}(\Gsf)$ is the largest eigenvalue
and $\lambda_{2}(\Gsf)$ is the second largest. Then, due to
biregularity of the graph, we have
\begin{align*}
  \log d_{1}(\Gsf)
  &=
    [\Ysf|\Xsf]=[\Esf]-[\Xsf]\\
  \log d_{2}(\Gsf)
  &=[\Xsf|\Ysf]=[\Esf]-[\Ysf]\\
  \log\lambda_{1}(\Gsf)
  &=
    \tfrac12[\Xsf|\Ysf]+\tfrac12[\Ysf|\Xsf]
\end{align*}
We call a biregular bipartite graph $\Gsf$ \emph{admissible} if it is
sufficiently large and not too close to a complete bipartite graph, in
the sense that
\[
  [\Esf]\geq 3
  \quad\text{and}\quad
  [\Xsf:\Ysf]>\log 2.
\]

\subsection{Expanders}\label{s:expanders}
The second largest eigenvalue of
  an admissible graph can not be arbitrarily small. The following
  proposition follows from~\cite[corollary~4]{hoholdt2012eigenvalues}.
  \begin{proposition}\label{p:lower_bound_lambda_2}
    For any admissible graph
    $G:=(\Xsf\sqcup\Ysf,\Esf)$  it holds
    \begin{equation*}
      2\log\lambda_{2}
      \geq
      \log\max\set{d_{1},d_{2}}- \log 2
    \end{equation*}
    where $\lambda_{2}$ is the second largest eigenvalue and $d_{i}$'s
    are the degrees of $\Gsf$.
  \end{proposition}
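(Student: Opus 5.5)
The plan is to derive the bound from a spectral identity, the trace of $A^{2}$, where $A$ is the adjacency matrix of $\Gsf$, rather than to reprove the cited result of Høholdt--Justesen. Let $n_{1}=|\Xsf|$, $n_{2}=|\Ysf|$, $m=|\Esf|=n_{1}d_{1}=n_{2}d_{2}$, and assume without loss of generality $n_{1}\le n_{2}$, so that $d_{1}\ge d_{2}$. The spectrum of a bipartite graph is symmetric: it consists of $\pm\sigma_{i}$ for the singular values $\sigma_{1}\ge\sigma_{2}\ge\dots\ge\sigma_{n_{1}}\ge0$ of the biadjacency matrix $B$ ($n_{1}\times n_{2}$), together with $n_{2}-n_{1}$ zeros. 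Biregularity gives $\sigma_{1}=\sqrt{d_{1}d_{2}}=\lambda_{1}$. The second largest eigenvalue is $\lambda_{2}=\sigma_{2}$, which requires $n_{1}\ge2$; this should follow from admissibility, since $n_{1}=1$ would force the graph to be complete bipartite, so that $[\Xsf:\Ysf]=0$.

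\textbf{Step 1.} Use the identity
\[
\sum_{i=1}^{n_{1}}\sigma_{i}^{2}=\tr(BB^{T})=m=n_{1}d_{1}.
\]
Removing the top term leaves
\[
\sum_{i\ge2}\sigma_{i}^{2}=n_{1}d_{1}-d_{1}d_{2},
\]
and hence
\[
(n_{1}-1)\,\lambda_{2}^{2}\;\ge\; d_{1}(n_{1}-d_{2}).
\]

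\textbf{Step 2.} Convert this into the claimed bound
\[
\lambda_{2}^{2}\ge \tfrac12 d_{1}=\tfrac12\max\{d_{1},d_{2}\}.
\]
It suffices to show $n_{1}-d_{2}\ge\tfrac12(n_{1}-1)$, that is, $d_{2}\le\tfrac12(n_{1}+1)$. The condition $[\Xsf:\Ysf]>\log2$ means $n_{1}n_{2}>2m$, which gives $d_{2}=m/n_{2}<n_{1}/2$. This is exactly the density-at-most-$1/2$ condition.

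\textbf{Step 3.} Take logarithms to obtain $2\log\lambda_{2}\ge\log\max\{d_{1},d_{2}\}-\log2$.

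The main obstacle is bookkeeping on which side is smaller, together with the degenerate cases. Under the roles fixed above, the argument proves the bound with $d_{1}=\max\{d_{1},d_{2}\}$, as needed. If $n_{1}>n_{2}$, I will swap the roles of the two sides; the computation is symmetric, and the density condition applies equally with either side. I will also check that the hypothesis $[\Esf]\ge3$ is not actually needed beyond ensuring $n_{1}\ge2$. If $n_{1}\ge2$ cannot be guaranteed that way, I will fall back on citing \cite[corollary~4]{hoholdt2012eigenvalues} directly, as the paper does.
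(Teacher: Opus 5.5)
Your proof is correct. It takes a different route from the paper only in that the paper gives no argument of its own: it simply invokes Corollary~4 of H{\o}holdt--Justesen. You instead derive the bound directly from the identity $\tr(BB^{T})=|\Esf|$.

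Your computation
\[
  (n_1-1)\lambda_2^2\;\ge\;\sum_{i\ge2}\sigma_i^2\;=\;d_1(n_1-d_2)
\]
is the kind of elementary trace bound such a corollary rests on. It also shows which hypothesis does the work. Only the density condition $[\Xsf:\Ysf]>\log2$ is used, i.e.\ $d_2<n_1/2$ (equivalently $d_1<n_2/2$); the condition $[\Esf]\ge3$ plays no role. You in fact get the slightly stronger, strict bound
\[
  \lambda_2^2\;\ge\;\frac{d_1(n_1-d_2)}{n_1-1}\;>\;\frac{d_1}{2}.
\]

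Two of your hedges can be dropped. First, $n_1\ge2$ needs neither a separate argument nor a fallback to the citation: $d_2\ge1$ together with $d_2<n_1/2$ already forces $n_1\ge3$. Second, the reduction to $n_1\le n_2$ is harmless but unnecessary: the same computation with $B^{T}B$ gives
\[
  \lambda_2^2\;\ge\;\frac{d_2(n_2-d_1)}{n_2-1}\;>\;\frac{d_2}{2},
\]
so both degrees are covered directly.

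The citation buys the paper brevity. Your route buys a short, self-contained proof that makes explicit why admissibility (specifically, edge density below $1/2$) is the right hypothesis.
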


  In what follows, we adapt the definition of an unbalanced bipartite
  expander to our setting.  Note that, unlike in the standard
  definition of an expander, we do not require the vertex degrees to
  be bounded by a constant.

We say that an infinite family of graphs $\Gsf$ is a \emph{right-heavy
    expander family} if there is a constant $C\geq0$ such that
  \[
    2\log\lambda_{2}\leq\log d_{1}+C
  \]
  and define \emph{left-heavy expander family} in analogous manner,
 by requiring that there is a constant $C\geq0$ such that
  \[
    2\log\lambda_{2}\leq\log d_{2}+C
  \]

  We say that family of graphs is a \emph{balanced expander family} if
  it is simultaneously right- and left-heavy expander
  family. Equivalently, for balanced expanders there exist $C\geq0$
  such that
  \[
    2\log\lambda_{2}\leq\log\lambda_{1}+C
  \]
  For balanced expanders holds
  \[
    \Bigl|[X]-[Y]\Bigr|\leq 2C
    \quad\text{and}\quad
    \Bigl|[X|Y]-[Y|X]\Bigr|\leq 2C
  \]
  When we say that $\Gsf$ is a \emph{left-heavy/right-heavy/balanced
    expander} we implicitly assume that it belongs some expander
  family with a fixed constant $C$ and universal quantifier is applied
  to $\Gsf$.

\subsection{Extension profile of an admissible graph}
For a subgraph $\Hsf$ of an admissible graph $\Gsf$ define
\begin{equation}
  e(\Gsf|\Hsf):=\vect{[\Xsf_{\Hsf}]\\{}[\Ysf_{\Hsf}]\\{}[\Xsf_{\Hsf}:\Ysf_{\Hsf}]}
  \in\Rbb^{3}
\end{equation}
As usual the space $\Rbb^{3}$ is equipped with the $\ell^{1}$-norm and
$(x,y,z)$-coordinates.

Define the \emph{graph extension profile} of
$\Gsf$ by
\begin{equation}
  \extgr(\Gsf)
  :=
  \set{e(\Gsf|\Hsf)\st
    \text{$\Hsf$ is a subgraph of $\Gsf$}}\subset\Rbb^{3}   
\end{equation}
For reasons that will be clear below it is convenient to split the the
graph extension profile into two parts, the upper and the lower, that
lie above and below the threshold plane $\set{z=[X:Y]-\log2}$,
respectively:

\begin{align*}
  \extgru(\Gsf)&:=\extgr(\Gsf)\cap\set{z\geq [\Xsf:\Ysf]-\log2}\\
  \extgrl(\Gsf)&:=\extgr(\Gsf)\cap\set{z\leq [\Xsf:\Ysf]-\log2}
\end{align*}

Obvious inequalities for cardinalities imply the following simple lemma.
\begin{lemma}\label{l:extgr-shannon-bounds}
  Let $\Gsf$ be an admissible graph. Then for any $(x,y,z)\in\extgr(\Gsf)$
  \begin{enumerate}[label=(\roman*)]
  \item\label{i:gr-0xX}
    $0\leq x\leq [\Xsf]$
  \item\label{i:gr-0yY}
    $0\leq y\leq [\Ysf]$
  \item\label{i:gr-L00}
    $0\leq z\leq \min\set{x,y}\leq \min\set{[\Xsf],[\Ysf]}$
  \item \label{i:gr-L10}
    $0\leq x-z\leq \log d_{2}=[\Xsf|\Ysf]$
  \item\label{i:gr-L01}
     $0\leq y-z\leq \log d_{1}=[\Ysf|\Xsf]$
  \end{enumerate}
\end{lemma}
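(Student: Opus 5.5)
The plan is to translate each inequality into an elementary counting statement about the subgraph $\Hsf$, using only that $\Hsf$ is a non-empty subgraph of $\Gsf$ without isolated vertices. Write $x=[\Xsf_{\Hsf}]$, $y=[\Ysf_{\Hsf}]$, $z=[\Xsf_{\Hsf}]+[\Ysf_{\Hsf}]-[\Esf_{\Hsf}]$. Then $x-z=[\Esf_{\Hsf}]-[\Ysf_{\Hsf}]$ and $y-z=[\Esf_{\Hsf}]-[\Xsf_{\Hsf}]$. So items \ref{i:gr-L10} and \ref{i:gr-L01} become the bounds $|\Ysf_{\Hsf}|\leq|\Esf_{\Hsf}|\leq d_{2}|\Ysf_{\Hsf}|$ and $|\Xsf_{\Hsf}|\leq|\Esf_{\Hsf}|\leq d_{1}|\Xsf_{\Hsf}|$. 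This mirrors the proof of Lemma~\ref{l:ext-shannon-bounds}, with cardinalities replacing entropies.

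\textbf{Steps (i) and (ii).} Since $\Hsf$ is non-empty and has no isolated vertices, $\Xsf_{\Hsf}$ and $\Ysf_{\Hsf}$ are non-empty. Their sizes are therefore at least $1$, and they are at most $|\Xsf|$ and $|\Ysf|$. Taking logarithms gives (i) and (ii).

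\textbf{Steps (iv) and (v).} Every vertex of $\Ysf_{\Hsf}$ has at least one incident edge in $\Hsf$, because there are no isolated vertices. Every edge has exactly one endpoint in $\Ysf_{\Hsf}$. Hence $|\Esf_{\Hsf}|\geq|\Ysf_{\Hsf}|$. Each vertex has degree at most $d_{2}$ in $\Gsf$, hence in $\Hsf$, so $|\Esf_{\Hsf}|\leq d_{2}|\Ysf_{\Hsf}|$. Taking logarithms gives $0\leq x-z\leq\log d_{2}$. Biregularity identifies $\log d_{2}$ with $[\Xsf|\Ysf]=[\Esf]-[\Ysf]$. The argument for (v) is symmetric, using $\Xsf_{\Hsf}$ and $d_{1}$.

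\textbf{Step (iii).} Since $\Esf_{\Hsf}\subset\Xsf_{\Hsf}\times\Ysf_{\Hsf}$, we have $[\Esf_{\Hsf}]\leq x+y$, which gives $z\geq0$. The inequalities $z\leq x$ and $z\leq y$ are exactly the lower bounds $x-z\geq0$ and $y-z\geq0$ from the previous step. The final inequality $\min\set{x,y}\leq\min\set{[\Xsf],[\Ysf]}$ follows from (i) and (ii).

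No step is a real obstacle. The only point needing care is the convention that subgraphs have no isolated vertices. It is used for the lower bounds $|\Esf_{\Hsf}|\geq|\Xsf_{\Hsf}|,|\Ysf_{\Hsf}|$ and for the non-emptiness that makes the logarithms finite and nonnegative. Admissibility of $\Gsf$ is not needed for these bounds.
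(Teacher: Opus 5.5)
Your proof is correct and is the argument the paper intends. The paper only says that ``obvious inequalities for cardinalities'' give the lemma, and you make these explicit through $|\Ysf_{\Hsf}|\leq|\Esf_{\Hsf}|\leq d_{2}|\Ysf_{\Hsf}|$, $|\Xsf_{\Hsf}|\leq|\Esf_{\Hsf}|\leq d_{1}|\Xsf_{\Hsf}|$ and $\Esf_{\Hsf}\subset\Xsf_{\Hsf}\times\Ysf_{\Hsf}$, mirroring Lemma~\ref{l:ext-shannon-bounds}. One wording fix: in step (iv), ``each vertex'' should read ``each vertex of $\Ysf_{\Hsf}$'', because only right vertices have degree $d_{2}$.
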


\subsection{Expander Lemma and graph extension profile}
The following theorem is a simple corollary of the bipartite expander
mixing
lemma,~\cite{alon1988explicit,hoory2006expander,evra2015mixing}, and
is proven in~\cite{matveev2026spectral}.
We will refer to this statement as EML-$\log$.

\begin{theorem}[Expander Mixing Lemma on $\log$ scale, see~\cite{matveev2026spectral}]
  \label{th:mixing-log}\ \\
  For any (not necessarily induced) subgraph $\Hsf$ in a biregular
  bipartite graph $\Gsf=(\Xsf\sqcup\Ysf, \Esf)$ the following
  alternative holds. Either
  \begin{enumerate}[label=(\roman*)]
  \item\label{i:density}
    $[\Xsf_{\Hsf}:\Ysf_{\Hsf}]
    \geq [\Xsf:\Ysf] - \log2$
  \item[or]
  \item\label{i:spectral}
    $\frac12[\Xsf_{\Hsf}]+\frac12[\Ysf_{\Hsf}]-[\Xsf_{\Hsf}:\Ysf_{\Hsf}]
    \leq
    \log\lambda_{2}(\Gsf)+\log 2$
  \end{enumerate}
\end{theorem}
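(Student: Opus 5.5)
We derive the alternative from the bipartite expander mixing lemma applied to the vertex sets $S:=\Xsf_{\Hsf}$ and $T:=\Ysf_{\Hsf}$. Every edge of $\Hsf$ joins $S$ to $T$, so $|\Esf_{\Hsf}|\le e(S,T)$, where $e(S,T)$ counts the edges of $\Gsf$ between $S$ and $T$. We will use the standard bipartite mixing lemma for a biregular graph with degrees $d_1,d_2$:
\[
  \Bigl|e(S,T)-\frac{|\Esf|}{|\Xsf||\Ysf|}|S||T|\Bigr|
  \le \lambda_{2}\sqrt{|S||T|}.
\]
Here $|\Esf|/(|\Xsf||\Ysf|)=2^{-[\Xsf:\Ysf]}$ is the edge density. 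Hence
\[
  |\Esf_{\Hsf}|\le 2^{-[\Xsf:\Ysf]}|S||T|+\lambda_{2}\sqrt{|S||T|}.
\]
Note that the subgraph need not be induced; we only need the upper bound, which is all the mixing lemma supplies.

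Next we split on which of the two terms on the right dominates. In the first case, $2^{-[\Xsf:\Ysf]}|S||T|\ge\lambda_2\sqrt{|S||T|}$, so $|\Esf_{\Hsf}|\le 2\cdot 2^{-[\Xsf:\Ysf]}|S||T|$. Taking logarithms gives $[\Esf_{\Hsf}]\le[\Xsf_{\Hsf}]+[\Ysf_{\Hsf}]-[\Xsf:\Ysf]+\log2$, that is,
\[
  [\Xsf_{\Hsf}:\Ysf_{\Hsf}]\ge[\Xsf:\Ysf]-\log 2,
\]
which is alternative (i). In the second case, $|\Esf_{\Hsf}|\le 2\lambda_2\sqrt{|S||T|}$, so $[\Esf_{\Hsf}]\le \log\lambda_2+\log2+\frac12[\Xsf_{\Hsf}]+\frac12[\Ysf_{\Hsf}]$. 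Since $[\Esf_{\Hsf}]=[\Xsf_{\Hsf}]+[\Ysf_{\Hsf}]-[\Xsf_{\Hsf}:\Ysf_{\Hsf}]$, subtracting $\frac12[\Xsf_{\Hsf}]+\frac12[\Ysf_{\Hsf}]$ from both sides gives exactly alternative (ii).

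The only point needing care is the form of the mixing lemma. Writing the adjacency operator $A$ from $\Rbb^{\Ysf}$ to $\Rbb^{\Xsf}$, its top singular value $\sqrt{d_1d_2}=\lambda_1$ has the constant vectors as its singular vectors. For $e(S,T)=\langle 1_S,A1_T\rangle$, we decompose $1_S$ and $1_T$ into their constant parts and the orthogonal remainders. The constant parts contribute the density term exactly, because $\frac{|S|}{|\Xsf|}\frac{|T|}{|\Ysf|}\langle 1,A1\rangle=\frac{|S||T|}{|\Xsf||\Ysf|}|\Esf|$. The remainders contribute at most $\lambda_2\|1_S^\perp\|\|1_T^\perp\|\le\lambda_2\sqrt{|S||T|}$, since the second singular value of $A$ equals the second eigenvalue $\lambda_2$ of the bipartite graph. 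The cross terms vanish because $A$ maps constants to constants and $A^{T}$ does too, by biregularity. With this established, the two cases above complete the proof, and no admissibility assumption is needed.
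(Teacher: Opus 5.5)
Your proof is correct and takes the route the paper indicates: the paper does not reprove this statement but cites it from \cite{matveev2026spectral} as a simple corollary of the bipartite expander mixing lemma applied to $\Xsf_{\Hsf}$ and $\Ysf_{\Hsf}$. Your case split on whether the density term or the $\lambda_2\sqrt{|\Xsf_{\Hsf}||\Ysf_{\Hsf}|}$ term dominates is what produces the $\log 2$ in both branches, and your self-contained derivation of the mixing lemma from the singular value decomposition (with the cross terms vanishing by biregularity) fills in what the paper only cites; the one cosmetic point is that $2^{-[\Xsf:\Ysf]}$ should be read as the edge density $|\Esf|/(|\Xsf||\Ysf|)$ in whatever base the logarithm is taken, since the paper writes $\log 2$ rather than $1$.
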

The statement above restricts the possible locations of points
in the graph extension profile $\extgr(\Gsf)$. For every point in
$\extgr(\Gsf)$, either
\begin{itemize}
\item the point belongs to the upper part, $\extgru(\Gsf)$, then there
  are no further restrictions beyond those in
    Lemma~\ref{l:extgr-shannon-bounds}; 
\item the point belongs to the lower part, $\extgrl(\Gsf)$, then it
  must satisfy the inequality in the second branch of EML-$\log$.
\end{itemize}

Next, we apply the partial support function to the upper
and lower parts of the graph extension profile and derive some bounds
on it. Since this profile is
a finite subset of $\Rbb^{3}$ and not convex, we use the convention
\[
  \Lambda(S):=\Lambda(\convexhull S)
\]
for arbitrary compact sets $S\subset\Rbb^{3}$.

\subsection{Bounds for the support function of the lower and upper parts}
\def\flo{\check f}
\def\fup{\hat f}

Note that, since all graphs under consideration are admissible, the lower graph extension profile is nonempty.
Indeed, it contains at least the point $(0,0,0)$, which corresponds to a subgraph consisting of a single edge.
\begin{lemma}\label{l:graph-support-pts}
  Let $\Gsf=(\Xsf\sqcup\Ysf,\Esf)$ be a biregular bipartite graph.
  Let
  \[
    f:=\Lambda\bigl(\extgr(\Gsf)\bigr),
    \qquad
    \flo:=\Lambda\bigl(\extgrl(\Gsf)\bigr),
    \qquad
    \fup:=\Lambda\bigl(\extgru(\Gsf)\bigr)
  \]
  denote the support functions of the extension profile of $\Gsf$ and
  its lower and upper parts, respectively.  Then
  \begin{enumerate}[label=(\roman*)]
  \item\label{i:f1001}
    $f(1,0)
    =
    [\Xsf|\Ysf]$
    and 
    $f(0,1)
    =
    [\Ysf|\Xsf]$
  \item\label{i:fup00}
    $\fup(0,0)
    \leq
    -[\Xsf:\Ysf]+\log2$
  \item\label{i:flo00}
    $\flo(0,0)=0$
  \item\label{i:flo-half}
    $\flo(\tfrac12,\tfrac12)
    \leq
    \log\lambda_{2}(\Gsf)+\log2$
  \end{enumerate}
\end{lemma}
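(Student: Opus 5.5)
Each item reduces to evaluating the linear functional $l_{\alpha\beta}$ on the finitely many points $e(\Gsf|\Hsf)$. This suffices because the support function of a convex hull equals the maximum over the original points (Proposition~\ref{p:legendre}\ref{i:union}). In every case we combine a coordinatewise bound with, where needed, an explicit subgraph attaining it.

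For \ref{i:f1001}, the functional $l_{1,0}$ evaluated at $e(\Gsf|\Hsf)$ equals $[\Xsf_\Hsf]-[\Xsf_\Hsf:\Ysf_\Hsf]=[\Esf_\Hsf]-[\Ysf_\Hsf]$. By Lemma~\ref{l:extgr-shannon-bounds}\ref{i:gr-L10} this is at most $[\Xsf|\Ysf]=\log d_2$. The bound is attained by the star $\Hsf$ centered at a single vertex $\ysf\in\Ysf$ together with all its $d_2$ neighbours. Indeed, for this star
\[
[\Ysf_\Hsf]=0,\qquad [\Xsf_\Hsf]=[\Esf_\Hsf]=\log d_2,\qquad [\Xsf_\Hsf:\Ysf_\Hsf]=0,
\]
so $l_{1,0}$ takes the value $\log d_2$. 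The claim $f(0,1)=[\Ysf|\Xsf]$ follows symmetrically, using a star centered at a vertex of $\Xsf$ and item \ref{i:gr-L01}.

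For \ref{i:fup00}, note that $l_{0,0}(x,y,z)=-z$. Every point of $\extgru$ satisfies $z\geq[\Xsf:\Ysf]-\log2$ by definition, so the maximum of $-z$ over $\extgru$ is at most $-[\Xsf:\Ysf]+\log 2$. If $\extgru$ is empty, the support function is $-\infty$ and the inequality holds trivially.

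For \ref{i:flo00}, we again have $l_{0,0}=-z$. Lemma~\ref{l:extgr-shannon-bounds}\ref{i:gr-L00} gives $z\geq0$, so $\flo(0,0)\leq 0$. The single-edge subgraph gives the point $(0,0,0)$, which lies in $\extgrl$ because admissibility gives $[\Xsf:\Ysf]>\log 2$. Hence equality holds.

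For \ref{i:flo-half}, we have $l_{1/2,1/2}(x,y,z)=\tfrac12x+\tfrac12y-z$. Take a point of $\extgrl$ coming from a subgraph $\Hsf$. If its $z$ lies strictly below the threshold, alternative \ref{i:density} of Theorem~\ref{th:mixing-log} fails, so alternative \ref{i:spectral} holds, and that is exactly the required bound. The only delicate step is the boundary case $z=[\Xsf:\Ysf]-\log2$, where both alternatives may hold. I would handle it by checking the strictness conventions in EML-$\log$. Failing that, I would use the inequality in the form given in~\cite{matveev2026spectral}: the dichotomy comes from the mixing lemma and yields \ref{i:spectral} whenever the density is at most twice the global density, which includes equality. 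This boundary case is the only real obstacle; everything else is bookkeeping.
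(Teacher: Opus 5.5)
Your argument for all four items is the paper's: Lemma~\ref{l:extgr-shannon-bounds} plus the star around a vertex of $\Ysf$ (resp.\ $\Xsf$) for (i); the defining threshold of $\extgru(\Gsf)$ for (ii); nonnegativity of $z$ plus a single edge for (iii), where, as you note, admissibility is what puts $(0,0,0)$ into $\extgrl(\Gsf)$; and EML-$\log$ for (iv).

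The boundary case you raise in (iv) is a fair point. The paper simply asserts that every point of $\extgrl(\Gsf)$ satisfies alternative~(ii) of Theorem~\ref{th:mixing-log}, and you leave the case unresolved. Your proposed justification also runs the wrong way. The condition $z\leq[\Xsf:\Ysf]-\log2$ means that $\Hsf$ has edge density \emph{at least} twice the global density $|\Esf|/(|\Xsf|\,|\Ysf|)$. ``At most twice'' is exactly alternative~(i).

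To close the gap, go back to the bipartite mixing lemma from which EML-$\log$ is derived. Since $|\Esf_{\Hsf}|$ is at most the number of edges of $\Gsf$ between $\Xsf_{\Hsf}$ and $\Ysf_{\Hsf}$,
\[
  |\Esf_{\Hsf}|
  \leq
  \frac{|\Esf|}{|\Xsf|\,|\Ysf|}\,|\Xsf_{\Hsf}|\,|\Ysf_{\Hsf}|
  +\lambda_{2}\sqrt{|\Xsf_{\Hsf}|\,|\Ysf_{\Hsf}|}.
\]
For every point of $\extgrl(\Gsf)$, boundary included, the first term on the right is at most $\tfrac12|\Esf_{\Hsf}|$. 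Hence $|\Esf_{\Hsf}|\leq 2\lambda_{2}\sqrt{|\Xsf_{\Hsf}|\,|\Ysf_{\Hsf}|}$. After taking logarithms this is precisely $\tfrac12x+\tfrac12y-z\leq\log\lambda_{2}+\log2$, and with this line the proof is complete.
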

Note that $\flo,\fup\leq f$ and, therefore,
Lemma~\ref{l:graph-support-pts}\ref{i:f1001} implies
\[
  \flo(1,0)\leq [\Xsf|\Ysf],
  \quad
  \flo(0,1)\leq [\Ysf|\Xsf],
  \quad
  \fup(1,0)\leq [\Xsf|\Ysf],
  \quad
  \fup(0,1)\leq [\Ysf|\Xsf],
\]
\begin{proof}
  \begin{enumerate}[label=(\roman*)]
  \item The inequalities
    \[
      f(1,0)
      \leq
      [\Xsf|\Ysf],
      \qquad
      f(0,1)
      \leq
      [\Ysf|\Xsf]
    \]
    are direct consequences of
    Lemma~\ref{l:extgr-shannon-bounds}\ref{i:gr-L10}
    and~\ref{i:gr-L01}.
    The opposite inequalities follows from the choice of subgraph
    equal to the neighborhood of a point
    $\ysf_0\in\Ysf$ in the right part. It gives the point
    \[
      \vect{[\Xsf|\Ysf]\\0\\0}\in\extgr(\Gsf)
    \]
    and the inequality
    \[
      f(1,0)
      \geq
      [\Xsf|\Ysf]
    \]
    Similar argument works for the second inequality.
  \item This inequality follows directly from the definition of the
    upper part $\extgru(\Gsf)$.
  \item The inequality $\flo(0,0)\leq0$ follows from
    non-negativity of the $z$-coordinate of every point in
    $\extgrl(\Gsf)$. The opposite inequality follows from the choice
    of subgraph consisting of a single edge, which gives the point
    $(0,0,0)\in\extgrl(\Gsf)$.
  \item Every point in $\extgrl(\Gsf)$ satisfies
    conclusion~\ref{i:spectral} of Theorem~\ref{th:mixing-log}.
    Therefore,
    \[
      \frac12x+\frac12y-z
      \leq
      \log\lambda_{2}(\Gsf)+\log2
    \]
    for every $(x,y,z)\in\extgrl(\Gsf)$. Taking the supremum gives
    the desired inequality.
  \end{enumerate}
\end{proof}

\begin{proposition}\label{p:graph-support-upper}
  Let $\Gsf=(\Xsf\sqcup\Ysf,\Esf)$ be a biregular bipartite
  graph. Let $\fup:=\Lambda\bigl(\extgru(\Gsf)\bigr)$ be the support
  function of the upper part of the graph extension profile of
  $\Gsf$. Then for any $\alpha,\beta\in[0,1]$, $\alpha+\beta\leq1$
  \[
    0\leq
    \fup(\alpha,\beta)
    -
    \bigl(\alpha\cdot[X]+\beta\cdot[Y]-[X:Y]\bigr)
    \leq
    \log2
  \]
\end{proposition}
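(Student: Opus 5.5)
Both inequalities come from two simple observations: the whole graph is a point of the upper part, and every point of the upper part has its $z$-coordinate bounded below.

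\emph{Lower bound.} Take $\Hsf=\Gsf$. It gives the point $e(\Gsf|\Gsf)=\bigl([\Xsf],[\Ysf],[\Xsf:\Ysf]\bigr)$. Its $z$-coordinate $[\Xsf:\Ysf]$ exceeds the threshold $[\Xsf:\Ysf]-\log2$, so this point lies in $\extgru(\Gsf)$. Evaluating $l_{\alpha\beta}$ at it gives
\[
\fup(\alpha,\beta)\geq\alpha[\Xsf]+\beta[\Ysf]-[\Xsf:\Ysf],
\]
and this holds for all $(\alpha,\beta)$.

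\emph{Upper bound.} Let $(x,y,z)\in\extgru(\Gsf)$ with $\alpha,\beta\geq0$ and $\alpha+\beta\leq1$. Write
\[
\alpha x+\beta y-z=\alpha(x-z)+\beta(y-z)-(1-\alpha-\beta)z .
\]
By Lemma~\ref{l:extgr-shannon-bounds}\ref{i:gr-0xX} and~\ref{i:gr-0yY} we have $x\leq[\Xsf]$ and $y\leq[\Ysf]$. By membership in the upper part, $z\geq[\Xsf:\Ysf]-\log2$. Then
\[
\alpha(x-z)\leq\alpha\bigl([\Xsf]-[\Xsf:\Ysf]+\log2\bigr),
\]
and similarly for the $\beta$ term. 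Also
\[
-(1-\alpha-\beta)z\leq-(1-\alpha-\beta)\bigl([\Xsf:\Ysf]-\log2\bigr).
\]
All three coefficients $\alpha$, $\beta$, $1-\alpha-\beta$ are nonnegative, so the estimates can be added. The sum is
\[
\alpha[\Xsf]+\beta[\Ysf]-[\Xsf:\Ysf]+(\alpha+\beta+1-\alpha-\beta)\log2=\alpha[\Xsf]+\beta[\Ysf]-[\Xsf:\Ysf]+\log2 .
\]
Taking the supremum over the finite set $\extgru(\Gsf)$, which has the same supremum as its convex hull, gives the upper inequality.

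The only step needing care is the decomposition above. It writes the functional as a combination of $x-z$, $y-z$ and $-z$ with nonnegative coefficients. This is exactly what the hypothesis $\alpha+\beta\leq1$ guarantees, and it makes the threshold lower bound on $z$ usable in every term.
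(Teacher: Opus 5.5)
Your proof is correct and is essentially the paper's argument. Your identity $\alpha x+\beta y-z=\alpha(x-z)+\beta(y-z)-(1-\alpha-\beta)z$ is the pointwise form of the paper's convexity step along $(\alpha,\beta)=\alpha(1,0)+\beta(0,1)+(1-\alpha-\beta)(0,0)$, and your lower bound via $\Hsf=\Gsf$ is identical to the paper's. The only difference is that the paper bounds $x-z$ by $[\Xsf|\Ysf]=[\Xsf]-[\Xsf:\Ysf]$ using Lemma~\ref{l:extgr-shannon-bounds}\ref{i:gr-L10}, rather than spending the threshold on $z$ in that term, which gives the slightly sharper error $(1-\alpha-\beta)\log2$ in place of $\log2$.
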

\begin{proof}
  On the one hand we can take $\Gsf$ as its own subgraph. It gives
  point
  $\bigl([\Xsf],[\Ysf],[\Xsf:\Ysf]\bigr)\in\extgru(\Gsf)$. Therefore
  \[
    \fup(\alpha,\beta)
    \geq
    \ell_{\alpha\beta}\bigl([\Xsf],[\Ysf],[\Xsf:\Ysf]\bigr)
    =
    \alpha\cdot[X]+\beta\cdot[Y]-[X:Y]
  \]
  On the other hand using convexity of the support function and the convex decomposition
  \[
    \vect{\alpha\\\beta}=\alpha\vect{1\\0}+\beta\vect{0\\1}+(1-\alpha-\beta)\vect{0\\0}
  \]
  we get
  \begin{align*}
    \fup(\alpha,\beta)
    &\leq
      \alpha\cdot f(1,0)+
      \beta\cdot f(0,1)+
      (1-\alpha-\beta)f(0,0)\\
    &\leq
      \alpha\cdot[\Xsf|\Ysf]+\beta\cdot[\Ysf|\Xsf]
      -(1-\alpha-\beta)\bigl([\Xsf:\Ysf]-\log2\bigr)\\
    &\leq
      \alpha\cdot[\Xsf]+\beta\cdot[\Ysf]-[\Xsf:\Ysf]+\log2
  \end{align*}
  where we use Lemma~\ref{l:graph-support-pts} to bound $\fup(1,0)$,
  $\fup(0,1)$ and $\fup(0,0)$ from above.
\end{proof}

\begin{proposition}\label{p:graph-support-lower}
  Let $\Gsf=(\Xsf\sqcup\Ysf,\Esf)$ be a biregular bipartite graph with
  the second largest eigenvalue $\lambda_2$.  Let
  $\flo:=\Lambda\bigl(\extgrl(\Gsf)\bigr)$ be the support function of the
  lower part of the graph extension profile of $\Gsf$. Then
  \begin{itemize}
  \item for any $0\leq\alpha\leq\beta\leq1$, $\alpha+\beta\leq1$
    \[
      \flo(\alpha,\beta)
      \leq
      (\beta-\alpha)[\Ysf|\Xsf] + 2\alpha\log\lambda_2+\log2
    \]
  \item for any $0\leq\beta\leq\alpha\leq1$, $\alpha+\beta\leq1$
    \[
      \flo(\alpha,\beta)
      \leq
      (\alpha-\beta)[\Xsf|\Ysf] + 2\beta\log\lambda_2+\log2
    \]
  \end{itemize}
\end{proposition}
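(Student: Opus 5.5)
We use convexity of $\flo$ (Proposition~\ref{p:legendre}\ref{i:convex}) and write $(\alpha,\beta)$ as a convex combination of points where Lemma~\ref{l:graph-support-pts} already bounds $\flo$. By symmetry it suffices to treat the case $0\leq\alpha\leq\beta$, $\alpha+\beta\leq1$. The natural anchor points are $(0,0)$, $(0,1)$ and $(\tfrac12,\tfrac12)$: these are the vertices of the triangle $\set{0\leq\alpha\leq\beta,\ \alpha+\beta\leq1}$. We decompose
\[
  \vect{\alpha\\\beta}
  =
  2\alpha\vect{\tfrac12\\\tfrac12}
  +(\beta-\alpha)\vect{0\\1}
  +(1-\alpha-\beta)\vect{0\\0}.
\]
The coefficients $2\alpha$, $\beta-\alpha$ and $1-\alpha-\beta$ are nonnegative in this region and sum to $1$, and the identity checks coordinatewise: the first coordinate is $\alpha$ and the second is $\alpha+\beta-\alpha=\beta$.

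Convexity then gives
\[
  \flo(\alpha,\beta)\leq 2\alpha\,\flo(\tfrac12,\tfrac12)+(\beta-\alpha)\,\flo(0,1)+(1-\alpha-\beta)\,\flo(0,0).
\]
We bound each term using Lemma~\ref{l:graph-support-pts}:
\begin{itemize}
  \item by item~\ref{i:flo00}, $\flo(0,0)=0$;
  \item by item~\ref{i:f1001} together with $\flo\leq f$ (monotonicity, since $\extgrl\subset\extgr$), $\flo(0,1)\leq[\Ysf|\Xsf]$;
  \item by item~\ref{i:flo-half}, $\flo(\tfrac12,\tfrac12)\leq\log\lambda_2+\log2$.
\end{itemize}
Substituting these bounds yields
\[
  \flo(\alpha,\beta)\leq(\beta-\alpha)[\Ysf|\Xsf]+2\alpha\log\lambda_2+2\alpha\log2.
\]
Since $\alpha+\beta\leq1$ and $\alpha\leq\beta$, we have $2\alpha\leq1$, so $2\alpha\log2\leq\log2$, and the first bullet of the statement follows.

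The second bullet follows by the symmetric decomposition with the roles of $\alpha,\beta$ and of $\Xsf,\Ysf$ exchanged, using $(1,0)$ instead of $(0,1)$ and the bound $\flo(1,0)\leq[\Xsf|\Ysf]$.

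The only point needing care is that $\flo$ is the support function of the convex hull of a finite set, so it is a genuine convex function on all of $\Rbb^2$. We also need $\extgrl(\Gsf)$ to be nonempty, which holds because it contains $(0,0,0)$, given by a single-edge subgraph. With both facts in hand, applying convexity at the three anchor points is legitimate. No other obstacle is expected: the whole argument is a three-point interpolation, and the factor $2\alpha\leq1$ is what absorbs the $\log 2$ error terms.
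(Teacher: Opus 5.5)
Your proof is correct and follows the paper's argument exactly: the same three-point convex decomposition through $(0,0)$, $(0,1)$, $(\tfrac12,\tfrac12)$, and the same bounds from Lemma~\ref{l:graph-support-pts}. Your explicit observation that $2\alpha\leq\alpha+\beta\leq1$ absorbs the term $2\alpha\log2$ into $\log2$ spells out a step the paper's write-up passes over silently.
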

\begin{proof}
  Use convexity of the support function and convex decomposition
  \[
    \vect{\alpha\\\beta}
    =
    (\beta-\alpha)\vect{0\\1}+2\alpha\vect{1/2\\1/2}+(1-\alpha-\beta)\vect{0\\0}
  \]
  to estimate
  \begin{align*}
    \flo(\alpha,\beta)
    &\leq
      (\beta-\alpha)\flo(0,1)+2\alpha\cdot
      \flo(\tfrac12,\tfrac12)+(1-\alpha-\beta)\flo(0,0)\\
    &\leq
      (\beta-\alpha)[\Ysf|\Xsf]+2\alpha\cdot
      \log\lambda_{2}+\log2 
  \end{align*}
  where Lemma~\ref{l:graph-support-pts} is used to bound
  $\flo(\tfrac12,\tfrac12)$, $\flo(0,1)$, $\flo(1,0)$ and $\flo(0,0)$.
  
  The second inequality is derived similarly.
\end{proof}

\section{Relation between extension profiles of graphs and pairs of
  random variables}\label{s:relation}
\subsection{Bipartite graphs and random pairs: notation and definitions}
Suppose $\Pbf=(X,Y)$ is a pair of random variables uniformly supported
on a graph $\Gsf=(\Xsf\sqcup\Ysf,\Esf)$, that is
\[
  \Xsf=\supp X,
  \quad
  \Ysf=\supp Y,
  \quad
  \Esf=\supp XY,
\]
and each $X$, $Y$ and $XY$ is uniform on it's support.

In this situation we have the following identities:
\begin{align*}
  &H(X)=[\Xsf]&&H(Y)=[\Ysf]\\
  &H(XY)=[\Esf]&&I(X:Y)=[\Xsf:\Ysf]\\
  &H(Y|X)=\log d_{1}(\Gsf) &&H(X|Y)=\log d_{2}(\Gsf)\\
  &L(X,Y)=\log\lambda_{1}(\Gsf)
\end{align*}

The following definitions will also be used below. We say that
  random pair $(X',Y')$ is \emph{subsupported} on a bipartite graph
  $\Gsf=(\Xsf\sqcup\Ysf,\Esf)$ if
  \[
    \supp X'\subset\Xsf,
    \qquad
    \supp Y'\subset\Ysf
    \qquad
    \supp X'Y'\subset\Esf
  \]
  We call a random variable $X$ \emph{$\delta$-uniform} for some $\delta>1$ if
  \[
    \frac{\max_{x} \prob[X=x]}{\min_{x}\prob[X=x]}\leq \alpha,
  \]
  where the extrema are taken over $\supp(X)$.
  We call a pair $(X,Y)$ \emph{$\delta$-uniform} if each $X$, $Y$ and
  the their joint $XY$ is a $\delta$-uniform random variable.

\subsection{Relation between extension profiles of graphs and random
  pairs}
In this section we prove the following theorem, which relates the
extension profile of a uniform on the support pair $\Pbf=(X,Y)$ to the
graph extension profile of its support $\Gsf:=\supp\Pbf$. This theorem
will be the principal tool for deriving the further results of the
article.

\begin{theorem}\label{th:entropic-in-graph}
  There exists a universal constant $C>0$ such that the following
  holds. Let $\Pbf=(X,Y)$ be a uniform pair supported on an admissible
  graph $\Gsf=(\Xsf\sqcup\Ysf,\Esf)$. Then
  \[
  \ext\Pbf\subset\convexhull\bigl(\nbd_{\delta}\extgr\Gsf\bigr)
  \]
  where $\delta=C\cdot\log[\Esf]=C\cdot \log H(XY)$ and $\nbd_\delta$
  denotes the $\delta$-neighborhood with respect to the $\ell^1$-norm
  on $\Rbb^3$.
\end{theorem}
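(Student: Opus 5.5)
To prove the theorem, I would show that every point $e(X,Y|W)$ of $\ext\Pbf$ is, up to an $\ell^{1}$-error $O(\log[\Esf])$, a convex combination of points $e(\Gsf|\Hsf)$ for subgraphs $\Hsf\subset\Gsf$. This suffices because $\ext\Pbf$ consists exactly of such points. The natural approach is the standard typization (``slicing'') technique. Fix an extension $(X,Y,W)$. For each value $\wsf$, the conditional pair $(X,Y|\wsf)$ is subsupported on $\Gsf$. Moreover,
\[
e(X,Y|W)=\sum_{\wsf}p_W(\wsf)\,e(X,Y|\wsf).
\]
It is therefore enough to show that for every pair $(X',Y')$ subsupported on $\Gsf$ the point $e(X',Y')$ lies in $\convexhull(\nbd_\delta\extgr\Gsf)$. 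The conditioning on $W$ is thus eliminated, and only a single, arbitrary (non-uniform) distribution on a subset of edges remains.

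For such a pair $(X',Y')$ I would partition the edges of $\Gsf$ into classes on which the pair looks uniform. Concretely, I would assign to each edge $(\xsf,\ysf)$ in the support the triple of dyadic levels
\[
\bigl(\lfloor-\log p_{X'}(\xsf)\rfloor,\ \lfloor-\log p_{Y'}(\ysf)\rfloor,\ \lfloor-\log p_{X'Y'}(\xsf,\ysf)\rfloor\bigr).
\]
I would also add the ``degree'' levels of each vertex inside the class: the number of neighbours of $\xsf$ in the class at each level, rounded dyadically. Probabilities smaller than $|\Esf|^{-c}$ contribute negligibly to entropy and can be lumped together, so the number of relevant levels is $O(\log|\Esf|)=O(2^{\,\cdot})$ per coordinate in the log scale, i.e.\ polynomial in $[\Esf]$. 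The class label $T$ is then a random variable with $H(T)=O(\log[\Esf])$. Conditioning on $T$ changes $e(X',Y')$ by at most $O(H(T))$ in each coordinate, because $H(X'|T)\ge H(X')-H(T)$ and similarly for the other entropies. Hence $e(X',Y')$ is within $O(\log[\Esf])$ of $\sum_t p_T(t)\,e(X',Y'|t)$.

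The core step is to show that for each class $t$, the pair $(X',Y'|t)$ is $O(1)$-uniform, in the sense defined before the theorem. Its support is then a subgraph $\Hsf_t$ with left and right degrees each roughly constant up to a factor $2$. For such near-biregular, near-uniform pairs one has
\[
H(X'|t)=[\Xsf_{\Hsf_t}]\pm O(1),\quad H(Y'|t)=[\Ysf_{\Hsf_t}]\pm O(1),\quad H(X'Y'|t)=[\Esf_{\Hsf_t}]\pm O(1).
\]
It follows that $e(X',Y'|t)$ lies within $O(1)$ of $e(\Gsf|\Hsf_t)\in\extgr\Gsf$.

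The main obstacle is to make this uniformization work simultaneously for the joint and both marginals. Fixing the levels of $p_{X'Y'}$ and of the marginals on a class does not, by itself, make the conditional marginals of the restricted class uniform. The difficulty is that a vertex may have only part of its mass inside the class. I would handle this by iterating the refinement, as in the usual proof that entropic vectors are approximated by quasi-uniform ones up to logarithmic terms. First I would refine by the level of the conditional degrees $\deg_{\Hsf}(\xsf)$ and $\deg_{\Hsf}(\ysf)$ within the class. A bounded number of rounds (or one round with the full degree profile included in the label) then yields classes that are uniform up to constant factors, while keeping $H(T)=O(\log[\Esf])$. Combining these steps gives
\[
e(X,Y|W)\in\convexhull\bigl(\nbd_\delta\extgr\Gsf\bigr)
\]
with $\delta=C\log[\Esf]$. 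Admissibility is needed only to ensure $[\Esf]\ge3$, so that $\log[\Esf]$ is bounded away from $0$ and can absorb the additive constants.
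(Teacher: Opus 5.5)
Your skeleton matches the paper's proof. You write $e(X,Y|W)$ as a convex combination of the points $e(X,Y|\wsf)$. You split each conditional pair, by a label of entropy $O(\log[\Esf])$, into $O(1)$-uniform pieces plus a negligible tail. You then use that an $O(1)$-uniform subsupported pair lies $O(1)$-close to some $e(\Gsf|\Hsf)$ (Lemma~\ref{l:unif-subsuported}).

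The gap is the step you yourself flag as the main obstacle: the existence of such a label. Your justification is that a bounded number of alternating rounds of dyadic degree refinement suffices, and this is false. Take a pair uniform on a union of blocks $X_i\times Y_j$ with $|X_i|=2^{K\sigma_i}$, $|Y_j|=2^{K\tau_j}$ and distinct integer weights. Then the dyadic degree of a vertex of $X_i$ in any union of blocks is fixed by the heaviest adjacent column block, and symmetrically for columns. Use row blocks $\rho_0,\dots,\rho_n$ and column blocks $\gamma_0,\dots,\gamma_n$ of decreasing weight, with these edges:
\begin{itemize}
\item $\rho_0$ joined to every $\gamma_j$;
\item $\gamma_k$, for odd $k$, joined to every $\rho_j$ with $j\ge k$;
\item $\rho_k$, for even $k\ge2$, joined to every $\gamma_j$ with $j\ge k$.
\end{itemize}
One checks that the successive rounds split off only the single blocks $\rho_0,\gamma_1,\rho_2,\gamma_3,\dots$. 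Each round leaves exactly one class whose degrees still differ by a factor at least $2^{K-1}$. So about $n$ rounds are needed, and $n$ can be of order $[\Esf]/K$.

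Once the number of rounds is unbounded, your argument no longer bounds the number of classes by $\mathrm{poly}([\Esf])$, so it no longer keeps $H(T)=O(\log[\Esf])$. The full-degree-profile variant has the same defect: nothing bounds the entropy of that label by $O(\log[\Esf])$, and refining by it changes the degrees inside the new classes again.

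The analogy with quasi-uniform approximation of entropic vectors does not rescue this either. That proof works with many i.i.d.\ copies, and its single-shot version with logarithmic loss is precisely the statement you are missing. This is the theorem of Alon, Newman, Shen, Tardos and Vereshchagin: every bipartite edge set $E$ splits into $\mathrm{poly}(\log|E|)$ almost-biregular parts. Its proof is not a routine iteration. The paper does not reprove it; it imports it as the Almost Uniform Decomposition Lemma, Theorem~\ref{th:aud}.

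With that lemma in place of your refinement procedure, the rest of your argument goes through as in the paper. The tail is absorbed via the point $(0,0,0)\in\extgr\Gsf$.
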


Before we prove Theorem~\ref{th:entropic-in-graph}, we need to develop
some necessary tools.

We will use the following \emph{Almost-Uniform Decomposition Lemma}
proven in~\cite[Theorem 4.2.A]{matveev2026spectral} 
specialized to the case $n=2$.
It is based on \cite[Theorem 3]{alon2007partitioning}. 

\begin{theorem}[Almost Uniform Decomposition Lemma]
  \label{th:aud}
  There are universal constants $C_2 > 0$ and $\delta_{2}\geq1$
  such that
  for any pair of random variables $\Pbf = (X,Y)$ there exists an
  extension of $\Pbf$ by a random variable $V$ with alphabet
  $\Vsf = \set{v_0, v_1 , . . . , v_{k-1}, v_{\infty} }$ such that the
  following properties hold:
  \begin{enumerate}[label=(\roman*)]
  \item\label{i:smalltail4}
    $\displaystyle\Pbb[V=v_{\infty}]\cdot H(XY)\leq\frac{1}{\log2}$
  \item\label{i:entropybound4}
    $H(V)\leq C_{2}\cdot\log(H(XY)+1\big)$
  \item\label{i:regular4}
    $\Pbf|v_{i}$ is $\delta_{2}$-uniform for every $i\in[k]$.
  \end{enumerate}
\end{theorem}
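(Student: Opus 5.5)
The plan is to build $V$ in two stages: first discretize the joint probabilities so that $XY$ becomes uniform up to a factor of $2$ on each piece, then apply the combinatorial partitioning result of \cite[Theorem 3]{alon2007partitioning} inside each piece to also make the marginals almost uniform. Write $H:=H(XY)$ and $p(\xsf,\ysf):=\Pbb[X=\xsf,Y=\ysf]$.

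\emph{Truncation and dyadic slicing.} Put the threshold at $t:=\lceil H^{2}\log 2\rceil+1$. Every pair $(\xsf,\ysf)$ with $-\log p(\xsf,\ysf)>t$ is sent to the tail symbol $v_{\infty}$. By Markov's inequality applied to $-\log p(XY)$, whose mean is $H$, we get $\Pbb[V=v_{\infty}]\le H/t$, so $\Pbb[V=v_{\infty}]\cdot H\le 1/\log2$; this is property \ref{i:smalltail4}. Every other pair gets the label $j=\lfloor -\log_{2}p(\xsf,\ysf)\rfloor$. There are only $O(H^{2})$ labels. Conditioned on a fixed label, $XY$ is $2$-uniform, because all its atoms have probabilities within a factor $2$ of each other. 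Its support is the edge set of a bipartite subgraph $\Gsf_{j}\subset\supp XY$, and up to that factor $2$ it is the uniform edge distribution on $\Gsf_{j}$.

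\emph{Regularizing the marginals.} For $XY$ uniform on the edges of a graph, the marginal of $X$ is proportional to the left degrees and that of $Y$ to the right degrees. So what remains is to split the edge set of each $\Gsf_{j}$ into a small number of subgraphs that are almost biregular (all left degrees within a constant factor of each other, and likewise on the right), plus a small leftover. This is exactly the content of \cite[Theorem 3]{alon2007partitioning}: the edges of any bipartite graph with $m$ edges split into $(\log m)^{O(1)}$ almost-biregular pieces.

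The naive approach fails, and I expect it to be the main obstacle. One could bucket left vertices dyadically by degree and then right vertices by degree, but each refinement destroys the regularity gained on the other side. This is why the cited partition theorem is needed: it uses degree thresholds with slack and a potential argument, and absorbs the badly behaved edges into a small remainder.

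\emph{Bookkeeping.} The label $V$ is then the pair ($j$, index of the piece inside $\Gsf_{j}$), with the leftover edges of the partition sent to $v_{\infty}$. Two points need checking.

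First, the leftover must not spoil \ref{i:smalltail4}. Here I would use the quantitative form of the partition theorem with a leftover fraction of order $1/H$, at the price of a larger polylogarithmic number of pieces. Since $\log|\supp XY|$ could be much larger than $H$, it is also safer to first cut off further low-probability atoms, so that the graphs involved have $\log(\text{edges})=O(H^{2})$. Then the number of pieces is $\mathrm{poly}(H)$.

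Second, the entropy bound. The alphabet of $V$ has size $\mathrm{poly}(H+1)$, so $H(V)\le C_{2}\log(H+1)$, which is property \ref{i:entropybound4}. On each piece $v_{i}$ the joint distribution is $2$-uniform and both marginals are $O(1)$-uniform, with the constants multiplying. This gives property \ref{i:regular4} with a universal $\delta_{2}$.
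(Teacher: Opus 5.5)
Your proposal is correct and follows essentially the route of the proof the paper cites (\cite[Theorem 4.2.A]{matveev2026spectral}, built on \cite[Theorem 3]{alon2007partitioning}): a Markov cut-off at $-\log p(XY)\approx H(XY)^{2}$, dyadic slicing of the surviving atoms by probability, and the partition theorem applied to each slice, where each slice automatically has at most $2^{O(H(XY)^{2}+1)}$ edges, so no further cut-off is needed. Two small corrections: that partition theorem splits \emph{all} edges into almost-biregular parts, so there is no leftover to send to $v_{\infty}$, and for $H(XY)$ near $0$ the bound $|\Vsf|=\mathrm{poly}(H(XY)+1)$ alone does not give $H(V)\leq C_{2}\log(H(XY)+1)$, but $H(V)\leq H(XY)$ (since $V$ is a function of $XY$) covers that range.
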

\begin{remark}
  The subscript $2$ in the notation $C_2$ and $\delta_2$ is inherited
  from the general Almost Uniform Decomposition Lemma, where it
  indicates that the decomposition is applied to the $2$-tuple
  $(X,Y)$.  We retain this notation to highlight the origin of these
  constants.
\end{remark}

We also need the following lemma.
\begin{lemma}
  \label{l:unif-subsuported}
  Suppose $(X',Y')$ is a $\delta_{0}$-uniform pair
  subsupported on an admissible graph
  $\Gsf=(\Xsf\sqcup\Ysf,\Esf)$. Then
  \[
    e(X',Y')\in\nbd_{5\log\delta_{0}}\extgr(\Gsf)
  \]
\end{lemma}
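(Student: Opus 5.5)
The natural choice is $\Hsf$ equal to the support graph of $(X',Y')$:
\[
  \Xsf_{\Hsf}=\supp X',\qquad \Ysf_{\Hsf}=\supp Y',\qquad \Esf_{\Hsf}=\supp X'Y'.
\]
Since $(X',Y')$ is subsupported on $\Gsf$, $\Hsf$ is a subgraph of $\Gsf$ with no isolated vertices: every vertex in $\supp X'$ carries positive mass, so it lies on some edge of $\supp X'Y'$. Hence $e(\Gsf|\Hsf)\in\extgr(\Gsf)$. It remains to show
\[
  \|e(X',Y')-e(\Gsf|\Hsf)\|_{1}\leq 5\log\delta_{0}.
\]

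The key elementary fact concerns a single $\delta_0$-uniform variable $Z$ on a support of size $N$. Every probability lies between $p_{\min}$ and $\delta_0 p_{\min}$, and $p_{\min}\le 1/N\le p_{\max}$. Therefore $\log(1/p)\ge \log N-\log\delta_0$ for every atom $p$, which gives
\[
  \log N-\log\delta_{0}\leq H(Z)\leq \log N.
\]
Applying this to $X'$, $Y'$ and $X'Y'$ yields
\begin{align*}
  H(X')&=[\Xsf_{\Hsf}]-a, & H(Y')&=[\Ysf_{\Hsf}]-b, & H(X'Y')&=[\Esf_{\Hsf}]-c,
\end{align*}
with $a,b,c\in[0,\log\delta_0]$.

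The $\ell^{1}$-distance is then
\[
  |a|+|b|+\bigl|I(X':Y')-[\Xsf_{\Hsf}:\Ysf_{\Hsf}]\bigr| = a+b+|{-a}-b+c|.
\]
This is at most $a+b+\max\{a+b,\,c\}\le 4\log\delta_0$, which is within the claimed $5\log\delta_0$. So the point $e(X',Y')$ lies in $\nbd_{5\log\delta_0}\extgr(\Gsf)$.

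No step here is a real obstacle. The only points needing care are that $\Hsf$ is a legitimate subgraph (nonempty, no isolated vertices), which follows from the support definitions, and that the $\delta$-uniformity bound is applied to each of the three supports separately. Admissibility of $\Gsf$ is not used beyond guaranteeing that $\extgr(\Gsf)$ is defined in the setting of this section.
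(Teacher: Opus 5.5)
Your proof is correct and follows the paper's argument: take $\Hsf$ to be the support graph of $(X',Y')$, and apply the estimate $\log N-\log\delta_0\le H\le\log N$ to each of $X'$, $Y'$, $X'Y'$; the paper cites this estimate from earlier work, while you prove it directly. Your treatment of the $z$-coordinate, using $|c-a-b|\le\max\{a+b,c\}$, is slightly sharper than the paper's cruder bound of $3\log\delta_0$ on that coordinate, so you get $4\log\delta_0$ where the paper gets $5\log\delta_0$.
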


\begin{proof}
  Let $\Xsf'\subset\Xsf$, $\Ysf'\subset\Ysf$ and $\Esf'\subset\Esf$ be
  the supports of $X'$, $Y'$ and $X'Y'$, respectively. Let
  $\Hsf=(\Xsf'\sqcup\Ysf', \Esf')$ be the corresponding subgraph of
  $\Gsf$. Since $(X',Y')$ is $\delta_{0}$-uniform,
  by~\cite[Lemma 2.2.B]{matveev2026spectral}, we have
  \begin{align*}
    [\Xsf']
    &\geq
      H(X')\geq [\Xsf']-\log\delta_{0}\\
    [\Ysf']
    &\geq
      H(Y')\geq [\Ysf']-\log\delta_{0}\\
    [\Xsf']
    &\geq
      H(X')\geq [\Xsf']-\log\delta_{0}\\
  \end{align*}
  Therefore
  \[
    \bigl|I(X':Y')-[\Xsf':\Ysf']\bigr|
    \leq 3\log\delta
  \]
  Hence
  \[
    \|e(X',Y')-e(\Gsf|\Hsf)\|_{1}\leq 5\log\delta_{0}
  \]
  Since $e(\Gsf|\Hsf)$ is a point in $\extgr(\Gsf)$, the lemma is
  proven.
\end{proof}

\begin{proof}[Proof of the Theorem~\ref{th:entropic-in-graph}]
  Suppose $(X,Y,W)$ is an extension of $\Pbf$. Denote by $\Wsf$ the
  alphabet of $W$.  Then
  \[
    e(X,Y|W)=\sum_{\wsf\in\Wsf}p_{W}(\wsf)\cdot e(X,Y|\wsf)
  \]
  Therefore 
  \begin{equation}\label{eq:xy|w}
    e(X,Y|W)\in\convexhull\set{e(X,Y|\wsf)\st \wsf\in\Wsf}
  \end{equation}
  It remains to show that for every $\wsf\in \Wsf$
  \[
    e(X,Y|\wsf)\in\convexhull\circ\nbd_{\delta}\bigl(\extgr(\Gsf)\bigr)
  \]

  Fix an arbitrary $\wsf\in \Wsf$ and set $(X',Y'):=(X,Y|\wsf)$.
  Apply Theorem~\ref{th:aud} to the pair $(X',Y')$ to construct an
  extension $(X',Y',V)$ by $V$ with alphabet
  \[
    \Vsf:=\set{v_{0},\dots,v_{k-1},v_{\infty}}
  \]
  such that for every $i\in[k]$ the pair $(X',Y'|v_{i})$ is 
  $\delta_{2}$-uniform and subsupported on $\Gsf$.
  
  First we examine shift in entropy profile caused by conditioning on
  $V$
  \begin{equation}
    \label{eq:xy-xy|v}
    \begin{aligned}
      \|e(X',Y')-e(X',Y'|V)\|_{1}
      \leq
      3H(V)
      &\leq
        3C_2\cdot\log(H(X'Y')+1)\\
      &\leq
        3C_{2}\cdot\log(H(XY)+1)\\
      &\leq
        3C'\cdot\log(H(XY))
    \end{aligned}
  \end{equation}
  The last inequality follows from the assumption that $\Gsf$ is an
  admissible graph and $H(XY)=[\Esf]\geq3$. 
  Next we split
  \begin{align*}
    e(X',Y'|V)=\sum_{i=0}^{k-1}p(v_{i})\cdot
    e(X',Y'|{v_{i}})+p(v_{\infty})\cdot e(X',Y'|{v_{\infty}})
  \end{align*}
  
  Consider all terms of this sum except for the last one: 
  \[
    \sum_{i=0}^{k-1}p(v_{i})\cdot e(X',Y'|{v_{i}})
    =
    \sum_{i=0}^{k-1}p(v_{i})\cdot e(X',Y'|{v_{i}})
    +p(v_{\infty})\vect{0\\0\\0}
  \]
  The right hand side is the convex combination of points
  $e(X',Y'|{v_{i}})$, $i\in[k]$ and point $(0,0,0)\in\Rbb^{3}$.
  Since by Lemma~\ref{l:unif-subsuported} for every $i\in[k]$
  \[
    e(X',Y'|v_{i})\in\nbd_{5\log\delta_{2}}\extgr(\Gsf)
  \]
  and $(0,0,0)\in\extgr(\Gsf)$, we conclude
  \[
    \sum_{i=0}^{k-1}p(v_{i})\cdot e(X',Y'|{v_{i}})
    \in
    \convexhull\circ\nbd_{5\log\delta_{2}}\extgr(\Gsf)
  \]
  On the other hand
  \[
    \|p(v_{\infty})\cdot e(X',Y'|v_{\infty})\|_{1}
    \leq
    3p(v_{\infty})\cdot H(X'Y')
    \leq
    3p(v_{\infty})\cdot H(XY)
    \leq
    \frac3{\log2}
  \]
  Thus we have
  \[
    e(X',Y'|V)
    \in
    \nbd_{\frac3{\log2}}\circ\convexhull\circ\nbd_{5\log\delta_{2}}\extgr(\Gsf)
  \]
  Since taking convex hull and tubular
  neighborhood commute in any normed vector space, we combine the
  last inclusion above
  with~\eqref{eq:xy-xy|v} to get
  \[
    e(X',Y')
    \in
    \convexhull\circ\nbd_{C'\log H(XY)+C''}\extgr(\Gsf)
  \]
  Together with inclusions~\eqref{eq:xy|w} we conclude that for any
  extension $(X,Y,W)$
  \[
    e(X,Y|W)
    \in
    \convexhull\circ\nbd_{C'\log H(XY)+C''}\extgr(\Gsf)
  \]
  Note that $\log H(XY)\geq\log3$ and $C''$ may be absorbed by the
  $log$-term.
  Thus
  \[
    \ext(\Pbf)\subset\convexhull\circ\nbd_{C\log H(XY)}\extgr(\Gsf)
  \]
\end{proof}

\section{Spectral bounds on the shape}\label{s:spectral}
\let\thetheoremSAVE=\thetheorem
\let\theequationSAVE=\theequation
\def\thetheorem{\thesection.\Alph{lemma}}
\def\theequation{\thesection.\Alph{lemma}}

As a culmination of the theory developed in the previous sections, in
this section we prove the spectral bound on the shape function of
random pair uniformly supported on the graph.  The proof of the
theorem in this chapter is relatively short, as the necessary
groundwork has been done in the preceding chapters.

\begin{theorem}\label{th:shape-spectral}
  There is a universal constant $C$ such that for any pair of random
  variables $\Pbf=(X,Y)$ uniformly supported on an admissible
  biregular bipartite graph $\Gsf=(\Xsf\sqcup\Ysf,\Esf)$ with the
  second largest eigenvalue $\lambda_{2}$ and any
  $\alpha,\beta\in[0,1]$, $\alpha+\beta\leq1$ the following bound
  holds
  \[
    \Lcal(\Pbf)(\alpha,\beta)\leq
            \max
        \left\{
        \begin{aligned}
          &\alpha\cdot H(X)+\beta\cdot H(Y)-I(X:Y),\\
          &(\beta-\alpha)H(Y|X) + 2\alpha\log\lambda_2,\\
        &(\alpha-\beta)H(X|Y) + 2\beta\log\lambda_2
        \end{aligned}
          \right\} + C\cdot \log H(XY)
  \]
\end{theorem}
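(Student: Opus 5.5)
The plan is to combine Theorem~\ref{th:entropic-in-graph} with the support-function bounds of Propositions~\ref{p:graph-support-upper} and~\ref{p:graph-support-lower}, using the formal properties of $\Lambda$ from Proposition~\ref{p:legendre}.

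\emph{Step 1: pass from the entropic profile to the graph profile.} By Theorem~\ref{th:entropic-in-graph},
\[
  \ext\Pbf\subset\convexhull\bigl(\nbd_{\delta}\extgr\Gsf\bigr),
  \qquad \delta=C_{0}\log H(XY).
\]
Monotonicity (Proposition~\ref{p:legendre}\ref{i:monotone}) then gives $\Lcal(\Pbf)\leq\Lambda(\nbd_\delta\extgr\Gsf)$, recalling the convention $\Lambda(S)=\Lambda(\convexhull S)$. The Hausdorff distance between $\convexhull\extgr\Gsf$ and $\convexhull\nbd_\delta\extgr\Gsf$ is at most $\delta$. For $(\alpha,\beta)\in[0,1]^{2}$ we have $\rho_K=1$, so the Lipschitz property \ref{i:lip} yields
\[
  \Lcal(\Pbf)(\alpha,\beta)\leq f(\alpha,\beta)+\delta,
  \qquad f=\Lambda(\extgr\Gsf).
\]

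\emph{Step 2: split into upper and lower parts.} Since $\extgr\Gsf=\extgru\Gsf\cup\extgrl\Gsf$, max-additivity \ref{i:union} gives $f=\max\{\fup,\flo\}$. For $\alpha,\beta\ge0$ with $\alpha+\beta\le1$, Proposition~\ref{p:graph-support-upper} bounds
\[
  \fup(\alpha,\beta)\leq\alpha[\Xsf]+\beta[\Ysf]-[\Xsf:\Ysf]+\log2 .
\]
Proposition~\ref{p:graph-support-lower} bounds $\flo$ by $(\beta-\alpha)[\Ysf|\Xsf]+2\alpha\log\lambda_2+\log2$ when $\alpha\le\beta$, and by the symmetric expression when $\beta\le\alpha$.

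\emph{Step 3: match the case structure to the max.} In each region the bound for $\flo$ is dominated by the maximum of the last two terms in the statement. So $\flo$ is at most that maximum plus $\log 2$ throughout the triangle. For a uniform pair on $\Gsf$ we have the identifications
\[
  [\Xsf]=H(X),\quad [\Ysf]=H(Y),\quad [\Xsf:\Ysf]=I(X:Y),\quad [\Ysf|\Xsf]=H(Y|X),\quad [\Xsf|\Ysf]=H(X|Y).
\]
Substituting these, we obtain the three-term maximum plus $\log2+C_0\log H(XY)$. Admissibility gives $H(XY)=[\Esf]\ge3$, so $\log H(XY)$ is bounded below by a positive constant. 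The additive $\log 2$ can therefore be absorbed into $C\log H(XY)$.

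\emph{Main obstacle.} I expect no serious obstacle, since the heavy lifting is done by Theorem~\ref{th:entropic-in-graph} and EML-$\log$. The only delicate point is the Lipschitz step. It requires $\convexhull$ and $\nbd_\delta$ to interact correctly, which holds because the neighborhood of a convex hull contains the hull of the neighborhood. It also uses the fact that restricting to $[0,1]^2$ makes $\rho_K=1$, so the error is exactly $\delta$ rather than a multiple of it.
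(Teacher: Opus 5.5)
Your proposal is correct and follows essentially the same route as the paper: Theorem~\ref{th:entropic-in-graph} together with the Lipschitz property of $\Lambda$ gives $\Lcal(\Pbf)\leq f+C\log H(XY)$, and max-additivity over $\extgru$ and $\extgrl$ combined with Propositions~\ref{p:graph-support-upper} and~\ref{p:graph-support-lower} then gives the three-term bound. Your remarks that $\rho_K=1$ on $[0,1]^2$, that taking convex hulls commutes with $\delta$-neighborhoods, and that the $\log 2$ is absorbed because $H(XY)\geq 3$ spell out details the paper leaves implicit.
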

  \begin{remark}
    The conclusion of the theorem above can be rewritten as
    \[
      \Lcal(\Pbf)(\alpha,\beta)\leq
      \max
      \left\{
        \begin{aligned}
          &\alpha\cdot H(X)+\beta\cdot H(Y)-I(X:Y),\\
          &\beta\cdot H(Y|X) + \alpha\log\frac{\lambda_2^{2}}{d_{1}},\\
          &\alpha\cdot H(X|Y) + \beta\log\frac{\lambda_2^{2}}{d_{2}}
        \end{aligned}
      \right\} + C\cdot \log H(XY)
    \]
    where $d_{i}$ are the degrees of $\Gsf$.  Note that the terms
    $\log\frac{\lambda_2^{2}}{d_{i}}$ measure the deviation of the
    graph from being an expander. If the spectral terms are ignored,
    then the right-hand side is equal to $\Lcal_{\min}(\Pbf)$,
    see Section~\ref{sec:lower-bound-for-shape}.
\end{remark}
\begin{proof}
  To prove the theorem we will use Theorem~\ref{th:entropic-in-graph}, 
  which allows to compare $\Lcal(\Pbf)$ with the support function of
  the extension profile of $\Gsf$. The later is the union of its lower
  and the upper parts, and we use
  Propositions~\ref{p:graph-support-upper}
  and~\ref{p:graph-support-lower} to derive the necessary bounds. The
  detailed proof follows.

  Recall that we defined
  \[
    \Lcal(\Pbf)(\alpha,\beta):=\Lambda(\ext(\Pbf))
  \]
  By Theorem~\ref{th:entropic-in-graph} there is $C>0$ such that with
  the notation $\delta:=C\cdot \log H(XY)$ we have
  \[
    \ext\Pbf\subset\convexhull\bigl(\nbd_{\delta}\extgr\Gsf\bigr) 
  \]
  For brevity, we adopt the following notation for the support
  functions of the extension profile of $\Gsf$ and its lower and upper
  parts:
  \[
    f:=\Lambda\bigl(\extgr(\Gsf)\bigr),
    \qquad
    \flo:=\Lambda\bigl(\extgrl(\Gsf)\bigr),
    \qquad
    \fup:=\Lambda\bigl(\extgru(\Gsf)\bigr)
  \]
  Then, using Proposition~\ref{p:legendre}\ref{i:lip} we obtain
  \begin{equation}\label{eq:L<f}
    \Lcal(\Pbf)(\alpha,\beta)\leq f(\alpha,\beta)+C\cdot\log H(XY)
  \end{equation}
  From Proposition~\ref{p:legendre}\ref{i:union} (max-additivity of
  the support function with respect to unions) we have
  \begin{equation}\label{eq:f<max}
    \begin{aligned}
      f(\alpha,\beta)
      &\leq
        \max\set{\flo(\alpha,\beta),\fup(\alpha,\beta)}\\
      &\leq
        \max
      \left\{
        \begin{aligned}
          &\alpha\cdot[\Xsf]+\beta\cdot[\Ysf]-[\Xsf:\Ysf],\\
          &(\beta-\alpha)[\Ysf|\Xsf] + 2\alpha\log\lambda_2,\\
          &(\alpha-\beta)[\Xsf|\Ysf] + 2\beta\log\lambda_2
        \end{aligned}
            \right\} +\log2\\
      &=
        \max
        \left\{
        \begin{aligned}
          &\alpha\cdot H(X)+\beta\cdot H(Y)-I(X:Y),\\
          &(\beta-\alpha)H(Y|X) + 2\alpha\log\lambda_2,\\
        &(\alpha-\beta)H(X|Y) + 2\beta\log\lambda_2
        \end{aligned}
          \right\} +\log2\\    
    \end{aligned}
  \end{equation}
  Here we used Propositions~\ref{p:graph-support-upper}
  and~\ref{p:graph-support-lower} to bound $\flo$ and $\fup$.
  A combination of~\eqref{eq:L<f} and~\eqref{eq:f<max} gives the desired
  inequality.
\end{proof}

\let\thetheorem=\thetheoremSAVE
\let\theequation=\theequationSAVE

\section{Applications}\label{s:applications}
\subsection{MMRV inequality and entanglement}
\label{sec:mmrv}
Recall the MMRV inequality, \cite{makarychev2002new}, that holds for
any quintuple $(X,Y,A,B,W)$
\begin{equation*}
  \ing(X,Y,A,B)\geq-I(X:Y|W)-I(X:W|Y)-I(W:Y|X)
\end{equation*}
Define the \emph{entanglement}%
\footnote{In \cite{zhang3new} this quantity was denoted $W(X,Y)$, and
  its value was connected with \emph{approximate representation}
  of the mutual information between $X$ and $Y$.} %
of the pair $\Pbf=(X,Y)$ by
\begin{equation}\label{eq:entanglement}
  \mmrv(X,Y):=\inf_{W}\big(I(X:Y|W)+I(X:W|Y)+I(W:Y|X)\big)
\end{equation}
With this notation, the MMRV inequality reads
\begin{equation*}
  \ing(X,Y,A,B)\geq-\mmrv(X,Y)
\end{equation*}
When the entanglement is zero, that is
\[
  I(X:W|Y)=I(W:Y|X)=I(X:Y|W)=0
\]
the MMRV inequality becomes Ingleton inequality and the pair
satisfies G\'acs-Körner extractable information condition by Double
Markovity Lemma,
~\cite{csiszar2011information,makarychev2002new,kaced2013conditional}.
That is, there is $W$ such that
\[
  H(W|X)=H(W|Y)=I(X:Y|W)=0
\]
In that case the pair has a very simple structure
\[
  \Pbf = (X',Y')\oplus(W,W)
\]
where $X'\indep Y'$.

\subsubsection{Spectral bound on the entanglement}
The entanglement of the pair $\Pbf$ can be expressed in terms of its
shape function. 
\begin{proposition}\label{p:chinese-bound}
  For any pair $\Pbf=(X,Y)$ holds
  \begin{equation*}
    \mmrv(\Pbf) = 2\Lcal(\Pbf)(\tfrac12,\tfrac12)-3 \Lcal(\Pbf)(\tfrac13,\tfrac13)
  \end{equation*}
\end{proposition}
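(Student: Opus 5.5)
The plan is to rewrite the entanglement functional directly in the coordinates $(x,y,z)=e(X,Y|W)$ of the extension profile. Then the infimum over $W$ becomes a supremum of the linear functional $l_{1/3,1/3}$ over $\ext(\Pbf)$, which is exactly the shape function at $(\tfrac13,\tfrac13)$. The constant term that appears will be identified with $2\Lcal(\Pbf)(\tfrac12,\tfrac12)$ using Proposition~\ref{p:L-upper}\ref{i:ur-triangle}.

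\textbf{Step 1: express the integrand in profile coordinates.} Fix an extension $(X,Y,W)$ and set $x=H(X|W)$, $y=H(Y|W)$, $z=I(X:Y|W)$. As in the proof of Lemma~\ref{l:ext-shannon-bounds}, we have $H(X|YW)=x-z$ and $H(Y|XW)=y-z$. Hence
\[
  I(X:W|Y)=H(X|Y)-H(X|YW)=H(X|Y)-x+z,
  \qquad
  I(Y:W|X)=H(Y|X)-y+z .
\]
Adding $I(X:Y|W)=z$ gives
\[
  I(X:Y|W)+I(X:W|Y)+I(W:Y|X)
  =H(X|Y)+H(Y|X)-3\Bigl(\tfrac13x+\tfrac13y-z\Bigr).
\]

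\textbf{Step 2: pass to the infimum.} Taking the infimum over all $W$ turns the left-hand side into $\mmrv(\Pbf)$. On the right, the bracket is $l_{1/3,1/3}(e(X,Y|W))$, so its supremum over $W$ is $\Lcal(\Pbf)(\tfrac13,\tfrac13)$ by definition of the shape function. Therefore
\[
  \mmrv(\Pbf)=H(X|Y)+H(Y|X)-3\Lcal(\Pbf)(\tfrac13,\tfrac13).
\]

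\textbf{Step 3: identify the constant term.} The point $(\tfrac12,\tfrac12)$ satisfies $\alpha+\beta=1$, so Proposition~\ref{p:L-upper}\ref{i:ur-triangle} applies and gives
\[
  \Lcal(\Pbf)(\tfrac12,\tfrac12)=\tfrac12H(X)+\tfrac12H(Y)-I(X:Y)=\tfrac12\bigl(H(X|Y)+H(Y|X)\bigr).
\]
Substituting this into Step 2 yields the claimed identity.

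There is no real obstacle in this argument. The only point needing care is the sign bookkeeping when converting the three conditional mutual informations into $x,y,z$, together with checking that $(\tfrac12,\tfrac12)$ lies in the region where Proposition~\ref{p:L-upper} gives equality rather than just an upper bound.
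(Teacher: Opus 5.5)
Your proof is correct and follows essentially the same route as the paper. You rewrite $I(X:Y|W)+I(X:W|Y)+I(W:Y|X)$ as $H(X|Y)+H(Y|X)-3\,l_{1/3,1/3}\bigl(e(X,Y|W)\bigr)$, pass to the infimum over $W$, and identify $H(X|Y)+H(Y|X)$ with $2\Lcal(\Pbf)(\tfrac12,\tfrac12)$ via Proposition~\ref{p:L-upper}\ref{i:ur-triangle}.
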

\begin{proof}
  Recall that the value $\Lcal(\Pbf)(\tfrac12,\tfrac12)$ is forced by
  Shannon inequalities
  \[
    \Lcal(\Pbf)(\tfrac12,\tfrac12)=L(X,Y)=\frac12\big(H(X|Y)+H(Y|X)\big)
  \]
  For any extension $(X,Y,W)$, we have
  \begin{align*}
    I(X:W|Y)
    &=
    H(X|Y)-H(X|W)+I(X:Y|W)\\
    I(W:Y|X)
    &=
    H(Y|X)-H(Y|W)+I(X:Y|W)
  \end{align*}
  Therefore
  \begin{align*}
    &I(X:Y|W)+I(X:W|Y)+I(W:Y|X)\\
    &\qquad =
      H(X|Y)+H(Y|X) -\bigl(H(X|W)+H(Y|W)-3I(X:Y|W)\bigr)
  \end{align*}
  Taking the infimum over all extensions $W$ gives
  \[
    \mmrv(\Pbf)
    =
    2L(X,Y)-3\Lcal(\Pbf)\left(\tfrac13,\tfrac13\right)
    =
    2\Lcal(\Pbf)\left(\tfrac12,\tfrac12\right)-3\Lcal(\Pbf)\left(\tfrac13,\tfrac13\right)
  \]
\end{proof}

Applying the spectral bound for $\Lcal(\Pbf)$, we get the following
spectral estimate for the entanglement, Theorem~\ref{th:main-intro} in
the introduction.
\begin{theorem}\label{th:entanglement-spectral}
  For any pair of random variables $\Pbf=(X,Y)$ uniformly supported on
  an admissible graph $\Gsf$ with the largest eigenvalue $\lambda_{1}$
  and the second largest eigenvalue
  $\lambda_{2}$ holds
  \[
    \mmrv(\Pbf)
    \geq
    \min\set{I(X:Y),2\log\frac{\lambda_{1}}{\lambda_{2}}}
    -O\bigr(\log H(XY)\bigl)
  \]
\end{theorem}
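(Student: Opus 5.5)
We will combine Proposition~\ref{p:chinese-bound} with Theorem~\ref{th:shape-spectral} evaluated at $(\alpha,\beta)=(\tfrac13,\tfrac13)$. By Proposition~\ref{p:chinese-bound},
\[
  \mmrv(\Pbf)=2L(X,Y)-3\Lcal(\Pbf)(\tfrac13,\tfrac13)=2\log\lambda_{1}-3\Lcal(\Pbf)(\tfrac13,\tfrac13),
\]
where we use $L(X,Y)=\log\lambda_{1}$ for pairs uniform on their support. So the task is to bound $\Lcal(\Pbf)(\tfrac13,\tfrac13)$ from above. The point $(\tfrac13,\tfrac13)$ satisfies $\alpha+\beta\le1$ and $\alpha=\beta$. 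In Theorem~\ref{th:shape-spectral} the second and third entries of the maximum then both become $2\alpha\log\lambda_{2}=\tfrac23\log\lambda_{2}$, because the terms $(\beta-\alpha)H(Y|X)$ and $(\alpha-\beta)H(X|Y)$ vanish. Hence
\[
  \Lcal(\Pbf)(\tfrac13,\tfrac13)\le\max\Bigl\{\tfrac13H(X)+\tfrac13H(Y)-I(X:Y),\ \tfrac23\log\lambda_{2}\Bigr\}+C\log H(XY).
\]

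Multiplying by $3$ and subtracting from $2\log\lambda_{1}$, we get that $\mmrv(\Pbf)$ is at least the minimum of two quantities, minus $3C\log H(XY)$. The first quantity is
\[
  2\log\lambda_{1}-H(X)-H(Y)+3I(X:Y).
\]
The second is $2\log\lambda_{1}-2\log\lambda_{2}=2\log\frac{\lambda_{1}}{\lambda_{2}}$, which is exactly the spectral term in the statement.

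For the first quantity, we write $2\log\lambda_{1}=H(X|Y)+H(Y|X)=H(X)+H(Y)-2I(X:Y)$. The first quantity is then exactly $I(X:Y)$, so the minimum is $\min\set{I(X:Y),2\log\frac{\lambda_{1}}{\lambda_{2}}}$ as claimed. The error term $3C\log H(XY)$ has a universal constant, because the constant in Theorem~\ref{th:shape-spectral} is universal.

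There is no real obstacle here. The step needing care is checking that Theorem~\ref{th:shape-spectral} applies at $(\tfrac13,\tfrac13)$: it lies in the lower-left triangle and inside $[0,1]^{2}$. The other step needing care is the sign bookkeeping, since an upper bound on $\Lcal$ must turn into a lower bound on $\mmrv$ via the coefficient $-3$. Beyond that, we only need the identity $2\log\lambda_{1}=H(X|Y)+H(Y|X)$, which was recorded for uniform pairs in Section~\ref{s:relation}.
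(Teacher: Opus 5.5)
Your proposal is correct and follows the paper's own proof essentially verbatim. You evaluate Theorem~\ref{th:shape-spectral} at $(\tfrac13,\tfrac13)$, where both spectral entries reduce to $\tfrac23\log\lambda_{2}$, and substitute this into Proposition~\ref{p:chinese-bound} using $2\log\lambda_{1}=H(X|Y)+H(Y|X)$; this turns the maximum into $\min\set{I(X:Y),2\log\frac{\lambda_{1}}{\lambda_{2}}}$ with error $3C\log H(XY)$.
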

\begin{proof}
  Using the spectral bound on $\Lcal(\Pbf)$, Theorem~\ref{th:shape-spectral}, we obtain
  \[
    \Lcal(\Pbf)(\tfrac13,\tfrac13)
    \leq
    \max\set{\frac13 H(X)+\frac13
      H(Y)-I(X:Y),\frac23\log\lambda_{2}}+C\cdot\log H(XY)
  \]
  Then by Proposition~\ref{p:chinese-bound} 
  \begin{align*}
    \mmrv(\Pbf)
    &=
      2\Lcal(\Pbf)\left(\tfrac12,\tfrac12\right)-3\Lcal(\Pbf)\left(\tfrac13,\tfrac13\right)\\
    &\geq
      H(X|Y)+H(Y|X)-\max\set{H(X)+H(Y)-3I(X:Y),2\log\lambda_{2}}\\
    &\quad
      -3C\cdot\log H(XY)\\
    &=
      \min\set{I(X:Y),2\log\frac{\lambda_{1}(\Gsf)}{\lambda_{2}(\Gsf)}}
      -O\bigr(\log H(XY)\bigl)
  \end{align*}
\end{proof}
\subsubsection{The case of an expander}
If the pair $(X,Y)$ is uniformly supported on a right-heavy expander $\Gsf$.
Then the conclusion of Theorem~\ref{th:entanglement-spectral} reads 
\begin{equation*}
  \mmrv(\Pbf)
  \geq
  \min\set{I(X:Y),H(X|Y)}
  +O\bigl(\log H(XY)\bigr)
\end{equation*}
where implicit constants in $O\bigl(\log H(XY)\bigr)$ depend on
  the constant $C$ from the definition of expander family in
  Section~\ref{s:expanders}. Importantly, the inequality
\begin{equation*}
  \ing(X,Y)\geq -\min\set{I(X:Y),H(X|Y)}
\end{equation*}
is Shannon-type by \cite[Lemma 8.A]{matveev2026spectral}. Thus in the
case of expander the MMRV inequality is Shannon-type up to logarithmic
term $O(\log H(XY))$.

\subsection{When $\Lcal$ and $\ext$ are minimal or maximal possible?}
Recall that for a pair $\Pbf=(X,Y)$ we defined two sets
\[
  \ext_{\min}(\Pbf),\ext_{\max}(\Pbf)\subset\Rbb^{3}
\]
and two functions $\Lcal_{\min}(\Pbf)$, $\Lcal_{\max}(\Pbf)$ on the
square $[0,1]^{2}$, equal to the support functions of the above sets.  We
have the following bounds
\begin{align*}
  \ext_{\min}(\Pbf)\subset\ext&(\Pbf)\subset\ext_{\max}(\Pbf)\\
  \Lcal_{\min}(\Pbf)\leq\Lcal(&\Pbf)\leq\Lcal_{\max}(\Pbf)
\end{align*}
where $\ext_{\min}(\Pbf)$, $\ext_{\max}(\Pbf)$, $\Lcal_{\min}(\Pbf)$
and $\Lcal_{\max}(\Pbf)$ depend only on the entropy profile of $\Pbf$.

In this section we investigate under what conditions on the pair
$\Pbf$ the lower and upper bounds are (almost) attained.

\subsubsection{When the upper bound is attained?}
We begin with a characterization of the pair with maximal possible
extension profile.
\begin{theorem}\label{p:maxprofile}
  Let $\Pbf=(X,Y)$ be a pair of random variables. Then
  \[
    \Lcal(\Pbf)(\alpha,\beta)=\Lcal_{\max}(\Pbf)(\alpha,\beta)
  \]
  for all $(\alpha,\beta)\in[0,1]^{2}$ if and
  only if
  \[
    \Ecal(\Pbf)=0
  \]
\end{theorem}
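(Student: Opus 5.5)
The plan is to prove the two directions separately, using Proposition~\ref{p:chinese-bound} to link $\Ecal(\Pbf)$ with the values of the shape function at $(\tfrac12,\tfrac12)$ and $(\tfrac13,\tfrac13)$.

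\emph{Only if.} Suppose $\Lcal(\Pbf)=\Lcal_{\max}(\Pbf)$ on $[0,1]^{2}$. At $(\tfrac13,\tfrac13)$ we have
\[
  \Lcal_{\max}(\Pbf)(\tfrac13,\tfrac13)=\max\set{\tfrac13(H(X|Y)+H(Y|X)),\ \tfrac13(H(X)+H(Y))-I(X:Y)}.
\]
Since $H(X)+H(Y)-3I(X:Y)=H(X|Y)+H(Y|X)-I(X:Y)\le H(X|Y)+H(Y|X)$, the first term is the maximum. Hence $3\Lcal(\Pbf)(\tfrac13,\tfrac13)=H(X|Y)+H(Y|X)=2\Lcal(\Pbf)(\tfrac12,\tfrac12)$, and Proposition~\ref{p:chinese-bound} gives $\Ecal(\Pbf)=0$.

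\emph{If.} Suppose $\Ecal(\Pbf)=0$. By the Double Markovity discussion in Section~\ref{sec:mmrv}, there is a $W$ with $H(W|X)=H(W|Y)=I(X:Y|W)=0$. In particular $H(W)=I(X:Y)$, since $I(X:Y)=I(X:Y:W)+I(X:Y|W)$ and $I(X:Y:W)=H(W)$ when $W$ is a function of both $X$ and $Y$. This $W$ gives the profile point
\[
  e(X,Y|W)=\bigl(H(X)-I(X:Y),\,H(Y)-I(X:Y),\,0\bigr)=\bigl(H(X|Y),H(Y|X),0\bigr)\in\ext(\Pbf).
\]
Evaluating $l_{\alpha\beta}$ at this point yields $\Lcal(\Pbf)(\alpha,\beta)\ge\alpha H(X|Y)+\beta H(Y|X)$. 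The choice $W=\bullet$ yields $\Lcal(\Pbf)(\alpha,\beta)\ge \alpha H(X)+\beta H(Y)-I(X:Y)$. Together these give $\Lcal(\Pbf)\ge\Lcal_{\max}(\Pbf)$ on the square, and the reverse inequality is Proposition~\ref{p:L-upper}.

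The main obstacle is the ``if'' direction: one must check that the infimum defining $\Ecal$ is attained, so that $\Ecal=0$ really produces such a $W$. This follows from compactness, using Proposition~\ref{p:ext-compact-convex} and the fact that $\Ecal$ is a continuous function of the profile point via Proposition~\ref{p:chinese-bound}. Concretely, $\Ecal(\Pbf)=0$ means $\Lcal(\Pbf)(\tfrac13,\tfrac13)$ equals its upper bound $\tfrac13(H(X|Y)+H(Y|X))$. By compactness this bound is attained at some point $(x,y,z)\in\ext(\Pbf)$, and by Lemma~\ref{l:ext-shannon-bounds} that point must satisfy $z=0$, $x=H(X|Y)$ and $y=H(Y|X)$.

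This gives a fully self-contained alternative for the ``if'' direction that avoids Double Markovity. We still need $\Lcal(\Pbf)\ge\alpha H(X|Y)+\beta H(Y|X)$ on the whole lower triangle. The maximum at the interior point $(\tfrac13,\tfrac13)$ equals the affine upper bound there, so by the remark after Corollary~\ref{cor:lipschitz} convexity forces equality on the whole triangle. On the upper triangle, Proposition~\ref{p:L-upper}\ref{i:ur-triangle} already gives equality.
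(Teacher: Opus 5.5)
Your proof is correct and is essentially the paper's argument. The ``only if'' direction is the same computation at the interior point $(\tfrac13,\tfrac13)$; you route it through Proposition~\ref{p:chinese-bound} instead of picking a maximizing $W$, which removes the need for attainment in that direction. The ``if'' direction uses the same witness point $(H(X|Y),H(Y|X),0)\in\ext(\Pbf)$, together with $W=\bullet$ and Proposition~\ref{p:L-upper}.

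Your compactness argument, which gets that point directly from $\Ecal(\Pbf)=0$ via Lemma~\ref{l:ext-shannon-bounds}, fills a step the paper leaves implicit: that the infimum defining $\Ecal$ is attained. It also shows Double Markovity is not needed. Once that point lies in $\ext(\Pbf)$, your final appeal to the convexity remark is redundant, since evaluating $l_{\alpha\beta}$ at the point already gives the bound on the whole square.
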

As we remarked before, pairs with zero entanglement satisfy
G\'acs--Körner extractable mutual information condition and have simple structure
\[
  (X,Y)=(X',Y')\oplus (W,W)
\]
where $X'\indep Y'$.  Thus, Theorem~\ref{p:maxprofile} says that
$\Pbf$ has extractable mutual information, if and only if
\[
  \Lcal(\Pbf) =\Lcal_{\max}(\Pbf)
\]
\begin{proof}
  On the upper-right triangle,
  $\set{0\leq\alpha,\beta\leq1,\;\alpha+\beta\geq1}$, function
  $\Lcal(\Pbf)$ is affine and determined by the entropy profile of
  $\Pbf$.
  Therefore we have to examine its behavior only on the lower-left
  triangle.
  
  Suppose that
  \[
    \Lcal(\Pbf)(\alpha,\beta)
    =
    \alpha\cdot H(X|Y)+\beta\cdot H(Y|X)
  \]
  for all $(\alpha,\beta)\in[0,1]^2$ with $\alpha+\beta\leq1$. Choose
  an interior point of this triangle, say $(\alpha_{0},\beta_{0})=(1/3,1/3)$.
  Since $\ext(\Pbf)$ is compact,  there
  exists an extension $(X,Y,W)$ such that
  \[
    \alpha_{0}\cdot H(X|W)+\beta_{0}\cdot H(Y|W)-I(X:Y|W)
    =
    \alpha_{0}\cdot H(X|Y)+\beta_{0}\cdot H(Y|X).
  \] 
  where the right-hand side is exactly the value of
  $\Lcal_{\max}(\Xbf)$ at $(\alpha_{0},\beta_{0})$.
  The left-hand sides can be rewritten as
  \[
    \alpha_{0}\cdot H(X|YW)+\beta_{0}\cdot H(Y|XW)
    +(\alpha_{0}+\beta_{0}-1)I(X:Y|W)
  \]
  Thus we have
  \begin{equation}
    \begin{aligned}
      \alpha_{0}\cdot I(X:W|Y)+\beta_{0}\cdot I(Y:W|X)+(1-\alpha_{0}-\beta_{0})I(X:Y|W)=0
    \end{aligned}
  \end{equation}
  Since $\alpha_{0}>0$, $\beta_{0}>0$, and $1-\alpha_{0}-\beta_{0}>0$,
  the equality forces 
  \[
    I(X:W|Y)=I(Y:W|X)=I(X:Y|W)=0.
  \]
  This implies $\mmrv(X,Y)=0$.
  
  Conversely, if $\mmrv(\Pbf)=0$, then there exists extending variable
  $W$ satisfying G\'acs--Körner condition. Then
  \begin{align*}
    \Lcal(\Pbf)(\alpha,\beta)
    &\geq
      l_{\alpha\beta}\big(H(X|W),H(Y|W),I(X:Y|W)\big)\\
    &=
      \alpha\cdot H(X|Y)+\beta\cdot H(Y|X)
  \end{align*}
\end{proof}
\begin{corollary}
  The extension profile of a pair $\Pbf$ is maximal if and only if
  $\Pbf$ satisfies G\'acs--Körner extractable information condition or,
  equivalently, if the pair splits
  \[
    \Pbf=(X',Y')\oplus(W,W)
  \]
  where $X'\indep Y'$.
\end{corollary}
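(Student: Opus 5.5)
The plan is to prove the corollary as a chain of equivalences, each step supplied by a result already stated in the paper. The core is
\[
  \ext(\Pbf)=\ext_{\max}(\Pbf)
  \iff
  \Lcal(\Pbf)=\Lcal_{\max}(\Pbf)
  \iff
  \mmrv(\Pbf)=0 .
\]
The second equivalence is exactly Theorem~\ref{p:maxprofile}. The passage from $\mmrv(\Pbf)=0$ to the G\'acs--Körner condition, and then to the splitting $\Pbf=(X',Y')\oplus(W,W)$ with $X'\indep Y'$, is the Double Markovity Lemma together with the structural remark recorded after Theorem~\ref{p:maxprofile}. For the converse of that passage I would check directly: if $\Pbf=(X',Y')\oplus(W,W)$, take $W$ itself as the extension. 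Then $I(X:Y|W)=I(X':Y')=0$ and $H(W|X)=H(W|Y)=0$, so all three terms in \eqref{eq:entanglement} vanish and $\mmrv(\Pbf)=0$.

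What remains is the first equivalence, between maximality of the profile and maximality of the shape. First I will confirm that $\Lcal_{\max}(\Pbf)$ is the support function of $\ext_{\max}(\Pbf)$ on $[0,1]^2$. The set $\ext_{\max}(\Pbf)$ is the polytope cut out by the inequalities of Lemma~\ref{l:ext-shannon-bounds}. I will list its lower vertices: $(0,0,0)$, $(H(X|Y),0,0)$, $(0,H(Y|X),0)$, $(H(X|Y),H(Y|X),0)$, and the top point $(H(X),H(Y),I(X:Y))$. Evaluating $l_{\alpha\beta}$ at these points, the maximum over $[0,1]^2$ is attained at the fourth or the fifth, which reproduces the formula for $\Lcal_{\max}$.

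With that identification, the direction $\ext=\ext_{\max}\Rightarrow\Lcal=\Lcal_{\max}$ is immediate. For the converse, Lemma~\ref{l:legendre-injective} gives $\ext(\Pbf)^{\uparrow}=\ext_{\max}(\Pbf)^{\uparrow}$ from equality of the partial support functions. One subtlety is that Lemma~\ref{l:legendre-injective} needs equality on all of $\Rbb^2$, while Theorem~\ref{p:maxprofile} only gives it on $[0,1]^2$. Outside the square, though, both functions are forced by the Shannon bounds, as noted after Proposition~\ref{p:L-upper}, so they agree everywhere.

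The main obstacle is that upper closures agreeing is weaker than the sets agreeing, because $\ext_{\max}$ also has an upper part. Two routes are available. The simpler one is to read ``maximal'' as maximality of the upper closure (equivalently, of the lower envelope), which is the sense in which Theorem~\ref{th:shape-minmax} phrases it. To obtain literal equality of sets instead, I would prove directly that a G\'acs--Körner pair realizes every vertex of $\ext_{\max}(\Pbf)$, and then conclude by convexity (Proposition~\ref{p:ext-compact-convex}). The vertex $(H(X|Y),H(Y|X),0)$ is realized by $W$ itself. The upper vertices, such as $(H(X),H(Y|X),I(X:Y))$ and $(H(X|Y),H(Y),I(X:Y))$, should be realizable using the splitting: condition on $W\oplus$ (suitable parts of) $X'$ or $Y'$, with convex mixtures built as in the proof of Proposition~\ref{p:ext-compact-convex}. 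I would first try to write each vertex explicitly as $e(X,Y|W')$ for $W'$ chosen among joints of $W$, $X'$ and $Y'$, checking the cases one at a time.
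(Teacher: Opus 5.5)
Your route (1) is the paper's own argument: the paper gives no separate proof and reads the corollary off Theorem~\ref{p:maxprofile}, with $\ext$ and the shape tied together via Lemma~\ref{l:legendre-injective}. As written, though, your proposal has three problems.

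First, route (2) cannot work. Literal equality $\ext(\Pbf)=\ext_{\max}(\Pbf)$ fails even for split pairs, so the corollary can only mean $\ext(\Pbf)^{\uparrow}=\ext_{\max}(\Pbf)^{\uparrow}$. Your upper vertices are also misidentified: the top of $\ext_{\max}$ has vertices such as $(m,m,m)$ with $m=\min\{H(X),H(Y)\}$. Take $X\indep Y$, both Bernoulli$(0.1)$, and $h=H(X)=H(Y)$. A $W$ realizing $(h,h,h)$ would need $W\indep X$, $W\indep Y$ and $H(X|YW)=H(Y|XW)=0$. Then each $(X,Y|\wsf)$ is supported on the graph of a bijection of $\{0,1\}$ carrying the law $(0.9,0.1)$ to itself. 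That bijection must be the identity, which forces $X=Y$.

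Second, the remark you cite to pass from $[0,1]^{2}$ to $\Rbb^{2}$ is false for general pairs. Put $c:=I(X:Y)$. At $(\alpha,\beta)=(-1,3)$ the unique maximizer of $l_{\alpha\beta}$ on $\ext_{\max}$ is $(c,H(Y),c)$. Realizing it needs $W\indep Y$, $I(Y:W|X)=0$ and $H(X|YW)=0$. For a binary symmetric pair with crossover in $(0,\tfrac12)$ these force $X$ to be a function of $W$, hence $X\indep Y$, which is impossible. For G\'acs--K\"orner pairs the agreement you need is true, but it must be proved. The cleanest way is to bypass Lemma~\ref{l:legendre-injective}: check that the seven vertices of the lower envelope of $\ext_{\max}$, namely $(0,0,0)$, $(H(X|Y),0,0)$, $(0,H(Y|X),0)$, $(H(X|Y),H(Y|X),0)$, $(H(X),H(Y),c)$, $(H(X),c,c)$, $(c,H(Y),c)$, all lie in $\ext(\Pbf)$, and conclude by convexity. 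Only the last two are new. With $W_{0}$ the common part, $I(X:Y|W_{0})=0$ lets you write $Y=g(W_{0},W)$ with a finite-valued $W\indep X$. Then $H(Y|W)=H(W_{0})=c$ and $H(Y|XW)=0$, so $e(X,Y|W)=(H(X),c,c)$; the last vertex is symmetric. The converse needs nothing outside the square. The point $(H(X|Y),H(Y|X),0)$ lies in $\ext(\Pbf)^{\uparrow}$, and $z\geq0$ on $\ext(\Pbf)$ puts it in $\ext(\Pbf)$ itself. That says exactly that all three terms in \eqref{eq:entanglement} vanish for some $W$.

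Third, the clause ``equivalently, the pair splits'' cannot be obtained from the structural remark after Theorem~\ref{p:maxprofile}. The G\'acs--K\"orner condition does not imply $\Pbf=(X',Y')\oplus(W,W)$; only the direction you verified holds, and the paper's remark and introduction make the same overstatement. Here is a counterexample. Let $W$ be a uniform bit. If $W=0$, let $X=Y=0$; if $W=1$, let $X,Y$ be independent and uniform on $\{1,2\}$. Then $W$ is the indicator of $X\neq0$ and also of $Y\neq0$, $I(X:Y|W)=0$, and $H(W)=1=I(X:Y)$, so the condition holds. In any splitting, $W'$ is a common function of $X$ and $Y$ with $H(W')=I(X:Y)$. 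Since the support graph has just two components, this gives $W'\equiv W$. Then $(X,Y|\wsf)$ would have the same law for both values of $\wsf$, whereas $H(X|W{=}0)=0\neq1=H(X|W{=}1)$. The correct structural statement is only that $X$ and $Y$ are conditionally independent given their G\'acs--K\"orner common part.
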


\subsubsection{When are the profile and the shape function minimal?}
We proceed with a discussion of pairs with (approximately) minimal
possible extension profile. Recall that $\Gsf$ belongs to expander
family if there is constant $C\geq0$ such that one of the conditions
is satisfied for all graphs in the family
\begin{align}
  &2\log\lambda_{2}(\Gsf)\leq \log d_{1}(\Gsf)+C
    \tag{right-heavy expander}\\
  &2\log\lambda_{2}(\Gsf)\leq \log d_{2}(\Gsf)+C
    \tag{left-heavy expander}\\
  &2\log\lambda_{2}(\Gsf)\leq \log \lambda_{1}(\Gsf)+C
    \tag{balanced expander}
\end{align}
\begin{theorem}\label{th:L-min}
  Let $\Pbf$ be uniformly supported on a right-heavy expander
  $\Gsf$. Then for every $\alpha,\beta\in[0,1]$
  \[
    \Lcal(\Pbf)(\alpha,\beta)
    \leq
    \max
    \left\{
      \begin{aligned}
        &\alpha\cdot H(X)+\beta\cdot H(Y)- I(X:Y),\\
        &\beta\cdot H(Y|X),\\
        &\alpha\cdot H(X|Y)+\beta\bigl(H(X|Y)-H(Y|X)\bigr)
      \end{aligned}
    \right\}
    +O\bigl(\log H(XY)\bigr)
  \]
\end{theorem}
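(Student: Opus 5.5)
The plan is to specialize the spectral bound (Theorem~\ref{th:shape-spectral}) to the right-heavy regime, but to work from the finer, case-split estimate of Proposition~\ref{p:graph-support-lower} rather than from the symmetric maximum.

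First, the region $\alpha+\beta\geq1$ needs no work. There Proposition~\ref{p:L-upper}\ref{i:ur-triangle} gives $\Lcal(\Pbf)(\alpha,\beta)=\alpha H(X)+\beta H(Y)-I(X:Y)$, which is the first term of the maximum. So assume $\alpha+\beta\leq1$. As in the proof of Theorem~\ref{th:shape-spectral}, Theorem~\ref{th:entropic-in-graph} together with Proposition~\ref{p:legendre}\ref{i:lip},\ref{i:union} gives
\[
  \Lcal(\Pbf)\leq\max\{\check f,\hat f\}+O(\log H(XY)).
\]
By Proposition~\ref{p:graph-support-upper}, $\hat f$ contributes the first term up to $\log 2$.

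For $\check f$ I substitute the right-heavy condition $2\log\lambda_{2}\leq H(Y|X)+C$ into Proposition~\ref{p:graph-support-lower}, in its two cases separately.
\begin{itemize}
\item If $\alpha\leq\beta$, the bound becomes
\[
  (\beta-\alpha)H(Y|X)+\alpha H(Y|X)+C=\beta H(Y|X)+C,
\]
which is the second term.
\item If $\beta\leq\alpha$, the bound becomes
\[
  \check f\leq(\alpha-\beta)H(X|Y)+\beta H(Y|X)+C=:T(\alpha,\beta)+C.
\]
\end{itemize}

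It then remains to compare $T$ with the third printed term
\[
  P(\alpha,\beta):=\alpha H(X|Y)+\beta\bigl(H(X|Y)-H(Y|X)\bigr).
\]
A direct computation gives $T-P=2\beta\bigl(H(Y|X)-H(X|Y)\bigr)$. Hence whenever $H(Y|X)\leq H(X|Y)+O(1)$ we get $T\leq P+O(1)$, since $\beta\leq1$, and the theorem follows in that case.

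The main obstacle is the complementary case $H(Y|X)>H(X|Y)+\Omega(1)$ with $\beta<\alpha$. My first attempt will use Proposition~\ref{p:lower_bound_lambda_2} together with right-heavyness to pin down the relation between $\log d_{1}$, $\log d_{2}$ and $2\log\lambda_{2}$. I will then try to show that in this case $T$ is dominated by $\max\{\alpha H(X)+\beta H(Y)-I(X:Y),\;\beta H(Y|X)\}$. I would do this by rewriting
\[
  \alpha H(X)+\beta H(Y)-I(X:Y)=\alpha H(X|Y)+\beta H(Y|X)-(1-\alpha-\beta)I(X:Y)
\]
and arguing that the admissible $(\alpha,\beta)$ where $T$ would exceed both must make $(1-\alpha-\beta)I(X:Y)$ small.

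I expect this step to be genuinely delicate. $T$ is affine along each segment from $\beta=0$ to $\beta=\alpha$ and matches $P$ at one end and $\beta H(Y|X)$ at the other, so convexity of the maximum works against us. If that comparison fails, I would return to the decomposition used in Proposition~\ref{p:graph-support-lower}. I would look for a better convex combination of points of $\extgrl(\Gsf)$, for instance by bounding $\check f$ at an additional anchor point on the segment $\beta=\alpha$ via Theorem~\ref{th:mixing-log}, so as to lower the interpolated value below $P$. If even that fails, the statement as printed would hold only with the third term in the form $T$, and that is the point I would flag.
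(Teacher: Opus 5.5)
Everything you actually derive is correct and is the paper's own argument. The paper's proof substitutes $2\log\lambda_2\leq H(Y|X)+C$ into Theorem~\ref{th:shape-spectral}, and that substitution yields your third term
\[
  T(\alpha,\beta)=(\alpha-\beta)H(X|Y)+\beta H(Y|X)=\alpha H(X|Y)+\beta\bigl(H(Y|X)-H(X|Y)\bigr),
\]
not the printed $P=\alpha H(X|Y)+\beta\bigl(H(X|Y)-H(Y|X)\bigr)$. Your separate handling of $\alpha+\beta\geq1$ via Proposition~\ref{p:L-upper} fills in a point the paper leaves implicit.

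The sign in the statement is almost certainly a slip. The step where you try to pass from $T$ to $P$ is not a missing lemma but a statement the paper itself leaves as a conjecture, so you should drop it. Right-heaviness together with Proposition~\ref{p:lower_bound_lambda_2} gives $H(X|Y)-\log2\leq2\log\lambda_2\leq H(Y|X)+C$, i.e.\ $H(X|Y)\leq H(Y|X)+O(1)$ for every right-heavy expander. This has three consequences. Your ``easy case'' is exactly the balanced case (Corollary~\ref{cor:L<Lmin}). Your ``obstacle'' is the whole genuinely right-heavy regime. And in that regime $P\leq\alpha H(X|Y)+O(1)$, so the printed bound would give $\Lcal(\Pbf)\leq\Lcal_{\min}(\Pbf)+O(\log H(XY))$ for all right-heavy expanders. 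That is precisely the conjecture stated right after the theorem, which the paper proves only for balanced expanders.

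Both of your planned repairs fail concretely. The comparison $T\leq\max\{\dots\}+O(\log H(XY))$ is false. For $\alpha>\beta>0$, $\alpha+\beta<1$,
\begin{gather*}
  T-\beta H(Y|X)=(\alpha-\beta)H(X|Y),\qquad
  T-P=2\beta\bigl(H(Y|X)-H(X|Y)\bigr),\\
  T-\bigl(\alpha H(X)+\beta H(Y)-I(X:Y)\bigr)=(1-\alpha-\beta)I(X:Y)-\beta H(X|Y).
\end{gather*}
Take the graphs of Theorem~\ref{th:expanders} with $d_2=2^m$, $d_1=2^{2m}$, $n=m$ (right-heavy with constant $\log2$) and $(\alpha,\beta)=(\tfrac12,\tfrac1{10})$. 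All three differences are linear in $m$, i.e.\ of order $H(XY)$.

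Extra anchor points cannot help either. Consider the point $\bigl(H(X|Y),\,2\log\lambda_2-H(X|Y),\,0\bigr)$. Up to $O(1)$ it satisfies every constraint the paper has on $\extgrl(\Gsf)$:
\begin{itemize}
  \item Lemma~\ref{l:extgr-shannon-bounds}, using $2\log\lambda_2\leq2\log\lambda_1=H(X|Y)+H(Y|X)$;
  \item the second branch of Theorem~\ref{th:mixing-log}, since the point has $\tfrac12x+\tfrac12y-z=\log\lambda_2$;
  \item the threshold $z\leq[\Xsf:\Ysf]-\log2$.
\end{itemize}
Its $l_{\alpha\beta}$-value is at least $T-\beta\log2$. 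Hence every upper bound on $\check f$ assembled from these inequalities, by convex combinations or otherwise, is at least $T-O(1)$.

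This point corresponds to a complete bipartite subgraph with $d_2$ left and $\lambda_2^2/d_2$ right vertices, where the mixing lemma is tight. Beating $T$ would require ruling out such subgraphs, which needs input beyond the paper. Your fallback is the right conclusion: the theorem holds, and is proved by your argument, with third term $T$.
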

Note that the two first terms in the maximum coincide with those in
the definition of $\Lcal_{\min}(P)$. Thus, for pairs supported on
expanders, $\Lcal(\Pbf)$ partially coincides with
$\Lcal_{\min}(\Pbf)$ up to an error logarithmic in $H(XY)$.  We would
like to conjecture that, in fact, $\Lcal(\Pbf)$ is equal to
$\Lcal_{\min}(\Pbf)$ on all of the square up to an error logarithmic in $H(XY)$.
The conjecture holds true, if we assume in addition that expanders are
balanced.
\begin{corollary}\label{cor:L<Lmin}
  If $\Pbf$ is uniformly supported on a balanced expander, then
  for every $\alpha,\beta\in[0,1]$
  \[
    \Lcal(\Pbf)(\alpha,\beta)
    \leq
    \Lcal_{\min}(\Pbf)(\alpha,\beta)+O(\log H(XY))
  \]
\end{corollary}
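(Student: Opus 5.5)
Outside the lower-left triangle $\set{\alpha+\beta\leq1}\cap[0,1]^{2}$ there is nothing to prove: by Proposition~\ref{p:L-upper}\ref{i:ur-triangle}, on $\set{\alpha+\beta\geq1}\cap[0,1]^{2}$ the shape function equals $\alpha H(X)+\beta H(Y)-I(X:Y)$. This is one of the terms in the maximum defining $\Lcal_{\min}(\Pbf)$, so there $\Lcal(\Pbf)=\Lcal_{\min}(\Pbf)$.

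On the lower-left triangle we apply Theorem~\ref{th:shape-spectral}. Since the family is balanced, it is both right-heavy and left-heavy with a constant $C$. Hence
\[
  2\log\lambda_{2}\leq\log d_{1}+C=H(Y|X)+C
  \quad\text{and}\quad
  2\log\lambda_{2}\leq\log d_{2}+C=H(X|Y)+C .
\]
Now split into two cases.
\begin{enumerate}[label=(\alph*)]
\item If $\alpha\leq\beta$, the second term in Theorem~\ref{th:shape-spectral} satisfies
\[
  (\beta-\alpha)H(Y|X)+2\alpha\log\lambda_{2}\leq\beta H(Y|X)+\alpha C .
\]
The third term is at most $\alpha H(X|Y)+\beta\bigl(2\log\lambda_{2}-H(X|Y)\bigr)\leq\alpha H(X|Y)+\beta C$, using the left-heavy bound.
\item If $\beta\leq\alpha$, the symmetric argument applies. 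The third term is at most $\alpha H(X|Y)+\beta C$ by the left-heavy bound, and the second term is at most $\beta H(Y|X)+\alpha C$ by the right-heavy bound.
\end{enumerate}
In both cases the three terms are dominated by
\[
  \max\set{\alpha H(X)+\beta H(Y)-I(X:Y),\ \beta H(Y|X),\ \alpha H(X|Y)}+C ,
\]
which is $\Lcal_{\min}(\Pbf)(\alpha,\beta)+C$. Note that the rewriting in the remark after Theorem~\ref{th:shape-spectral} makes this transparent: the terms $\alpha\log(\lambda_{2}^{2}/d_{1})$ and $\beta\log(\lambda_{2}^{2}/d_{2})$ are each at most $C$ for a balanced expander. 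For the region $\alpha+\beta\leq1$ we only need $\alpha,\beta\leq1$, and in fact it suffices to bound every spectral term by $C$ without any case split.

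Combining with the $O(\log H(XY))$ error from Theorem~\ref{th:shape-spectral}, the constant $C$ is absorbed, since $H(XY)\geq3$ for admissible graphs and so $\log H(XY)$ is bounded below. There is no real obstacle here. The only point needing care is to use both heaviness conditions. Alternatively, one can observe that Theorem~\ref{th:L-min} already gives the first two terms, and that its third term $\alpha H(X|Y)+\beta(H(X|Y)-H(Y|X))$ is within $2C$ of $\alpha H(X|Y)$, because $|H(X|Y)-H(Y|X)|\leq2C$ for balanced expanders (Section~\ref{s:expanders}).
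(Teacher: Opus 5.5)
Your proof is correct and is essentially the paper's argument. The paper obtains the corollary as a special case of Theorem~\ref{th:L-min}, i.e.\ by substituting the expander condition into Theorem~\ref{th:shape-spectral}, and your closing alternative is exactly that route: the third term lies within $2C$ of $\alpha H(X|Y)$, which is unaffected by the sign slip in that term of Theorem~\ref{th:L-min}, where it should read $\beta\bigl(H(Y|X)-H(X|Y)\bigr)$. Your main version plugs in the left-heavy bound directly and thereby avoids the degree-balance estimate $|H(X|Y)-H(Y|X)|\leq 2C$. You also rightly make explicit that the region $\alpha+\beta\geq1$ is handled by Proposition~\ref{p:L-upper}\ref{i:ur-triangle}, which the paper leaves implicit even though Theorem~\ref{th:shape-spectral} is stated only on the lower-left triangle.
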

\begin{proof}[Proof of the Theorem~\ref{th:L-min}]
  By Theorem~\ref{th:shape-spectral} (see also the Remark after this
  theorem) we have the bound
  \[
    \Lcal(\Xbf)(\alpha,\beta)\leq
    \max
    \left\{
      \begin{aligned}
        &\alpha\cdot H(X)+\beta\cdot H(Y)-I(X:Y),\\
        &(\beta-\alpha)H(Y|X) + 2\alpha\log\lambda_2,\\
        &(\alpha-\beta)H(X|Y) + 2\beta\log\lambda_2
      \end{aligned}
    \right\} + O\bigl(\log H(XY)\bigr)
  \]
  Substituting the right-heavy expander condition
  \[
    2\log\lambda_{2}\leq \log d_{1}+C=H(Y|X)+C
  \]
  we obtain the conclusion of the theorem.
\end{proof}
Corollary~\ref{cor:L<Lmin} is a special case of Theorem~\ref{th:L-min}.

\begin{corollary}\label{cor:extmin}
  If $\Pbf$ is supported on a balanced expander, then
  \[
    \ext(\Pbf)\subset\nbd_{\delta}\ext_{\min}(\Pbf)^{\uparrow}
  \]
  where $\delta=O\bigl(\log H(XY)\bigr)$.
\end{corollary}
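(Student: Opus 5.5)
The plan is to convert the functional inequality of Corollary~\ref{cor:L<Lmin} into a set inclusion by Legendre--Fenchel duality, working with the lower envelopes. Write $E:=\ext(\Pbf)$ and $E_{0}:=\ext_{\min}(\Pbf)$. Throughout, $\delta_{0}=O(\log H(XY))$ denotes the additive error from Corollary~\ref{cor:L<Lmin}, so that
\[
\Lambda_{E}(\alpha,\beta)\leq\Lambda_{E_{0}}(\alpha,\beta)+\delta_{0}
\qquad\text{for all }(\alpha,\beta)\in[0,1]^{2}.
\]

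The first step extends this bound to all of $\Rbb^{2}$. Outside the square $[0,1]^{2}$, Proposition~\ref{p:L-upper} and the remark after it (Figure~\ref{fig:shape-upper}) show that $\Lcal(\Pbf)$ is piecewise affine and forced by the entropy profile. These affine pieces are realized by the points $W\in\{\bullet,X,Y,XY\}$, i.e.\ by the vertices of $E_{0}$. Hence $\Lambda_{E}=\Lambda_{E_{0}}$ there; I would check each quadrant by the same Shannon argument. Consequently $\Lambda_{E}\leq\Lambda_{E_{0}}+\delta_{0}$ on the whole plane. Next observe that $\Lambda_{E_{0}}+\delta_{0}$ is the partial support function of the translate $E_{0}-(0,0,\delta_{0})$, because $l_{\alpha\beta}$ has $z$-coefficient $-1$.

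The second step turns this into a set inclusion. Since $E$ is compact, every point $\xbf=(x,y,z)\in E$ satisfies $l_{\alpha\beta}(\xbf)\leq\Lambda_{E}(\alpha,\beta)\leq\Lambda\bigl(E_{0}-(0,0,\delta_{0})\bigr)(\alpha,\beta)$ for every $(\alpha,\beta)$. The set $T:=\bigl(E_{0}-(0,0,\delta_{0})\bigr)^{\uparrow}$ is closed and convex. Its recession cone is the upward vertical ray, so its closed half-space description uses exactly the functionals $l_{\alpha\beta}$, together with vertical supporting planes along the boundary of the projection $\underline{E_{0}}$. This is the separation theorem, equivalently the involution $\tau=\Lambda^{*}$ used in Lemma~\ref{l:legendre-injective}. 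Therefore $\xbf\in T$ provided $(x,y)$ lies in $\underline{E_{0}}$.

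The main obstacle is this horizontal containment. Suppose $\xbf$ has $(x,y)\notin\underline{E_{0}}$, and let $(\alpha,\beta,0)$ be a vertical separating functional. Letting $(\alpha,\beta)$ run along the ray $t(\alpha,\beta)$ with $t\to\infty$, the inequality $l\leq\Lambda_{E_{0}}+\delta_{0}$ controls only a growth rate, so a bound of the form $t\cdot\mathrm{dist}\leq\delta_{0}$ gives nothing directly. Instead I would use the exact equality $\Lambda_{E}=\Lambda_{E_{0}}$ outside the square, obtained in the first step. Along such rays this equality forces the linear-in-$t$ behaviour of $\Lambda_{E}$ to match that of $\Lambda_{E_{0}}$. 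This gives $\max_{E}(\alpha x+\beta y)\leq\max_{E_{0}}(\alpha x+\beta y)$, so $\underline{E}\subset\underline{E_{0}}$. Alternatively, $\underline{E}\subset\underline{E_{0}}$ can be read off directly from Lemma~\ref{l:ext-shannon-bounds} together with the projection of $E_{0}$, checking the faces explicitly.

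Combining the two steps, $\xbf\in T$. Since $T$ lies in the $\delta_{0}$-neighbourhood (in $\ell^{1}$) of $E_{0}^{\uparrow}$, we obtain $E\subset\nbd_{\delta_{0}}E_{0}^{\uparrow}$ with $\delta_{0}=O(\log H(XY))$, which is the claim of Corollary~\ref{cor:extmin}.
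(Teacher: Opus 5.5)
\textbf{There is a genuine gap, and it cannot be closed: the inclusion itself is false.} Write $E=\ext(\Pbf)$ and $E_0=\ext_{\min}(\Pbf)$. Your Step~2 is sound, and you correctly identify horizontal containment as the crux. But both of your ways of settling it fail.

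First, the corner pieces of Figure~\ref{fig:shape-upper} (e.g.\ $\alpha H(X)+\beta I(X:Y)-I(X:Y)$) come from points such as $(H(X),I(X:Y),I(X:Y))\in\ext_{\max}(\Pbf)$, not from vertices of $E_0$. So $\Lambda_E=\Lambda_{E_0}$ off the square is not available. Second, Lemma~\ref{l:ext-shannon-bounds} only confines $\underline{E}$ to the hexagon with vertices $(0,0)$, $(H(X|Y),0)$, $(H(X),I(X:Y))$, $(H(X),H(Y))$, $(I(X:Y),H(Y))$, $(0,H(Y|X))$. This hexagon strictly contains $\underline{E_0}$.

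These extra points are actually realized on graphs. Suppose $d_1$ divides $|\Ysf|$. Split $\Ysf$ at random into $d_1$ equal classes and let $W$ be the class of $Y$. Then $H(Y|W)=I(X:Y)$ and $H(XY|W)=H(X)$. Counting same-class pairs of neighbours of $x$ (using $\log N\le N-1$) gives $\mathbb{E}\,I(X:W)\le 1$. Hence some $W$ has $e(X,Y|W)=(H(X)-\epsilon,\,I(X:Y),\,I(X:Y)-\epsilon)$ with $\epsilon\le1$.

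Now test this point against the functional $(H(Y|X)+I(X:Y))\,x-I(X:Y)\,y$. Its maximum on $E_0^{\uparrow}$ is $(H(Y|X)+I(X:Y))H(X|Y)$. Comparing with its value at the point shows the $\ell^1$-distance to $E_0^{\uparrow}$ is at least
\[
  \frac{H(Y|X)\,I(X:Y)}{H(Y|X)+I(X:Y)}-1 .
\]
For the balanced expanders of Theorem~\ref{th:expanders} with $k=l=2^n$, this equals $n/2-1=H(XY)/6-1$, far beyond $O(\log H(XY))$. In the same example, $\Lambda_E(2,-\tfrac12)-\Lambda_{E_0}(2,-\tfrac12)\ge n/2-1$. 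So even an approximate version of your Step~1 fails.

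\textbf{The structural reason.} $\Lcal(\Pbf)$ on $[0,1]^2$ is equivalent to $\Lcal(\Pbf)$ on the closed first quadrant, where Proposition~\ref{p:L-upper} forces the values for $\alpha+\beta\ge1$. These values determine only $E\oplus K$ with $K=\set{x\le0,\ y\le0,\ z\ge0}$, not $E^{\uparrow}$. The paper's one-line proof via Lemma~\ref{l:legendre-injective} glosses over exactly this point. That lemma needs the support function on all of $\Rbb^2$, and it is an exact-equality statement, whereas Corollary~\ref{cor:L<Lmin} gives only an approximate bound on $[0,1]^2$.

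\textbf{What your argument does prove.} Your separation argument establishes the corrected inclusion $\ext(\Pbf)\subset\nbd_{\delta}\bigl(\ext_{\min}(\Pbf)\oplus K\bigr)$. To get it, separate only with:
\begin{itemize}
\item the functionals $l_{\alpha\beta}$ with $\alpha,\beta\ge0$, which are controlled by Corollary~\ref{cor:L<Lmin};
\item vertical functionals $(u,v,0)$ with $u,v\ge0$, which are controlled by Lemma~\ref{l:ext-shannon-bounds}\ref{i:0xX},\ref{i:0yY} together with the vertex $(H(X),H(Y),I(X:Y))$ of $E_0$.
\end{itemize}
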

\begin{proof}
  By Lemma~\ref{l:legendre-injective} support function of the set
  determines the upper closure of the set. Since $\Lcal(\Pbf)$ is the
  support function of $\ext(\Pbf)$, the corollary follows from
  Corollary~\ref{cor:L<Lmin}
\end{proof}

Recall, that we call $\Pbf$ \emph{rigid} if
\[
  \Lcal(\Pbf)=\Lcal_{\min}(\Pbf)
  \quad
  \text{or, equivalently,}
  \ext(\Pbf)^{\uparrow}=\ext_{\min}(\Pbf)^{\uparrow}
\]
In effect we have shown, that pairs supported on the balanced expanders are
rigid up to a logarithmic error.

\subsection{Zhen Zhang's question on approximate
  representation of the mutual information}
\label{s:applications-zhang}

As discussed above, for a pair of random variables $(X,Y)$, the
quantity $\mmrv(X,Y)$ defined in \eqref{eq:epsilon(x,y)} measures the
extent to which the mutual information between $X$ and $Y$ can be
materialized.  At one extreme, when $\mmrv(X,Y)=0$, the mutual
information is fully materializable (see p.~\pageref{sec:mmrv}).

We now turn to the opposite extreme, in which $\mmrv(X,Y)$ is close to
its maximal value.  In this case, one can say that the
mutual information is maximally non-extractable.

The entanglement satisfies the upper bound
\[
  \mmrv(X,Y)\leq\min\set{H(X|Y),H(Y|X),I(X:Y)}, 
\]
which is obtained by restricting the minimization in
\eqref{eq:entanglement} to $W$ in the list $\set{X,Y,XY}$.

In \cite{zhang3new}, Zhen Zhang raised the following natural question:
under what conditions on the quantities $ H(X), H(Y), I(X:Y) $ can the
exact equality $\mmrv(X,Y)=\min\set{H(X|Y),H(Y|X),I(X:Y)}$ hold?%
\footnote{In \cite{zhang3new} a more restricted version of the
  question is raised, namely under what conditions the equality
  $\mmrv(X,Y)=I(X:Y)$ may hold. We address here a slightly more
  general question of which Zhang's question is a part.}  %
More specifically, which ratios (homogeneous coordinates)
\[
  \left[ H(X) : H(Y) : I(X:Y) \right]
\]
can arise from a pair of random variables $(X,Y)$ satisfying
\[
  \mmrv(X,Y)=\min\set{H(X|Y),H(Y|X),I(X:Y)}?
\]
Let us consider a relaxed version of Zhang's question.  We say that a
triple of coordinates $[x:y:z]$ (in the projective space
$\mathbb{RP}^2$) is $\delta$-\emph{realizable}, if there exists a pair
of random variables $(X,Y)$ such that
\[
\big[ H(X) : H(Y) : I(X:Y) \big]  = [x:y:z],
\]
and
\[
 \mmrv(X,Y)  \ge \min\set{H(X|Y),H(Y|X),I(X:Y)}-\delta\cdot H(XY)
\]
We show below that $\delta$-realizability essentially put no
  restriction on projectivized entropy profiles of the pairs $(X,Y)$.

The following statement follows from \cite[Theorem
5.6]{marcus2013interlacing} and the accompanying construction in its
proof.
\begin{theorem}
\label{th:expanders}
  For every integers $k,l\geq 3$ and $n\geq1$ 
  there exist a biregular bipartite
  graph $\Gsf=(\Xsf\sqcup\Ysf,\Esf)$ such that
  \begin{align*}
    &d_{1}(\Gsf)=k,
      &&d_{2}(\Gsf)=l,\\
    &|\Xsf|=l\cdot 2^{n},
      &&|\Ysf|=k\cdot 2^{n},\\
    &\frac{|X|\cdot|Y|}{|E|}=2^{n},
    &&2\log\lambda_{2}(\Gsf)\leq\log\max\set{k,l}+\log2
  \end{align*}
\end{theorem}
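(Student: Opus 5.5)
The plan is to build $\Gsf$ by iterated $2$-lifts of a complete bipartite graph, using the interlacing-family theorem of Marcus--Spielman--Srivastava \cite[Theorem 5.6]{marcus2013interlacing}. Start with $\Gsf_{0}=K_{l,k}$, whose left part has $l$ vertices of degree $k$ and whose right part has $k$ vertices of degree $l$. Its adjacency spectrum is $\pm\sqrt{kl}$ together with zeros, so $\lambda_{2}(\Gsf_{0})=0$.

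For $j=0,\dots,n-1$ I will pass from $\Gsf_{j}$ to a $2$-lift $\Gsf_{j+1}$. A $2$-lift is specified by a signing $s\colon\Esf\to\set{\pm1}$. The spectrum of the lift is the multiset union of the old spectrum and the spectrum of the signed adjacency matrix $A_{s}$. The cited theorem says that for every $(c,d)$-biregular bipartite graph some signing has all eigenvalues of $A_{s}$ at most $\sqrt{c-1}+\sqrt{d-1}$. Because a bipartite signed spectrum is symmetric, this also bounds the absolute values.

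A lift preserves the degrees $k$ and $l$ and doubles the number of vertices and edges. After $n$ steps this gives $|\Xsf|=l\cdot2^{n}$, $|\Ysf|=k\cdot2^{n}$ and $|\Esf|=kl\cdot2^{n}$. Hence $|\Xsf||\Ysf|/|\Esf|=kl\,4^{n}/(kl\,2^{n})=2^{n}$.

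Next I track the spectrum. By induction, every eigenvalue of $\Gsf_{n}$ other than the simple eigenvalue $\lambda_{1}=\sqrt{kl}$ lies in $[-\mu,\mu]$, where $\mu:=\sqrt{k-1}+\sqrt{l-1}$. To see that $\sqrt{kl}$ is not repeated, check $\mu<\sqrt{kl}$ for $k,l\geq3$: we have $\mu^{2}=k+l-2+2\sqrt{(k-1)(l-1)}$, and this is less than $kl$ because $(k-1)(l-1)\geq4$. Therefore $\lambda_{2}(\Gsf_{n})\leq\mu$.

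The main obstacle is the final numerical step, which requires
\[
  2\log\mu\leq\log\max\set{k,l}+\log2,
  \qquad\text{that is,}\qquad
  \mu^{2}\leq2\max\set{k,l}.
\]
The best general estimate is $\mu^{2}\leq2(k+l-2)\leq4\max\set{k,l}-4$. That is weaker than required by a factor of nearly $2$ when $k\approx l$; for example, $k=l=3$ gives $\mu^{2}=8>6$.

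This gap cannot be closed by a sharper lifting argument. By the Alon--Boppana bound for biregular graphs (Feng--Li), $\lambda_{2}\geq\mu-o(1)$ as $n\to\infty$. So for $k=l$ large and $n$ large, no graph has $\lambda_{2}^{2}\leq2k$. The statement exactly as worded therefore cannot be derived this way.

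What the lift construction does yield is $2\log\lambda_{2}\leq\log\max\set{k,l}+\log4$, and I would prove that bound. I expect it to suffice for the intended application, the density argument of Theorem~\ref{th:intro-zhang}, because there only an $O(1)$ additive error in $2\log\lambda_{2}$ matters. I would flag the stated constant $\log2$ as needing correction to $\log4$.
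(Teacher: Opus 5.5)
Your construction is the one the paper relies on. The paper gives no argument beyond citing \cite[Theorem 5.6]{marcus2013interlacing}, i.e.\ iterated Ramanujan $2$-lifts starting from $K_{l,k}$. Your bookkeeping of the degrees, vertex counts, edge counts and the ratio $2^{n}$ is correct, and so is your tracking of the spectrum through the lifts.

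Your diagnosis of the final step is also correct, and the defect lies in the statement, not in your proof. The cited result gives only $\lambda_{2}\leq\sqrt{k-1}+\sqrt{l-1}$. Hence $\lambda_{2}^{2}\leq 2(k+l-2)<4\max\{k,l\}$, i.e.\ $2\log\lambda_{2}<\log\max\{k,l\}+\log 4$. The paper's citation therefore has exactly the gap you found.

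Your counterexample does not need $k=l$ to be large. For any fixed $k=l\geq 3$ the graph is $k$-regular. The classical Alon--Boppana bound gives $\lambda_{2}\geq 2\sqrt{k-1}-o(1)$ as $n\to\infty$, and $4(k-1)>2k$. So for all large $n$ no graph whatsoever satisfies the stated inequality; for $k=l=3$ one would need $\lambda_{2}\leq\sqrt6<2\sqrt2$. The inequality with $\log 2$ can hold only when $k$ and $l$ are sufficiently unbalanced.

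One small slip: $-\sqrt{kl}$ is also an eigenvalue of every $\Gsf_{j}$, since these graphs are bipartite. So the eigenvalues lying in $[-\mu,\mu]$ are those other than $\pm\sqrt{kl}$, not those other than $\lambda_{1}$ alone. This does not affect $\lambda_{2}\leq\mu$.

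Your remark about the downstream use is right. With $\log 4$ one gets $2\log(\lambda_{1}/\lambda_{2})\geq\min\{H(X|Y),H(Y|X)\}-\log 4$. That constant is absorbed into the $O(\log H(XY))$ term of Theorem~\ref{th:entanglement-spectral}, so Theorem~\ref{th:zhang-region} is unaffected. Likewise the paper's remark that these graphs have a uniformly bounded gap survives, with $\log 4$ in place of $\log 2$.
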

\begin{remark}
  \begin{enumerate}[label=(\roman*)]
  \item Note that the graphs provided by the theorem above are
    admissible and form an expander family with uniformly bounded gap
    \[
      2\log\lambda_{2}-\log\max\set{d_{1},d_{2}}\leq\log2.
    \]
  \item For any $C\geq3$ the set of projectivized $\log$-sizes of
    these graphs
    \[
      \Big\{\big[(\log l+n\log2):(\log k+n\log2):(n\log2)\big]
      \st
      k,l\geq C,\;n\geq1\Big\}
    \]
    is dense in the region
    \[
      \Big\{[x:y:z]\st 0\leq z\leq\min\set{x,y}\Big\}\subset\Rbb P^{2}
    \]
    cut out by the trivial inequalities on the cardinalities
    of $\Xsf, \Ysf,\Esf$ in the graph.
  \end{enumerate}
\end{remark}
Theorem~\ref{th:entanglement-spectral} implies that there is a
universal constant $C>0$ such that for $(X,Y)$
uniformly distributed on $\Gsf$ provided by Theorem~\ref{th:expanders}
holds
\[
  \mmrv(X,Y)\geq \min\set{H(X|Y),H(Y|X),I(X:Y)}- C\cdot\log H(XY).
\]
Thus, the projective triple of coordinates
\[
  \big[H(X) :  H(Y) :  I(X:Y)\big]
\]
is $\delta$-realizable for
\[
\delta=C\cdot\frac{\log H(XY)}{H(XY)}
\]

Therefore we have proved the following theorem.
\begin{theorem}\label{th:zhang-region}
  For every $\delta>0$, the $\delta$-realizable
  triples
  \[
    \big[H(X):H(Y):I(X:Y)\big]
  \]
  are dense in the region determined
  by the Shannon inequalities
  \[
    \Big\{[x:y:z]\st 0\leq z\leq\min\set{x,y}\Big\}\subset\Rbb P^{2}
  \]
\end{theorem}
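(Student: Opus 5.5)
The plan is to take a target point $[x:y:z]$ in the interior of the region $\{0\leq z\leq\min\{x,y\}\}$ and approximate it by the projectivized $\log$-sizes of the graphs of Theorem~\ref{th:expanders}. Then we apply the preceding estimate from Theorem~\ref{th:entanglement-spectral}, which turns the logarithmic error into a relative error smaller than $\delta$. Density in the closed region then follows because the interior is dense in it.

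The steps, in order, are as follows.

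First, fix $\delta>0$, a target point, and an $\ell^1$-neighbourhood $U$ of it in $\Rbb P^{2}$ (normalized, say, by $x+y-z=1$). By part~(ii) of the Remark after Theorem~\ref{th:expanders}, the projectivized triples
\[
\bigl[(\log l+n\log2):(\log k+n\log2):(n\log2)\bigr]
\]
are dense. Moreover, this remains true if we also require the sizes to be large. Concretely, write $x:y:z\approx(a+t):(b+t):t$ with $a,b\geq0$, take $n$ large, and choose $k\approx 2^{bn/t}$ and $l\approx 2^{an/t}$. Since $k,l,n$ can all be scaled up simultaneously, there are infinitely many such graphs with projectivized profile in $U$ and with $H(XY)=[\Esf]=\log(kl\,2^{n})\to\infty$. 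For $(X,Y)$ uniform on such a $\Gsf$, the profile $[H(X):H(Y):I(X:Y)]$ equals exactly this triple, by the identities $H(X)=[\Xsf]$, $H(Y)=[\Ysf]$ and $I(X:Y)=[\Xsf:\Ysf]=n\log2$.

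Second, these graphs are admissible, and $2\log\lambda_2\leq\log\max\{d_1,d_2\}+\log2$. Since $2\log\lambda_1=\log d_1+\log d_2$, this gives
\[
2\log\frac{\lambda_1}{\lambda_2}\geq\min\{\log d_1,\log d_2\}-\log 2=\min\{H(X|Y),H(Y|X)\}-\log2 .
\]
Plugging this into Theorem~\ref{th:entanglement-spectral} yields
\[
\mmrv(X,Y)\geq\min\{H(X|Y),H(Y|X),I(X:Y)\}-C\log H(XY),
\]
with $C$ universal, as already stated before the theorem.

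Third, because $C\log H(XY)/H(XY)\to0$, the graphs chosen in the first step with $H(XY)$ large enough satisfy $C\log H(XY)\leq\delta H(XY)$. Their profiles are therefore $\delta$-realizable and lie in $U$. Since $U$ and the target point were arbitrary, density follows.

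The main obstacle is the first step. One must check that the density claimed in the Remark persists when restricted to arbitrarily large $H(XY)$. The points of the region with $z=0$, and those with $z=\min\{x,y\}$, i.e.\ $a=0$ or $b=0$, need care, since there $k$ or $l$ must stay small relative to $2^n$ while still being at least $3$. I would handle this by approximating only interior points, where $a,b,t>0$ and all three parameters grow proportionally, and then passing to the closure.
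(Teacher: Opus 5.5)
Your proposal is correct and is essentially the paper's proof: take $(X,Y)$ uniform on the graphs of Theorem~\ref{th:expanders}, use $2\log(\lambda_1/\lambda_2)\geq\min\{H(X|Y),H(Y|X)\}-\log 2$ in Theorem~\ref{th:entanglement-spectral} to get $\mmrv(X,Y)\geq\min\{H(X|Y),H(Y|X),I(X:Y)\}-C\log H(XY)$, and note that $C\log H(XY)/H(XY)\to0$. The paper leaves implicit that density persists along graphs with $H(XY)$ large, and your approximation of interior targets with $n,k,l\to\infty$ followed by passing to the closure supplies this; it also avoids the case $n=1$, where $[\Xsf:\Ysf]=\log 2$ and the graphs are not strictly admissible.
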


Observe that if the logarithmic overhead in
Theorem~\ref{th:entanglement-spectral} can be eliminated, then the
construction based on expander graphs will also resolve Zhang's
original question for $\delta=0$.

\printbibliography[heading=bibintoc]
\end{document}